\newif\ifnotes
\newif\iffull
\newif\ifforus
  \setlist[itemize]{leftmargin=*}
  \setlist[enumerate]{leftmargin=*}
\DeclareMathOperator*{\argmin}{arg\,min}
\definecolor{cerulean}{RGB}{109,155,195}
\newcommand{\sys}[0]{Helen\xspace}
\newcommand{\enc}[1]{\texttt{Enc}_{\text{PK}}(#1)}
\newcommand{\dec}[1]{\texttt{Dec}_{\text{SK}}(#1)}
\newcommand{\pk}{\text{PK}}
\newcommand{\sk}{\text{SK}}
\newcommand{\skshare}[1]{[\text{SK}]_{#1}}
\renewcommand\vec{\mathbf}
\newcommand{\matr}[1]{\mathbf{#1}}
\newcommand{\adv}[0]{\mathscr{A}}
\definecolor{ashgrey}{rgb}{0.7, 0.75, 0.71}
\newcommand{\newgadget}[1]{
\begin{mdframed}[backgroundcolor=ashgrey!30]
\begin{gadget}
#1
 \end{gadget}
 \end{mdframed}
}
\newtheorem{theorem}{Theorem}
\newtheorem{gadget}{Gadget}
\newtheorem{mydef}{Definition}
\newtheorem{lemma}{Lemma}
\newtheorem{protocol}{Protocol}
\newenvironment{customthm}[1]
  {\innercustomthm}
  {\endinnercustomthm}
\newcommand{\removelatexerror}{\let\@latex@error\@gobble}
\newcommand{\eat}[1]{}
\newcommand{\justforus}[1]{
\ifforus
\textcolor{cerulean}{================ Just for us ==============}

#1

\textcolor{cerulean}{=======================================}
\fi
}
\begin{document}

\title{   { \fontsize{19}{20} \selectfont \sys: Maliciously Secure Coopetitive Learning for Linear Models}             }

\author{Wenting Zheng, Raluca Ada Popa, Joseph E. Gonzalez, and Ion Stoica \\ UC Berkeley}

\maketitle

%%%%%%%%%%%%%%%%%%%%%%%%%%%%%%%%%%%%%%%%%%%%%%%%%%%%%%%%%%%%%%%%%%%%%%%%%%%%%%%%%%%%%%%%%%%%%%%%
%%%%%%%%%%%%%%%%%%%%%%%%%%%%%%%%%%%%%%%%%%%%%%%%%%%%%%%%%%%%%%%%%%%%%%%%%%%%%%%%%%%%%%%%%%%%%%%%
%%%%%%%%%%%%%%%%%%%%%%%%%%%%%%%%%%%%%%%%%%%%%%%%%%%%%%%%%%%%%%%%%%%%%%%%%%%%%%%%%%%%%%%%%%%%%%%%

\begin{abstract}
Many organizations wish to collaboratively train machine learning models on their combined datasets for a common benefit (e.g., better medical research, or fraud detection). 
However, they often cannot share their plaintext datasets due to privacy concerns and/or business competition.
In this paper, we design and build \sys, a system that allows multiple parties to train a linear model without revealing their data, a setting we call  \emph{coopetitive learning}. 
Compared to prior secure training systems, \sys protects against a much stronger adversary who is {\em malicious} and can compromise $m-1$ out of $m$ parties.
Our evaluation shows that \sys can achieve up to five orders of magnitude of performance improvement when compared to training using an existing state-of-the-art secure multi-party computation framework. 
\end{abstract}

\section{Introduction}\label{sec:intro}

%coopetitive learning

%Building a platform for what we call {\em coopetitive learning} could  provide a significant opportunity to society, as follows.  
%Recently, there has been much enthusiasm~\cite{Nikolaenko,HallRidge,DistributedGascon,Cock2015,cryptoeprint:2017:979,quadraticopt,secureml,ShokriShmatikov} towards the following problem.
Today, many organizations are interested in training machine learning models over their aggregate sensitive data.
The parties also agree to release the model to every participant so that everyone can benefit from the training process.
In many existing applications, collaboration is advantageous because training on more data tends to yield higher quality models~\cite{Halevy09}.
Even more exciting is the potential of enabling new applications that are not possible to compute  using a single party's data because they require training on complementary data from multiple parties (e.g., geographically diverse data).
However, the challenge is that these organizations cannot share their sensitive data in plaintext due to privacy policies and regulations~\cite{hipaa} or due to business competition~\cite{stoica2017berkeley}. 
We denote this setting using the term {\em coopetitive learning}\footnote{We note that Google uses the term {\em federated learning}~\cite{stoica2017berkeley} for a different but related setting: a semi-trusted cloud trains a model over the data of millions of user devices, which are intermittently online, and sees sensitive intermediate data.  }, where the word ``coopetition''~\cite{coopetition} is  a portmanteau of ``cooperative'' and ``competitive''.
To illustrate coopetitive learning's potential impact as well as its challenges, we summarize two concrete use cases.

\smallskip\noindent\textit{A banking use case.}
The first use case was shared with us by two large banks in North America. 
Many banks want to use machine learning to detect money laundering more effectively.
Since criminals often hide their traces by moving assets across different financial institutions, an accurate model would require training on data from different banks.
Even though such a model would  benefit all participating banks, these banks cannot share their customers' data in plaintext because of privacy regulations and business competition.

\smallskip\noindent\textit{A medical use case.}
The second use case was shared with us by a major healthcare provider who needs to distribute vaccines during the annual flu cycle.
In order to launch an effective vaccination campaign (i.e., sending vans to vaccinate people in remote areas), this organization would like to identify areas that have high probabilities of flu outbreaks using machine learning. 
More specifically, this organization wants to train a linear model over data from seven geographically diverse medical organizations.
Unfortunately, such training is impossible at this moment because the seven organizations cannot share their patient data with each other due to privacy regulations.
\begin{figure}[t!]
    \centering
    \includegraphics[width=0.4\textwidth]{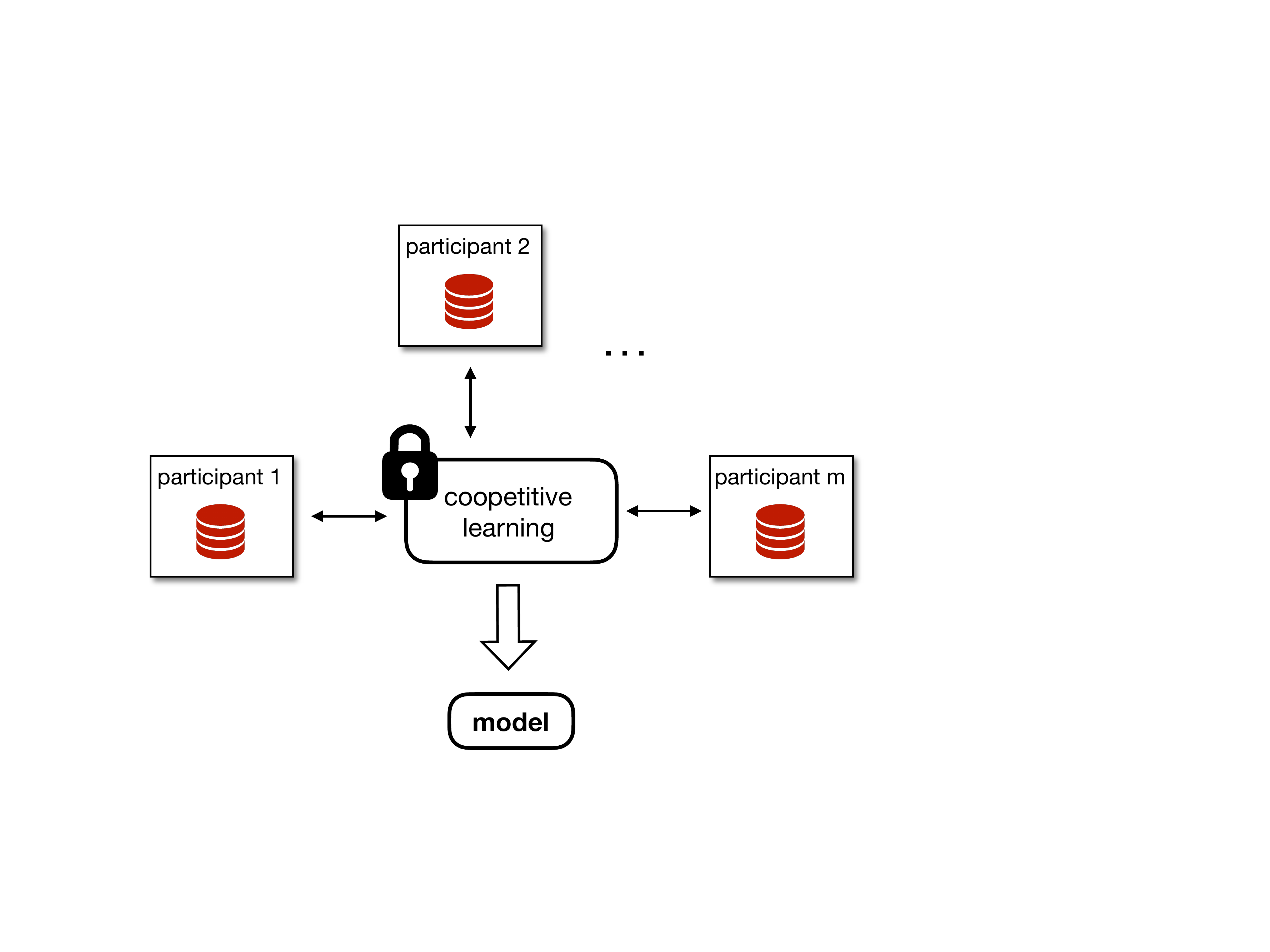}
    \caption{The setting of coopetitive learning. 
    %\ra{I wonder if we can make this figure a bit more illustrative. For example, I don't like the cloud. There is no cloud here between them. There is a cloud in federated setting but not here. instead, maybe we can have a circle of arrows and locks on it to suggest data protection of some sort.  }
    %Organizations participating in coopetitive learning over their private datasets to train a joint model. 
    \vspace{-3mm}}
    \label{fig:intro}
\end{figure}

The general setup of coopetitive learning fits within the cryptographic framework of secure multi-party computation (MPC)~\cite{ben1988completeness,Goldreich:1987,yao1982protocols}.
Unfortunately, implementing training using generic MPC frameworks is extremely inefficient, so recent training systems~\cite{Nikolaenko,HallRidge,secureml,DistributedGascon,Cock2015,cryptoeprint:2017:979,quadraticopt} opt for tailored protocols instead.
However, many of these systems rely on outsourcing to  non-colluding servers, and all assume a passive attacker who never deviates from the protocol.
In practice, these assumptions are often not realistic because they essentially require  an organization to base the confidentiality of its data on the correct behavior of other organizations.
%For example, SecureML~\cite{secureml}, a state-of-the-art secure training framework, assumes that the organizations outsource the secure computation to two \emph{semi-honest} and \emph{non-colluding} servers.
%This means that the organizations need to trust that (1) an attacker compromising one of these servers acts passively and never deviates from the protocol, and (2) the two servers do not share data with each other. 
%An attacker who manages to compromise a server is unlikely to follow the protocol.
In fact, the banks from the aforementioned use case informed us that they are not comfortable with trusting the behavior of their competitors when it comes to sensitive business data. 

Hence, we need a much stronger security guarantee:  each organization  should {\em only  trust  itself.} 
This goal calls for  maliciously secure MPC in the setting where $m-1$ out of $m$ parties can fully misbehave. 
%Table~\ref{table:related:work} categorizes work on coopetitive training, as further explained in \S\ref{sec:related}. 
%To the best of our knowledge, there is no efficient maliciously secure protocol for training linear models.

In this paper, we design and build~\sys, a platform for maliciously secure coopetitive learning.
\sys supports a significant slice of machine learning and statistics problems: regularized linear models. 
This family of models includes  ordinary least squares regression, ridge regression, and LASSO. 
Because these models are statistically robust and easily interpretable, they are widely used in cancer research~\cite{lassoForCancer}, genomics~\cite{lassoForGenomics1, lassoForGenomics2}, financial risk analysis~\cite{lassoForFinance, lassoForCreditScoring}, and are the foundation of basis pursuit techniques in signals processing.
% Such models enable a set of applications such as XXXX, which we demonstrate in \cref{sec:applications}. 
%Moreover, linear models suffice for both  use cases  we described above. 

The setup we envision for \sys is similar to the use cases above:  a few  organizations (usually less than $10$) have large amounts of data (on the order of  hundreds of thousands or millions of records) with a smaller number of features (on the order of tens or hundreds).

%We implemented and evaluated \sys.
While it is possible to build such a system by implementing a standard training algorithm like Stochastic Gradient Descent (SGD)~\cite{sgd} using a generic maliciously secure MPC protocol, the result is very inefficient. 
To evaluate the practical performance difference, we implemented SGD using SPDZ, a maliciously secure MPC library~\cite{SPDZgithub}. 
For a configuration of $4$ parties, and a real dataset of $100$K data points per party and $90$ features, such a baseline can take an estimated time of 3 months to train a linear regression model.
Using a series of techniques explained in the next section, \sys can train the same model in less than 3 hours.

\subsection{Overview of techniques}

To solve such a challenging problem, \sys combines insights from cryptography, systems, and machine learning.
This synergy enables an efficient and scalable solution under a strong threat model.
One recurring theme in our techniques is that, while the overall training process needs to scale linearly with the total number of training samples, the more expensive \emph{cryptographic} computation can be reformulated to be \emph{independent} of the number of samples.

\newcommand{\mypara}[1]{\smallskip\noindent\textbf{#1}}

%\paragraph{Reduce expensive synchronization via an alternate ML formulation}
Our first insight is to leverage a classic but under-utilized technique in distributed convex optimization called Alternating Direction Method of Multipliers (ADMM)~\cite{ADMM}.
The standard algorithm for training models today is SGD, which optimizes an objective function by iterating over the input dataset.
With SGD, the number of iterations scales at least linearly with the number of data samples. 
Therefore, na\"ively implementing SGD using a generic MPC framework would require an expensive MPC synchronization protocol for every iteration.
%
%Instead of using SGD, \sys leverages a classic technique in distributed convex optimization called Alternating Direction Method of Multipliers (ADMM)~\cite{ADMM}.
Even though ADMM is less popular for training on plaintext data, we show that it is much more efficient for cryptographic training  than SGD. 
One advantage of ADMM is that it converges in very few iterations (e.g., a few tens) because each party repeatedly solves local optimization problems.
Therefore, utilizing ADMM allows us to dramatically reduce the number of MPC synchronization operations.
Moreover, ADMM is very efficient in the context of linear models because the local optimization problems can be solved by closed form solutions.
These  solutions are also easily expressible in cryptographic computation and are especially efficient because they operate on small summaries of the input data that only scale with the dimension of the dataset.
%ADMM is less preferable to SGD for plaintext training because it requires repeatedly solving local optimizations problems.
%However, in the context of linear models, these local problems can be solved by closed form solutions. 
%We show that these solutions are easily expressible in cryptographic computation, and are efficient because they operate only on small summaries of the input data that only scale with the dimension of the dataset.

However, merely expressing ADMM in MPC does not solve an inherent scalability problem.
As mentioned before, \sys addresses a strong threat model in which an attacker can deviate from the protocol.
This malicious setting requires the protocol to ensure that the users' behavior is correct.
To do so, the parties need to commit to their input datasets and prove that they are consistently using the same datasets throughout the computation.
%Directly committing to the summaries does not work because an adversary can commit to an incorrect summary that does not conform to the expected format.
A na\"ive way of solving this problem is to have each party commit to the entire input dataset and calculate the summaries using MPC.
This is problematic because 1) the cryptographic computation will scale linearly in the number of samples, and 2) calculating the summaries would also require \sys to calculate complex matrix inversions within MPC (similar to~\cite{nikolaenko2013privacy}).
Instead, we make a second observation that each party can use singular value decomposition (SVD)~\cite{golub2012matrix} to decompose its input summaries into small matrices that scale only in the number of features.
Each party commits to these decomposed matrices and proves their properties using matrix multiplication to avoid explicit matrix inversions.

Finally, one important aspect of ADMM is that it enables decentralized computation.
Each optimization iteration consists of two phases: \emph{local optimization} and \emph{coordination}.
The local optimization phase requires each party to solve a local sub-problem.
The coordination phase requires all parties to synchronize their local results into a single set of global weights.
Expressing both phases in MPC would encode local optimization into a computation that is done by every party, thus losing the decentralization aspect of the original protocol.
Instead, we observe that the local operations are all linear matrix operations between the committed summaries and the global weights.
Each party knows the encrypted global weights, as well as its own committed summaries in plaintext.
Therefore, \sys uses partially homomorphic encryption to encrypt the global weights so that each party can solve the local problems in a decentralized manner, and enables each party to efficiently prove in zero-knowledge that it computed the local optimization problem correctly. 

\section{Background}

\subsection{Preliminaries}
In this section, we describe the notation we use for the rest of the paper.
Let $P_1, ..., P_m$ denote the $m$ parties.
Let $\mathds{Z}_N$ denote the set of integers modulo $N$, and $\mathds{Z}_p$ denote the set of integers modulo a prime $p$.
Similarly, we use $\mathds{Z}^{*}_{N}$ to denote the multiplicative group modulo $N$.

We use $z$ to denote a scalar, $\vec{z}$ to denote a vector, and $\matr{Z}$ to denote a matrix.
We use $\enc{x}$ to denote an encryption of $x$ under a public key $\text{PK}$.
Similarly, $\dec{y}$ denotes a decryption of $y$ under the secret key $\text{SK}$.
%Finally, $\com{x}$ denotes a commitment to $x$.

Each party $P_i$ has a feature matrix $\matr{X}_i \in \mathds{R}^{n \times d}$, where $n$ is the number of samples per party and $d$ is the feature dimension.  $\vec{y}_i \in \mathds{R}^{n \times 1}$
 is the labels vector.
The machine learning datasets use floating point representation, while our cryptographic primitives use groups and fields.
Therefore, we represent the dataset using fixed point integer representation.

\subsection{Cryptographic building blocks}
In this section, we provide a brief overview of the cryptographic primitives used in \sys.

\subsubsection{Threshold partially homomorphic encryption}
A partially homomorphic encryption scheme is a public key encryption scheme that allows limited computation over the ciphertexts.
For example, Paillier~\cite{Paillier} is an additive homomorphic encryption scheme: multiplying two ciphertexts together (in a certain group) generates a new ciphertext such that its decryption yields the sum of the two original plaintexts.
Anyone with the public key can encrypt and manipulate the ciphertexts based on their homomorphic property.
This  encryption scheme also acts as a perfectly binding and computationally hiding homomorphic commitment scheme~\cite{groth2009homomorphic}, another  property we  use in \sys.

A \emph{threshold} variant of such a scheme has some additional properties.
While the public key is known to everyone, the secret key is split across a set of parties such that a subset of them must participate together to decrypt a ciphertext.
If not enough members participate, the ciphertext cannot be decrypted.
The threshold structure can be altered based on the adversarial  assumption.
In \sys, we use a threshold structure where \emph{all} parties must participate in order to decrypt a ciphertext.

\subsubsection{Zero knowledge proofs}
Informally, zero knowledge proofs are proofs that prove that a certain statement is true without revealing the prover's secret for this statement. 
For example, a prover can prove that there is a solution to a Sudoku puzzle without revealing the actual solution.
Zero knowledge \emph{proofs of knowledge} additionally prove that the prover indeed knows the secret.
\sys uses modified $\Sigma$-protocols~\cite{damgaard2002sigma} to prove properties of a party's local computation.
The main building blocks we use are ciphertext proof of plaintext knowledge, plaintext-ciphertext multiplication, and ciphertext interval proof of plaintext knowledge~\cite{cramer2001multiparty,boudot2000efficient}, as we further explain in \cref{s:gadgets}.
Note that $\Sigma$-protocols are honest verifier zero knowledge, but can be transformed into full zero-knowledge using existing techniques~\cite{damgaard2000efficient,faust2012non,garay2003strengthening}.
In our paper, we present our protocol using the $\Sigma$-protocol notation.

\subsubsection{Malicious MPC} 
We utilize SPDZ~\cite{spdz}, a state-of-the-art malicious MPC protocol, for both \sys and the secure baseline we evaluate against.
Another recent malicious MPC protocol is authenticated garbled circuits~\cite{wang2017global}, which supports boolean circuits.
We decided to use SPDZ for our baseline because the majority of the computation in SGD is spent doing matrix operations, which is not efficiently represented in boolean circuits.
For the rest of this section we give an overview of the properties of SPDZ.

% Let $\vec{x}$ be the vector input to the SPDZ protocol.
% Each element $\vec{x}_i$ is secret shared for each party $j$ such that $\vec{x}_i = \vec{x}_i^{j}$.

An input $a \in \mathds{F}_{p^{k}}$ to SPDZ is represented as $\langle a \rangle = (\delta, (a_1, \dots, a_n), (\gamma(a)_1, \dots, \gamma(a)_n))$, where $a_i$ is a share of $a$ and $\gamma(a)_i$ is the MAC share authenticating $a$ under a SPDZ global key $\alpha$.
Player $i$ holds $a_i, \gamma(a)_i$, and $\delta$ is public. 
During a correct SPDZ execution, the following property must hold: $a = \sum_i a_i$ and $\alpha(a + \delta) = \sum_i \gamma(a)_i$.
The global key $\alpha$ is not revealed until the end of the protocol; otherwise the malicious parties can use $\alpha$ to construct new MACs.

SPDZ has two phases: an offline phase and an online phase.
The offline phase is independent of the function and generates precomputed values that can be used during the online phase, while the online phase executes the designated function.
% The preprocessing phase generates 
% \begin{align*}
% \langle \alpha \rangle = ((\alpha_1, \dots, \alpha_n), (\beta_i, \gamma(\alpha)_1^{i}, \dots, \gamma(\alpha)_n^{i})_{i=1,\dots,n})
% \end{align*}
% where $\alpha = \sum_i \alpha_i$ and $\sum_j \gamma(\alpha)_i^{j} = \alpha \beta_i$.
% Player $i$ holds $\alpha_i, \beta_i, \gamma(\alph)_1^{i}, \dots, \gamma(\alpha)_n^{i}$.
% Each player $j$ sends player $i$ $\alpha_j$, as well as $\gamma(\alpha)_i^{j}$.
% Therefore, player $i$ will receive $\gamma(\alpha)_i^{j}$ and check that $\sum_j \gamma(\alpha)_i^{j} = \alpha \beta_i$.
%At the end of the protocol, $\alpha$ is revealed and verified by each party.
%If this is true, then the final output is opened and verified.

% We augment the output check phase with a check of the input via our commitment scheme.
% For a single element $x \in \mathds{Z}_p$.
% Commit to $\com{x}$ and calculate $\com{\delta}$.
% Also commit to $\com{\gamma(x)_i}$.
% After release of $\alpha$, it must be true that $\alpha (\com{x} \cdot \com{\delta}) = \prod \com{\gamma(x)_i}$.

\subsection{Learning and Convex Optimization}
\label{sec:lasso}
% \begin{center}
%   \begin{picture}(400,200)(10,100)

%     \put(10, 200){\framebox(50,50)[c]{Party i}}

%     \put(320, 200){\framebox(50,50)[c]{Master}}
    
%     \put(150, 250){Step 2: $x_{i}^{k+1}, u_{i}^{k}$}
%     \put(130, 240){\vector(1,0){120}}

%     \put(150, 210){Step 4: $z^{k+1}$}
%     \put(250, 200){\vector(-1,0){120}}

%     \put(10, 150){Step 1: $x_{i}^{k+1} = f(u_{i}^{k}, z^{k})$}
%     \put(10, 120){Step 5: $u_{i}^{k+1} = h(x_{i}^{k+1}, z^{k+1})$}

%     \put(280, 150){Step 3: $z^{k+1} = g(\sum x_{i}^{k+1}, \sum u_{i}^{k})$}

%   \end{picture}
% \end{center}

Much of contemporary machine learning can be framed in the context of minimizing the \emph{cumulative error} (or loss) of a model over the training data. 
While there is considerable excitement around deep neural networks, the vast majority of real-world machine learning applications still rely on robust linear models because they are well understood and can be efficiently and reliably learned using established convex optimization procedures.
%When paired with the appropriate loss function (e.g., squared-error), linear models are well understood and can be efficiently and reliably learned using established convex optimization procedures. 
%Furthermore, with the addition of a regularization function, these models can inform the selection of complex non-linear feature functions.

In this work, we focus on linear models with squared error and various forms of regularization resulting in the following set of multi-party optimization problems:
% \begin{equation}
% \hat{w} = \arg\min_w \sum_{j=1}^p \sum_{i=1}^{n_j} \left(x^{(j)}_i \cdot w - y^{(j)}_i\right)^2 + \lambda \mathbf{R}(w),\label{eqn:leastsquares}
% \end{equation}
\begin{equation}
\hat{\vec{w}} = \arg\min_w \frac{1}{2}\sum_{i=1}^m  \| \matr{X}_i \vec{w} - \vec{y}_i\|^2_2 + \lambda \mathbf{R}(\vec{w}),\label{eqn:leastsquares}
\end{equation}
where $\matr{X}_i \in \mathbb{R}^{n \times d}$ and $\vec{y}_i \in \mathbb{R}^{n}$ are 
the training data (features and labels)
% the covariate matrix (features) and response vector (labels) 
from party $i$.
The regularization function $\mathbf{R}$ and regularization tuning parameter $\lambda$ are used to improve prediction accuracy on high-dimensional data.  
Typically, the regularization function takes one of the following forms:
\begin{align}
    \mathbf{R}_{L^1}(\vec{w}) & = \sum_{j=1}^d |\vec{w}_j|, &
    \mathbf{R}_{L^2}(\vec{w}) & = \frac{1}{2}\sum_{j=1}^d \vec{w}_j^2 \nonumber
\end{align}
corresponding to Lasso ($L^1$) and Ridge ($L^2$) regression respectively.
The estimated model $\hat{\vec{w}} \in \mathbb{R}^d$ can  then be used to render a new prediction $\hat{y}_* = \hat{\vec{w}}^T \vec{x}_*$ at a query point $\vec{x}_*$.
It is worth noting that in some applications of LASSO (e.g., genomics~\cite{lassoForGenomics1}) the dimension $d$ can be larger than $n$.  
However, in this work we focus on settings where $d$ is smaller than $n$, and the real datasets and scenarios we use in our evaluation satisfy this property.

%The optimization problem in Eq.~\ref{eqn:leastsquares} can be solved efficiently in the centralized setting using convex optimization algorithms.  
%Furthermore, the gradient of the objective in Eq.~\ref{eqn:leastsquares} can be factored
% % \begin{equation}
% %     \nabla_\vec{w} J(\vec{w}) = \sum_{j=1}^p 2\left(\matr{X}_i \vec{w} - \vec{y}_i\right)^T \matr{X}_i + \lambda \nabla_\vec{w} \mathbf{R}(\vec{w})
% % \end{equation}
%across parties enabling distributed gradient descent.
%However, directly applying distributed gradient descent would require many rounds of communication and coordination across the parties and would therefore be costly to support in the secure setting. %
%In this work we explore an alternative approach to distributed optimization derived for settings where coordination and communication costs are high.  

% TO CUT/REMOVE if out of space: we don't need to explain anything from the derivation, we can just simply state the ADMM formulation. 
\parhead{ADMM} Alternating Direction Method of Multipliers (ADMM)~\cite{ADMM} is an established technique for distributed convex optimization.  
%The ADMM algorithm leverages Lagranian duality to decompose Eq.~\ref{eqn:leastsquares} into a sequence of independent sub-problems that can be efficiently solved without coordination.
%
To use ADMM, we first reformulate Eq.~\ref{eqn:leastsquares} by introducing \emph{additional} variables and constraints:
\begin{align}
 \underset{
 \{\vec{w}_i\}_{i=1}^m, \, \vec{z}
 }{\textbf{minimize:}}
  & \quad \frac{1}{2}\sum_{i=1}^m
  \|\matr{X}_i \vec{w}_i - \vec{y}_i\|_2^2 + \lambda \mathbf{R}(\vec{z}),\nonumber \\
\textbf{such that: } &\quad \vec{w}_i = \vec{z} \text{ for all $i \in \{1,\ldots, p\}$} \label{eqn:constrainedobj}
\end{align}

%In this equivalent formulation we have factored the optimization problem across each of the parties by allowing each party to estimate it's own model $\vec{w}_i$. 
%However, we still require that each of the $\vec{w}_i$ parameters to be equivalent to the \emph{consensus model} $\vec{z} \in \mathbb{R}^d$.
This equivalent formulation splits $\vec{w}$ into $\vec{w}_i$ for each party $i$, but still requires that $\vec{w}_i$ be equal to a global model $\vec{z}$.
To solve this constrained formulation, we construct an \emph{augmented Lagrangian}:
%Eq.~\ref{eqn:constrainedobj}
\begin{align}
    & L\left(\{\vec{w}_i\}_{i=1}^m, \vec{z}, \vec{u}\right) = 
     \frac{1}{2}\sum_{i=1}^m  \|\matr{X}_i \vec{w}_i - \vec{y}_i\|_2^2 + \lambda \mathbf{R}(\vec{z}) + \nonumber \\
    & \quad\quad \rho\sum_{i=1}^m \vec{u}_i^T \left(\vec{w}_i  - \vec{z}\right) +
    \frac{\rho}{2}\sum_{i=1}^m \left|\left|\vec{w}_i  - \vec{z}\right|\right|_2^2, \label{eqn:lagrangian}
\end{align}
where the dual variables $\vec{u}_i \in \mathbb{R}^d$ capture the mismatch between the model estimated by party $i$ and the global model $\vec{z}$ and the augmenting term $\frac{\rho}{2}\sum_{i=1}^m \left|\left|\vec{w}_i  - \vec{z}\right|\right|_2^2$ adds an additional penalty (scaled by the constant $\rho$) for deviating from $\vec{z}$.
% Note that this augmenting penalty vanishes when consensus is achieved and all $\vec{w}_i = \vec{z}$.

The ADMM algorithm is a simple iterative dual ascent on the augmented Lagrangian of~\cref{eqn:constrainedobj}.
%Due to space constraints, we only give the weight update equations.
On the $k^{\text th}$ iteration, each party locally solves this closed-form expression:
% \begin{align*}
%     \vec{w}_i^{k+1} & \leftarrow \arg \min_{\vec{w}_i}  \left(\matr{X}_i \vec{w}_i - \vec{y}_i\right)^2 + \vec{w}_i^T \vec{u}_i^{k}  + \frac{\rho}{2}||\vec{w}_i - \vec{z}^k||_2^2
% \end{align*}
%
\begin{equation}
    \vec{w}_i^{k+1} \leftarrow \left(\matr{X}_i^T \matr{X}_i + \rho \matr{I}\right)^{-1} 
    \left( \matr{X}_i^T \vec{y}_i + \rho \left(\vec{z}^k  - \vec{u}_i^k\right)\right) \label{eq:primalupdate}
\end{equation}
and then shares its local model $\vec{w}_i^{k+1}$ and Lagrange multipliers $\vec{u}_i^{k}$ to solve for the new global weights:
% \begin{align*}
%     \vec{z}^{k+1} & \leftarrow \arg \min_{\vec{z}}  \lambda \mathbf{R}(\vec{z}) + \sum_{i=1}^p \vec{z}^T \vec{u}_i^{k}  + \frac{\rho}{2}\sum_{i=1}^p||\vec{w}_i^{k+1} - \vec{z}||_2^2.
% \end{align*}
\begin{align}
    \vec{z}^{k+1} & \leftarrow \arg \min_{\vec{z}}  \lambda \mathbf{R}(\vec{z}) +  \frac{\rho}{2}\sum_{i=1}^m ||\vec{w}_i^{k+1} - \vec{z} + \vec{u}_i^{k} ||_2^2. \label{eq:zupdate}
\end{align}
Finally, each party uses the new global weights $\vec{z}^{k+1}$ to update its local Lagrange multipliers
\begin{align}
    \vec{u}_i^{k+1} & \leftarrow \vec{u}_i^{k} + \vec{w}_i^{k+1} - \vec{z}^{k+1} .\label{eq:dualupdate}
\end{align}

%The consensus update equations depend on the local models $\vec{w}_i$ and Lagrange multipliers $\vec{u}_i$ through their summation. 
%
The update equations \eqref{eq:primalupdate}, \eqref{eq:zupdate}, and \eqref{eq:dualupdate} are executed iteratively until all updates reach a fixed point.  
In practice, a fixed number of iterations may be used as a stopping condition, and that is what we do in \sys.

\parhead{LASSO}
\newcommand{\shared}[1]{\textcolor{blue}{#1}}
We use LASSO as a running example for the rest of the paper in order to illustrate how our secure training protocol works.
LASSO is a popular regularized linear regression model that uses the $L^1$ norm as the regularization function.
The LASSO formulation is given by the optimization objective $\argmin_{\vec{w}} \| \matr{X} \vec{w} - \vec{y}\|_{2}^{2} + \lambda \| \vec{w} \|$.
The boxed section below shows the ADMM training procedure for LASSO.
Here, the quantities in \shared{color} are quantities that are intermediate values in the computation and need to be protected from every party, whereas the quantities in black are private values known to one party. 

% For LASSO  ($L^1$-regularization) update is given by:
% \begin{align}
%     \vec{z}^{k+1} \leftarrow 
%     S_{\lambda/p\rho}
%     \left(\frac{1}{p}\sum_{i=1}^p \left(w_i^{k+1} + u_i^{k} \right) \right) \label{eq:l1consensus}
% \end{align}
% where $S_\kappa(a)$ 
% % $= (1 - \kappa/|a|)_+a$

\begin{framed}
\label{fig:lasso}
\underline{The coopetitive learning task for LASSO} 

\medskip

\noindent Input of party $P_i$: $\matr{X}_i, \vec{y}_i$

\begin{enumerate}
\item $\matr{A}_i \gets  \left(\matr{X}_i^T \matr{X}_i + \rho \matr{I}\right)^{-1} $

\item $\vec{b}_i \gets \matr{X}_i^T \vec{y}_i $

\item $\vec{u^0}, \vec{z^0}, \vec{w^0} \gets \vec{0}$

\item {\tt For} $k = 0$, {\tt ADMMIterations}-1:

\begin{enumerate}
\item  $\shared{\vec{w}_i^{k+1}} \leftarrow  \matr{A}_i (\vec{b}_i + \rho \left(\shared{\vec{z}^k}  - \shared{\vec{u}_i^k}\right))$
  
\item   $\shared{\vec{z}^{k+1}} \leftarrow 
    S_{\lambda/m\rho}
    \left(\frac{1}{m}\sum_{i=1}^m \left(\shared{\vec{w}_i^{k+1}} + \shared{\vec{u}_i^{k}} \right) \right)$

\item    $\shared{\vec{u}_i^{k+1}}  \leftarrow \shared{\vec{u}_i^{k}} + \shared{\vec{w}_i^{k+1}} - \shared{\vec{z}^{k+1}}$ 
\end{enumerate}
\end{enumerate}

\end{framed}

$S_{\lambda/m\rho}$ is the soft the soft thresholding operator, where
\begin{equation}
S_{\kappa}(a)=
\begin{cases}
a - \kappa & a > \kappa\\
0 & |a| \leq \kappa \\
a + \kappa & a < -\kappa
\end{cases} \label{eq:softthreshold}
\end{equation}

The parameters $\lambda$ and $\rho$ are public and fixed.

\eat{

\justforus{
The first line specifies an optimization problem is executed at each party locally with no interaction.
The second line requires coordination among all the parties, and is executed via an MPC protocol.
The third line is the result that is passed to every party at the end of MPC and computed locally.
}

\justforus{
The original ADMM formulation written by Joey:

\begin{align}
    \vec{w}_i^{k+1} & \leftarrow \arg \min_{\vec{w}}  
    \left(\matr{X}_i \vec{w} - \vec{y}_i\right)^2 
    +
    \rho||\vec{w} - \vec{z}^k + \vec{u}_i^{k} ||_2^2 
\end{align}

}
}

% In this section, we present the ADMM formulation of $L^1$-regularized linear regression (LASSO) we use in our experimental evaluation.
% Each party $i$ has a dataset consisting of features $\matr{X}_i$ and labels $\vec{y}_i$
% \begin{align*}
% \min_{\vec{w}_i} \| X_i \vec{w}_i - \vec{y}_i \|_{2}^{2} + \lambda |\vec{w}|
% \end{align*}

% This can be expressed as a constrained problem
% \begin{align*}
% \min_{\vec{w}_i} \sum_{i=1}^{n} \| X_i \vec{w}_i - \vec{y}_i \|_{2}^{2} + \lambda \| \vec{z} \|_1\\
% \text{subject to } \vec{w}_i = \vec{z} = 0 \text{ for } i = 1 \dots n
% \end{align*}

% To solve this, we can take the augmented Lagrangian
% \begin{align*}
% L_{p}(\vec{w}_i, \vec{z}, \vec{v}_i) &= \| X_i \vec{w}_i - \vec{y}_i \|_{2}^{2} + \lambda \| \vec{z} \|_1 + \\
% &\dfrac{\rho}{2} \| \vec{w}_i - \vec{z} + \vec{u}_i \|_{2}^{2} - \dfrac{\rho}{2}\| \vec{u}_i\|_{2}^{2}
% \end{align*}

% This reduces to the following iterative solution:

% \begin{align*}
% \vec{w}_{i}^{k+1} &= \text{argmin}_{\vec{w}_{i}} \dfrac{1}{2} \| X_{i} \vec{w}_{i} - \vec{y}_{i}\|_{2}^{2} + \dfrac{\rho}{2} \|\vec{w}_{i} - \vec{z}^{k} + \vec{u}_{i}^{k}\|_{2}^{2}\}\\
% \vec{z}^{k+1} &= S_{\lambda / \rho N}(\bar{\vec{w}}^{k+1} + \bar{\vec{u}}^{k}) \\
% \vec{u}_{i}^{k+1} &= \vec{u}_{i}^{k} + \vec{w}_{i}^{k+1} - \vec{z}^{k+1} 
% \end{align*}

%%%%%%%%%%%%%%%%%%%%%%%%%%%%%%%%%%%%%%%%%%%%%%%%%%%%%%%%%%%%%%%%%%%%%%%%%%%%%%%%%%%%%%%%%%%%%%%%
%%%%%%%%%%%%%%%%%%%%%%%%%%%%%%%%%%%%%%%%%%%%%%%%%%%%%%%%%%%%%%%%%%%%%%%%%%%%%%%%%%%%%%%%%%%%%%%%
%%%%%%%%%%%%%%%%%%%%%%%%%%%%%%%%%%%%%%%%%%%%%%%%%%%%%%%%%%%%%%%%%%%%%%%%%%%%%%%%%%%%%%%%%%%%%%%%

\section{System overview}

\begin{figure*}[t!]
    \centering
    \includegraphics[width=0.8\textwidth]{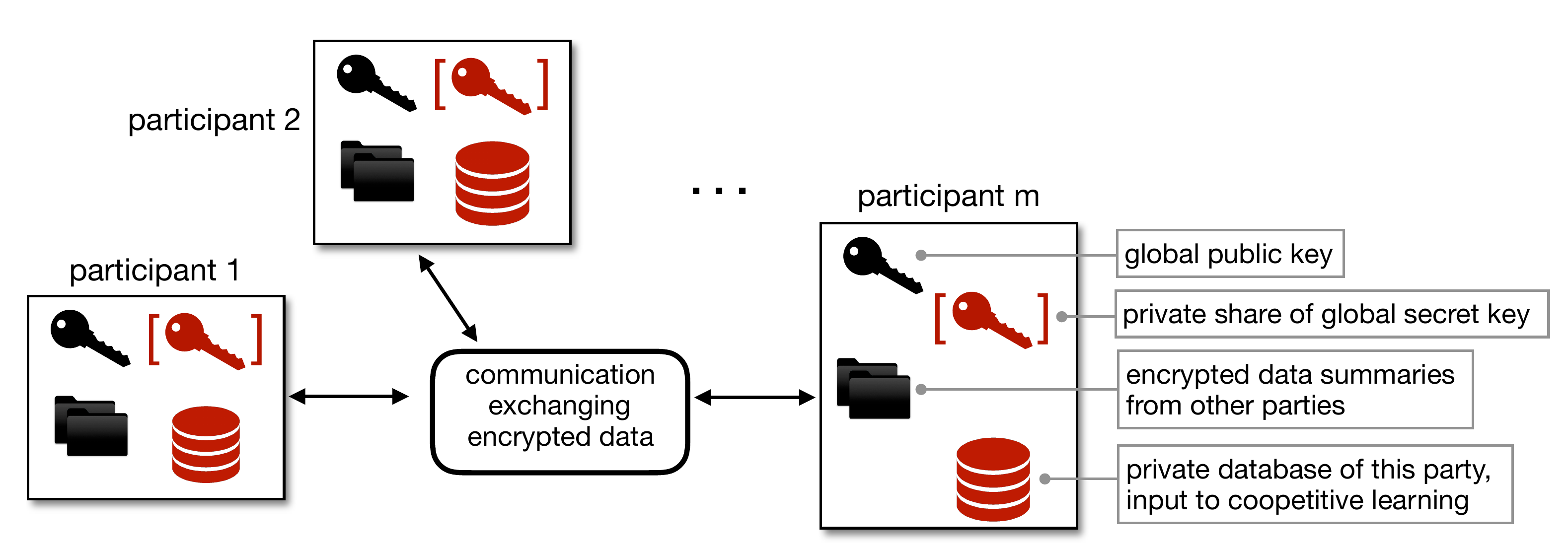}
    \caption{Architecture overview of \sys.  Every red shape indicates secret information known only to the indicated party, and black indicates public information visible to everyone (which could be private information in encrypted form). For participant $m$, we annotate the meaning of each quantity. }
    \label{fig:overview}
\end{figure*}

Figure~\ref{fig:overview} shows the system setup in \sys. 
A group of $m$ participants (also called parties) wants to jointly train a model on their data without sharing the plaintext data.
As mentioned in \cref{sec:intro}, the use cases we envision for our system consist of a few large organizations (around $10$ organizations), where each organization has a lot of data ($n$ is on the order of hundreds of thousands or millions). The number of features/columns in the dataset $d$ is on the order of tens or hundreds. Hence $d \ll n$. 

We assume that the parties have agreed to publicly release the final model.
As part of \sys, they will engage in an interactive protocol during which they share encrypted data, and only at the end will they obtain the model in decrypted form. 
\sys supports regularized linear models including least squares linear regression, ridge regression, LASSO, and elastic net.
In the rest of the paper, we focus on explaining \sys via LASSO, but we also provide update equations for ridge regression in~\cref{sec:applications}.

\subsection{Threat model}\label{s:threat}

We assume that all parties have agreed upon a single functionality to compute and have also consented to releasing the final result of the function to every party.

We consider a strong threat model in which all but one party can be compromised by a malicious attacker.
This means that the compromised parties can deviate arbitrarily from the protocol, such as supplying inconsistent inputs, substituting their input with another party's input, or executing different computation than expected. 
In the flu prediction example, six divisions could collude together to learn information about one of the medical divisions. 
However, as long as the victim medical division follows our protocol correctly, the other divisions will not be able to learn anything about the victim division other than the final result of the function.
We now state the security theorem.

\begin{customthm}{6}
\sys securely evaluates an ideal functionality 
$f_{\text{ADMM}}$ in the $(f_{\text{crs}}, f_{\text{SPDZ}})$-hybrid model under standard cryptographic assumptions, against a malicious adversary who can statically corrupt up to $m-1$ out of $m$ parties.
\end{customthm}

We formalize the security of \sys in the standalone MPC model. 
$f_{\text{crs}}$ and $f_{\text{SPDZ}}$ are ideal functionalities that we use in our proofs, where $f_{\text{crs}}$ is the ideal functionality representing the creation of a common reference string, and $f_{\text{SPDZ}}$ is the ideal functionality that makes a call to SPDZ. 
\iffull
We present the formal definitions as well as proofs in~\cref{sec:proofs}.
\else
Due to space constraints, we present the formal definitions as well as the security proofs in the full version of this paper.
\fi

\smallskip\noindent\textbf{Out of scope attacks/complementary directions.}
\sys does not prevent a malicious party from choosing a bad dataset for the coopetitive computation (e.g., in an attempt to alter the computation result). 
In particular, \sys does not prevent poisoning attacks~\cite{jagielski2018manipulating,chen2017targeted}.
MPC protocols generally do not protect against bad inputs because there is no way to ensure that a party provides true data.
Nevertheless, \sys will ensure that once a party supplies its input into the computation, the party is bound to using the same input consistently throughout the entire computation; in particular, this prevents a party from providing different inputs at different stages of the computation, or mix-and-matching inputs from other parties.
Further,  some additional constraints can also be placed in pre-processing, training, and post-processing to mitigate such attacks, as we 
 elaborate  in~\cref{sec:ml:attacks}.

\sys also does not protect against attacks launched on the public model, for example, attacks that attempt to recover the training data from the model itself~\cite{shmatikovmachine,carlini2018secret}.
The parties are responsible for deciding if they are willing to share with each other the model. 
Our goal is only to conduct this computation securely: to ensure that the parties do not share their raw plaintext datasets with each other, that they do not learn more information than the resulting model, and that only the specified computation is executed.
Investigating techniques for ensuring that the model does not leak too much about the data is a complementary direction to \sys, and we expect that many of these techniques could be plugged into a system like \sys. 
 For example, 
 \sys can be easily combined with some differential privacy tools that add noise before model release to ensure that the model does not leak too much about an individual record in the training data.
%For example, \sys's training process can be augmented by adding differential privacy to the final model.
%More generally, designing alternate privacy models or studying leakage from the model is a complementary and interesting direction of work, which we expect to lead to solutions that can be plugged into a system like \sys. 
We further discuss  possible approaches  in~\cref{sec:diffpriv}.

% what leaks from the output, we discuss how to do differential privacy
% to prevent the initial data, we can do proofs of interval for that -- note that we do prevent mix and match

Finally, \sys does not protect against denial of service -- all parties must participate in order to produce a model.

\subsection{Protocol phases} \label{sec:phases}

We now explain the protocol phases at a high level. 
The first phase requires all parties to agree to perform the coopetitive computation, which happens before initializing \sys.
The other phases are run using \sys. 

\parhead{Agreement phase.}
In this phase, the $m$ parties come together and agree that they are willing to run a certain learning algorithm (in \sys's case, ADMM for linear models) over their joint data. 
The parties should also agree to release the computed model among themselves.
%\sys will ensure that the participants only get the model at the end if they follow the protocol exactly. 
%\sys will also ensure that the participants do not learn anything about other parties' data (beyond what is learned from the model itself).

The following discrete phases are run by \sys.
We summarize their purposes here and provide the technical design for each in the following sections. 

\parhead{Initialization phase.}
During initialization, the $m$ parties compute the threshold encryption parameters~\cite{fouque2000sharing} using a generic maliciously secure MPC protocol like SPDZ~\cite{spdz}.
The public output of this protocol is a public key $\pk$ that is known to everyone.
Each party also receives a piece (called a \emph{share}) of the corresponding secret key $\sk$: party $P_i$ receives the $i$-th share of the key denoted as $\skshare{i}$. 
A value encrypted under $\pk$ can only be decrypted via all shares of the $\sk$, so every party needs to agree to decrypt this value. 
Fig.~\ref{fig:overview} shows these keys.
This phase only needs to run once for the entire training process, and does not need to be re-run as long as the parties' configuration does not change.

\parhead{Input preparation phase.} 
In this phase, each party prepares its data for the coopetitive computation. 
Each party $P_i$ precomputes summaries of its data and commits to them by broadcasting encrypted summaries to all other parties. 
%The commitment serves to ensure that each party uses the same input throughout the computation. 
The parties also need to prove that they know the values inside these encryptions using zero-knowledge proofs of knowledge. 
From this moment on, party $P_i$ will not be able to use different inputs for the rest of the computation.
%\sys will prevent a party from using any input other than the committed input in the computation.

By default, each party stores the encrypted summaries from other parties.
This is a viable solution since these summaries are much smaller than the data itself.
%the number of rows in a dataset $n$ is smaller than the number of columns $d$ in our setting.
It is possible to also store all $m$ summaries in a public cloud by having each party produce an integrity MAC of the summary from each other party and checking the MAC upon retrieval which protects against a compromised cloud.

\parhead{Model compute phase.} This phase follows the iterative ADMM algorithm, in which parties successively compute locally on encrypted data, followed by a coordination step with other parties using a generic MPC protocol.

Throughout this protocol, each party receives only encrypted intermediate data. 
No party learns the intermediate data because, by definition, an MPC protocol should not reveal any data beyond the final result.
Moreover, each party proves in zero knowledge to the other parties that it performed the local computation correctly using data that is consistent with the private data that was committed in the input preparation phase. 
%These proofs are also zero-knowledge because the local computation uses a party's private data as input.
If any one party misbehaves, the other parties will be able to detect the cheating with overwhelming probability.

\parhead{Model release phase.} At the end of the model compute phase, all parties obtain an encrypted model.
%At this point, the parties can finish with post processing, such as adding noise to the result to achieve a differentially private model.
%After post processing,
All parties jointly decrypt the weights and release the final model.
However, it is possible for a set of parties to not receive the final model at the end of training if other parties misbehave (it has been proven that it is impossible to achieve fairness for generic MPC in the malicious majority setting~\cite{mpcfairness}).
Nevertheless, this kind of malicious behavior is easily detectable in \sys and can be enforced using legal methods.

%%%%%%%%%%%%%%%%%%%%%%%%%%%%%%%%%%%%%%%%%%%%%%%%%%%%%%%%%%%%%%%%%%%%%%%%%%%%%%%%%%%%%%%%%%%%%%%%
%%%%%%%%%%%%%%%%%%%%%%%%%%%%%%%%%%%%%%%%%%%%%%%%%%%%%%%%%%%%%%%%%%%%%%%%%%%%%%%%%%%%%%%%%%%%%%%%
%%%%%%%%%%%%%%%%%%%%%%%%%%%%%%%%%%%%%%%%%%%%%%%%%%%%%%%%%%%%%%%%%%%%%%%%%%%%%%%%%%%%%%%%%%%%%%%%

\section{Cryptographic Gadgets}\label{s:gadgets}

\sys's design combines several different cryptographic primitives.
In order to explain the design clearly, we split \sys into modular gadgets.
In this section and the following sections, we discuss (1) how \sys implements these gadgets, and (2) how \sys composes them in the overall protocol.
%A complex secure-multi party computation protocol involving zero-knowledge proofs, custom MPC and homomorphic encryption is not easy to explain.
% What we found helps with clarity is to split \sys into gadgets of functionality, and then discuss (1) how \sys implements these gadgets, and (2) how to {\em compose them} in the overall protocol. 
% In our security proof,  each gadget has an associated formal lemma statement and a formal proof. 
% Some gadgets have an associated extractor or simulator, but such formalism is more difficult to understand than a functionality-oriented gadget. 

For simplicity, we present our zero knowledge proofs as $\Sigma$-protocols, which require the verifier to generate random challenges.
These protocols can be transformed into full zero knowledge with non-malleable guarantees with existing techniques~\cite{garay2003strengthening,faust2012non}.
\iffull
We explain one such transformation in~\cref{sec:proofs}.
\else
We present these transformations in the full version of the paper.
\fi

% \subsubsection{Matrix proof of plaintext knowledge}
% \label{prot:paillier:matrix:pok}

\subsection{Plaintext-ciphertext matrix multiplication proof}
\label{prot:paillier:matmul:proof}

%%%%%%%%%%%%%%%% GADGET %%%%%%%%%%%%%%%%%%%%%%%

\newgadget{\label{gadget:knowsone}
A zero-knowledge proof for the statement: 
 ``Given public parameters:
    public key $PK$, encryptions $E_\vec{X}$, $E_\vec{Y}$ and $E_\vec{Z}$;
    private parameters: $\vec{X}$,
 \begin{itemize}
     \item $\dec{E_\vec{Z}} = \dec{E_\vec{X}} \cdot \dec{E_\vec{Y}}$, and
     \item I know $\vec{X}$ such that $\dec{E_\vec{X}} = \vec{X}$.''
 \end{itemize}
}

%%%%%%%%%%%%%%%%%%%%%%%%%%%%%%%%%%%%%%%%%%%%%%

\parhead{Gadget usage} 
We first explain how \cref{gadget:knowsone} is used in \sys.
A party $P_i$ in \sys knows a plaintext $\matr{X}$ and commits to $\matr{X}$ by publishing its encryption, denoted by $\enc{\matr{X}}$.
$P_i$ also receives an encrypted matrix $\enc{\matr{Y}}$ and needs to compute $\enc{\matr{Z}} = \enc{\matr{XY}}$ by leveraging the homomorphic properties of the encryption scheme. 
Since parties in \sys may be malicious, other parties cannot trust $P_i$ to compute and output $\enc{\matr{Z}}$ correctly. 
\cref{gadget:knowsone} will help $P_i$ prove in zero-knowledge that it executed the computation correctly.
The proof needs to be zero-knowledge so that nothing is leaked about the value of $\matr{X}$. 
It also needs to be a proof of knowledge so that $P_i$ proves that it knows the plaintext matrix $\matr{X}$.

\parhead{Protocol}
Using the Paillier ciphertext multiplication proofs~\cite{cramer2001multiparty}, we can construct a na\"ive algorithm for proving matrix multiplication.
For input matrices that are $\mathds{R}^{l \times l}$, the na\"ive algorithm will incur a cost of $l^{3}$ since one has to prove each individual product.
One way to reduce this cost is to have the prover prove that  $\vec{t} \matr{Z} = (\vec{t} \matr{X}) \matr{Y}$ for a randomly chosen $\vec{t}$ such that $\vec{t}_i = t^{i} \mod q $ (where $\vec{t}$ is a challenge from the verifier).
For such a randomly chosen $t$, the chance that the prover can construct a $\vec{t}\matr{Z}^{'} = \vec{t}\matr{X}\matr{Y}$ is exponentially small 
\iffull
(see~\cref{proof:paillier:matmul:proof} for an analysis).
\else
(an analysis is presented in our full paper).
\fi

As the first step, both the prover and the verifier apply the reduction to get the new statement $\enc{\vec{t} \matr{Z}} = \enc{\vec{t} \matr{X}} \enc{\vec{Y}}$.
To prove this reduced form, we apply the Paillier ciphertext multiplication proof in a straightforward way.
This proof takes as input three ciphertexts: $E_{a}, E_{b}, E_{c}$.
The prover proves that it knows the plaintext $a^{*}$ such that $a^{*} = \dec{E_a}$, and that $\dec{E_c} = \dec{E_a} \cdot \dec{E_b}$.
We apply this proof to every multiplication for each dot product in $(\vec{t} \matr{X}) \cdot \matr{Y}$. %which is the row vector $\vec{t} \matr{X}$ repeatedly multiplied by every column of $\matr{Y}$.
The prover then releases the individual encrypted products along with the corresponding ciphertext multiplication proofs. 
The verifier needs to verify that $\enc{\vec{t} \matr{Z}} = \enc{\vec{t} \matr{X} \matr{Y}}$.
Since the encrypted ciphers from the previous step are encrypted using Paillier, the verifier can homomorphically add them appropriately to get the encrypted vector $\enc{\vec{t} \matr{X} \matr{Y}}$.
From a dot product perspective, this step will sum up the individual products computed in the previous step.
Finally, the prover needs to prove that each element of $\vec{t} \matr{Z}$ is equal to each element of $\vec{t} \matr{X} \matr{Y}$.
We can prove this using the same ciphertext multiplication proof by setting $a^{*} = 1$.

\subsection{Plaintext-plaintext matrix multiplication proof}

\newgadget{\label{gadget:knowall}
A zero-knowledge proof for the statement: 
 ``Given public parameters: public key $PK$, encryptions $E_\vec{X}$, $E_\vec{Y}$, $E_\matr{Z}$; private parameters: $\vec{X}$ and $\vec{Y}$,
 \begin{itemize}
     \item $\dec{E_\vec{Z}} = \dec{E_\vec{X}} \cdot \dec{E_\vec{Y}}$, and
     \item I know $\vec{X}$, $\vec{Y}$, and $\vec{Z}$ such that $\dec{E_\vec{X}} = \vec{X}$, $\dec{E_\vec{Y}} = \vec{Y}$, and $\dec{E_\vec{Z}} = \vec{Z}$.''
 \end{itemize}
}

\parhead{Gadget usage} 

This proof is used to prove matrix multiplication when the prover knows \emph{both} input matrices (and thus the output matrix as well).
The protocol is similar to the plaintext-ciphertext proofs, except that we have to do an additional proof of knowledge of $\matr{Y}$.

% The prover commits to its matrices element-wise using Paillier, producing $\enc{\matr{X}}$, $\enc{\matr{Y}}$ and $\enc{\matr{Z}}$, which are public. The prover also publishes zero-knowledge proofs of knowledge that the prover knows the values $\matr{X}$, $\matr{Y}$, and $\matr{Z}$ inside these encryptions.

\parhead{Protocol}
The prover wishes to prove to a verifier that $\matr{Z} = \matr{X} \matr{Y}$ without revealing $\matr{\matr{X}}, \matr{\matr{Y}}$, or $\matr{\matr{Z}}$. 
We follow the same protocol as~\cref{gadget:knowsone}.
Additionally, we utilize a variant of the ciphertext multiplication proof that only contains the proof of knowledge component to show that the prover also knows $\matr{Y}$.
The proof of knowledge for the matrix is simply a list of element-wise proofs for $\matr{Y}$.
We do not explicitly prove the knowledge of $\matr{Z}$ because the matrix multiplication proof and the proof of knowledge for $\matr{Y}$ imply that the prover knows $\matr{Z}$ as well.

%%%%%%%%%%%%%%%%%%%%%%%%%%%%%%%%%%%%%%%%%%%%%%%%%%%%%%%%%%%%%%%%%%%%%%%%%%%%%%%%%%%%%%%%%%%%%%%%
%%%%%%%%%%%%%%%%%%%%%%%%%%%%%%%%%%%%%%%%%%%%%%%%%%%%%%%%%%%%%%%%%%%%%%%%%%%%%%%%%%%%%%%%%%%%%%%%
%%%%%%%%%%%%%%%%%%%%%%%%%%%%%%%%%%%%%%%%%%%%%%%%%%%%%%%%%%%%%%%%%%%%%%%%%%%%%%%%%%%%%%%%%%%%%%%%

\section{Input preparation phase}
\label{sec:offline:phase}

\subsection{Overview}
In this phase, each party prepares data for coopetitive training. 
In the beginning of the ADMM procedure, every party precomputes some summaries of its data and commits to them by broadcasting encrypted summaries to all the other parties. 
These summaries are then reused throughout the model compute phase.
Some form of commitment is necessary in the malicious setting because an adversary can deviate from the protocol by altering its inputs.
%The parties also need to prove that they know the values inside these commitments using a \emph{zero-knowledge proof of knowledge}.
Therefore, we need a new gadget that allows us to efficiently commit to these summaries.

\newcommand{\commit}{\mathsf{Comm}}
\newcommand{\commitx}{\mathsf{Comm}^X}
\newcommand{\commity}{\mathsf{Comm}^y}
% \newgadget{\label{gadget:offline}
% A zero-knowledge proof of knowledge for the statement:
% ``I know $X$ and $\vec{y}$ such that $\dec{E_A} = (X^T X + \rho I)^{-1}$ and $\dec{E_B} = X\vec{y}$'', for a public $\rho$.
% }

More specifically, the ADMM computation reuses two matrices during training: $\vec{A}_i = (\vec{X}_i^{T} \vec{X}_i + \rho \vec{I})^{-1}$ and $\vec{b}_i =  \vec{X}_i^{T} \vec{y}_i$ from party $i$ (see~\cref{sec:lasso} for more details).
These two matrices are of sizes $d \times d$ and $d \times 1$, respectively.
In a semihonest setting, we would trust parties to compute $\matr{A}_i$ and $\vec{b}_i$ correctly.
In a malicious setting, however, the parties can deviate from the protocol and choose $\matr{A}_i$ and $\vec{b}_i$ that are inconsistent with each other (e.g., they do not conform to the above formulations).

\sys does not have any control over what data each party contributes because the parties must be free to choose their own $\vec{X}_i$ and $\vec{y}_i$.
However, \sys ensures  that each party consistently uses the same $\matr{X}_i$ and $\vec{y}_i$ during the entire protocol. 
Otherwise, malicious parties could try to use different/inconsistent $\vec{X}_i$ and $\vec{y}_i$ at different stages of the protocol, and thus manipulate the final outcome of the computation to contain the data of another party. 

One possibility to address this problem is for each party $i$ to commit to its $\vec{X}_i$ in $\enc{\matr{X}_i}$ and $\vec{y}_i$ in $\enc{\vec{y}_i}$. 
To calculate $\matr{A}_i$, the party can calculate and prove $\matr{X}_i^{T} \matr{X}$ using~\cref{gadget:knowall}, followed by computing a matrix inversion computation within SPDZ.
%Now, every time this party needs to use these values in the protocol, the party needs to prove in zero-knowledge that it used the values from the commitments. 
%Using our proof gadgets, that means every party needs to share data that scales linearly in $nd$.
The result $\matr{A}_i$ can be repeatedly used in the iterations.
This is clearly inefficient because (1) the protocol scales linearly in $n$, which could be very large, and (2) the matrix inversion computation requires heavy compute.

Our idea is to prove using an alternate formulation via {\em singular value decomposition (SVD)}~\cite{golub2012matrix}, which can be much more succinct: $\matr{A}_i$ and $\vec{b}_i$ can be decomposed using SVD to matrices that scale linearly in $d$.
Proving the properties of $\matr{A}_i$ and $\vec{b}_i$ using the decomposed matrices is equivalent to proving using $\matr{X}_i$ and $\vec{y}_i$.

\subsection{Protocol}
\subsubsection{Decomposition of reused matrices}
We first derive an alternate formulation for $\matr{X}_i$ (denoted as $\matr{X}$ for the rest of this section).
From fundamental linear algebra concepts we know that every matrix has a corresponding singular value decomposition~\cite{golub2012matrix}.
More specifically, there exists unitary matrices $\matr{U}$ and $\matr{V}$, and a diagonal matrix $\matr{\Gamma}$ such that
$\matr{X} = \matr{U} \matr{\Gamma} \matr{V}^T$, where $\matr{U} \in \mathds{R}^{n \times n}$, $\matr{\Gamma} \in \mathds{R}^{n \times d}$, and $\matr{V} \in \mathds{R}^{d \times d}$. 
Since $\matr{X}$ and thus $\matr{U}$ are real matrices, the decomposition also guarantees that $\matr{U}$ and $\matr{V}$ are orthogonal, meaning that $\matr{U}^T \matr{U} = \matr{I}$ and $\matr{V}^T \matr{V} = \matr{I}$.
%
%$\matr{\Gamma}$ takes the form:
%\begin{eqnarray*}
%\matr{\Gamma} = 
%\begin{bmatrix}
%\sigma_1 & 0 & 0 & \dots & 0\\
%0 &\sigma_2 & 0 & \dots & 0 \\
%\dots &&&&\\
%0 & 0 & 0 & \dots & \sigma_d \\
%0 & 0 & 0 & \dots & 0 \\
%\dots &&&&\\
%0 & 0 & 0 & \dots & 0 \\
%\end{bmatrix},
%\end{eqnarray*}
%if $\matr{X}$ is not a square matrix. 
If $\matr{X}$ is not a square matrix, then the top part of $\matr{\Gamma}$ is a diagonal matrix, which we will call $\matr{\Sigma} \in \mathds{R}^{d \times d}$.
$\matr{\Sigma}$'s diagonal is  a list of singular values $\sigma_i$.
The rest of the $\matr{\Gamma}$ matrix are $0$'s.
If $\matr{X}$ is a square matrix, then $\matr{\Gamma}$ is simply $\matr{\Sigma}$.
Finally, the matrices $\matr{U}$ and $\matr{V}$ are orthogonal matrices.
Given an orthogonal matrix $\matr{Q}$, we have that $\matr{Q} \matr{Q}^T = \matr{Q}^T \matr{Q} = \matr{I}$.

It turns out that $\matr{X}^{T} \matr{X}$ has some interesting properties:
\begin{eqnarray*}
\matr{X}^{T} \matr{X} &=& (\matr{U} \matr{\Gamma} \matr{V}^T)^T \matr{U} \matr{\Gamma} \matr{V}^T \\ 
&=& \matr{V} \matr{\Gamma}^{T} \matr{U}^{T} \matr{U} \matr{\Gamma} \matr{V}^{T} \\ 
&=& \matr{V} \matr{\Gamma}^T  \matr{\Gamma} \matr{V}^T  \\
&=& \matr{V} \matr{\Sigma}^2 \matr{V}^T. 
\end{eqnarray*}
We now show that $(\matr{X}^T \matr{X} + \rho \matr{I})^{-1} = \matr{V} \matr{\Theta} \matr{V}^{T}$, where $\matr{\Theta}$ is the diagonal matrix with diagonal values  $\dfrac{1}{\sigma_i^{2} + \rho}$.

\begin{eqnarray*}
(\matr{X}^T \matr{X} + \rho \matr{I}) \matr{V} \matr{\Theta} \matr{V}^T &=&  
\matr{V} (\matr{\Sigma}^2 + \rho \matr{I}) \matr{V}^T  \matr{V} \matr{\Theta} \matr{V}^T \\ 
&=& \matr{V} (\matr{\Sigma}^2 +\rho \matr{I}) \matr{\Theta} \matr{V}^{T}\\
&=& \matr{V} \matr{V}^{T} = \matr{I}.
\end{eqnarray*}

% Therefore,

% \begin{eqnarray*}
% (\matr{X}^{T} \matr{X} + \rho \matr{I})^{-1} &=&   \matr{V} \matr{\Sigma}^2 \matr{V}^T + \rho (\matr{V} \matr{V}^T)  \\
%  &=& \matr{V} (\matr{\matr{\Sigma}}^2 + \rho \matr{I}) \matr{V}^T.
% \end{eqnarray*}

Using a similar reasoning, we can also derive that 
\begin{eqnarray*}
\matr{X}^{T} \matr{y} = \matr{V} \matr{\Gamma}^T \matr{U}^T \vec{y}.
\end{eqnarray*}

\subsubsection{Properties after decomposition}
The SVD decomposition formulation sets up an alternative way to commit to matrices $(\matr{X}_i^{T} \matr{X}_i + \rho \matr{I})^{-1}$ and $\matr{X}_i \vec{y}_i$.
For the rest of this section, we describe the zero knowledge proofs that every party has to execute.
For simplicity, we focus on one party and use $\matr{X}$ and $\vec{y}$ to represent its data, and $\matr{A}$ and $\vec{b}$ to represent its summaries.

During the ADMM computation, matrices $\matr{A} = (\matr{X}^T \matr{X} + \rho \matr{I})^{-1}$ and $\matr{b} = \matr{X}^T \vec{y}$ are repeatedly used to calculate the intermediate weights. 
Therefore, each party needs to commit to $\matr{A}$ and $\matr{b}$.
With the alternative formulation, it is no longer necessary to commit to $\matr{X}$ and $\vec{y}$ individually.
Instead, it suffices to prove that a party knows $\matr{V}$, $\matr{\Theta}$, $\matr{\Sigma}$ (all are in $\mathds{R}^{d \times d}$) and a vector $\vec{y}^* = (\matr{U}^{T} \vec{y})_{[1:d]} \in \mathds{R}^{d \times 1}$ such that:
 \begin{enumerate}
     \item \label{step:proveA} $\matr{A} = \matr{V} \matr{\Theta} \matr{V}^{T}$,
     \item \label{step:proveB} $\vec{b} = \matr{V} \matr{\Sigma}^T \vec{y}^*$,
     \item \label{step:proveV} $\matr{V}$ is an orthogonal matrix, namely, $\matr{V}^T \matr{V} = \matr{I}$, and
     \item \label{step:proveTheta} $\matr{\Theta}$ is a diagonal matrix where the diagonal entries are $1/{(\sigma_i^2 + \rho)}$. $\sigma_i$ are the values on the diagonal of $\matr{\Sigma}$ and $\rho$ is a public value.
 \end{enumerate}
 
 %*** Note for us *****
 % We don't need to prove that the coefficients of Sigma are non-negative because y^*
 % can modify them to be anything
 
Note that $\matr{\Gamma}$ can be readily derived from $\matr{\Sigma}$ by adding rows of zeros. 
Moreover, both $\matr{\Theta}$ and $\matr{\Sigma}$ are diagonal matrices.
Therefore, we only commit to the diagonal entries of $\matr{\Theta}$ and $\matr{\Sigma}$ since the rest of the entries are zeros.

The above four statements are sufficient to prove the properties of $\matr{A}$ and $\vec{b}$ in the new formulation.
The first two statements simply prove that $\matr{A}$ and $\vec{b}$ are indeed decomposed into \emph{some} matrices $\matr{V}$, $\matr{\Theta}$, $\matr{\Sigma}$, and $\matr{y}^{*}$.
Statement~\ref{step:proveV}) shows that $\matr{V}$ is an orthogonal matrix, since by definition an orthogonal matrix $\matr{Q}$ has to satisfy the equation $\matr{Q}^{T} \matr{Q} = \matr{I}$.
However, we  allow the prover to choose $\matr{V}$.
As stated before, the prover would have been free to choose $\matr{X}$ and $\vec{y}$ anyway, so this freedom does not give more power to the prover.

Statement~\ref{step:proveTheta}) proves that the matrix $\matr{\Theta}$ is a diagonal matrix such that the diagonal values satisfy the form above.
This is sufficient to show that $\matr{\Theta}$ is correct according to \emph{some} $\matr{\Sigma}$.
Again, the prover is free to choose $\matr{\Sigma}$, which is the same as freely choosing its input $\matr{X}$.

Finally, we chose to commit to $\vec{y}^{*}$ instead of committing to $\matr{U}$ and $\vec{y}$ separately.
Following our logic above, it seems that we also need to commit to $\matr{U}$ and prove that it is an orthogonal matrix, similar to what we did with $\matr{V}$.
This is not necessary because of an important property of orthogonal matrices: $\matr{U}$'s columns span the vector space $\mathds{R}^{n}$.
Multiplying $\matr{U} \vec{y}$, the result is a linear combination of the columns of $\matr{U}$.
Since we also allow the prover to pick its $\vec{y}$, $\matr{U}\vec{y}$ essentially can be any vector in $\mathds{R}^{n}$.
Thus, we only have to allow the prover to commit to the product of $\matr{U}$ and $\vec{y}$.
As we can see from the derivation, $\vec{b} = \matr{V} \matr{\Gamma}^{T} \matr{U} \vec{y}$, but since $\matr{\Gamma}$ is simply $\matr{\Sigma}$ with rows of zeros, the actual decomposition only needs the first $d$ elements of $\matr{U}\vec{y}$.
Hence, this allows us to commit to $\vec{y}^{*}$, which is $d \times 1$.

Using our techniques, \sys commits only to  matrices of sizes $d \times d$ or $d \times 1$, thus removing any scaling in $n$ (the number of rows in the dataset) in the input preparation phase.

% This suffices to prove the organization knows $\matr{X}$ and $\vec{y}$ because one can choose a unitary orthonormal matrix $U$ and let $X \gets U\Gamma V^T$ with $\Gamma$ derived from $\Sigma$, and $y = U \vec{y}^*$.
% Note that the statements to prove now depend solely $n \times d$ statements, for $d$ being the number of dimension and $n$ the number of data points, which are much smaller than the original statements of size $n \times n$. \todo{check there is still a factor of n}
% As a result, each organization will commit to values $A$, $B$, $V$, $\Theta$, $\Sigma$, and $\vec{y}^*$ and provide a zero-knowledge proof of knowledge of the above properties. 

\subsubsection{Proving the initial data summaries}

First, each party broadcasts $\enc{\matr{V}}$, $\enc{\matr{\Sigma}}$, $\enc{\matr{\Theta}}$, $\enc{\vec{y}^{*}}$, $\enc{\matr{A}}$, and $\enc{\vec{b}}$.
To encrypt a matrix, the party simply individually encrypts each entry.
The encryption scheme itself also acts as a commitment scheme~\cite{groth2009homomorphic}, so we do not need an extra commitment scheme.

To prove these statements, we also need another primitive called an interval proof.
Moreover, since these matrices act as inputs to the model compute phase, we also need to prove that $\matr{A}$ and $\vec{b}$ are within a certain range (this will be used by~\cref{gadget:spdz:shares}, described in~\cref{sec:model:release}).
The interval proof we use is from~\cite{boudot2000efficient}, which is an efficient way of proving that a committed number lies within a certain interval.
However, what we want to prove is that an encrypted number lies within a certain interval.
This can be solved by using techniques from~\cite{damgaard2002client}, which appends the range proof with a commitment-ciphertext equality proof.
This extra proof proves that, given a commitment and a Paillier ciphertext, both hide the same plaintext value.

To prove the first two statements, we invoke \cref{gadget:knowsone} and \cref{gadget:knowall}.
This allows us to prove that the party knows all of the matrices in question and that they satisfy the relations laid out in those statements.
%There are certain optimizations available here, such that a proof of knowledge of $\matr{V}^{T}$ does not need to be proved separately if knowledge of $\matr{V}$ has already been proved.

There are two steps to proving statement~\ref{step:proveV}. 
The prover will compute $\enc{\matr{V}^T \matr{V}}$ and prove it computed it correctly using Gadget~\ref{gadget:knowsone} as above. 
The result should be equal to the encryption of the identity matrix.
However, since we are using fixed point representation for our data, the resulting matrix could be off from the expected values by some small error.
$\matr{V}^T \matr{V}$ will only be close to $\matr{I}$, but not equal to $\matr{I}$.
Therefore, we also utilize interval proofs to make sure that $\matr{V}^T \matr{V}$ is close to $\matr{I}$, without explicitly revealing the value of $\matr{V}^T \matr{V}$.

Finally, to prove statement~\ref{step:proveTheta}, the prover does the following:
\begin{enumerate}
\item The prover computes and releases $\enc{\matr{\Sigma}^2}$ because the prover knows $\matr{\Sigma}$ and proves using Gadget~\ref{gadget:knowsone} that this computation is done correctly.
\item The prover computes $\enc{\matr{\Sigma}^2 + \rho \matr{I}}$, which anyone can compute because $\rho$ and $\matr{I}$ are public. $\enc{\matr{\Sigma}^{2}}$ and $\enc{\rho \matr{I}}$ can be multiplied together to get the summation of the plaintext matrices.
\item The prover now computes $\enc{\matr{\Sigma}^2 + \rho \matr{I}} \times \enc{\matr{\Theta}}$ and proves this encryption was computed correctly using Gadget~\ref{gadget:knowsone}.
\item Similar to step 3), the prover ends this step by using interval proofs to prove that this encryption is close to encryption of the identity matrix. 
\end{enumerate}

%%%%%%%%%%%%%%%%%%%%%%%%%%%%%%%%%%%%%%%%%%%%%%%%%%%%%%%%%%%%%%%%%%%%%%%%%%%%%%%%%%%%%%%%%%%%%%%%
%%%%%%%%%%%%%%%%%%%%%%%%%%%%%%%%%%%%%%%%%%%%%%%%%%%%%%%%%%%%%%%%%%%%%%%%%%%%%%%%%%%%%%%%%%%%%%%%
%%%%%%%%%%%%%%%%%%%%%%%%%%%%%%%%%%%%%%%%%%%%%%%%%%%%%%%%%%%%%%%%%%%%%%%%%%%%%%%%%%%%%%%%%%%%%%%%

\section{Model compute phase}
\label{sec:online:phase}

\subsection{Overview}
In the model compute phase, all parties use the summaries computed in the input preparation phase and execute the iterative ADMM training protocol.
An encrypted weight vector is generated at the end of this phase and distributed to all participants.
The participants can jointly decrypt this weight vector to get the plaintext model parameters.
This phase executes in three steps: initialization, training (local optimization and coordination), and model release.

\subsection{Initialization}
We initialize the weights $\vec{w}_i^{0}, \vec{z}^{0}, \text{and } \vec{u}_i^{0}$.
There are two popular ways of initializing the weights.
The first way is to set every entry to a random number.
The second way is to initialize every entry to zero.
In \sys, we use the second method because it is easy and works well in practice.

\subsection{Local optimization}
During ADMM's local optimization phase, each party takes the current weight vector and iteratively optimizes the weights based on its own dataset.
For LASSO, the update equation is simply $\vec{w}_i^{k+1} \leftarrow  \matr{A}_i (\vec{b}_i + \rho \left(\vec{z}^{k}  - \vec{u}_{i}^{k}\right))$,
where $\matr{A}_i$ is the matrix $(\matr{X}_i^T \matr{X}_i + \rho \matr{I})^{-1}$ and $\vec{b}_i$ is $\matr{X}_i^{T} \vec{y}_i$.
As we saw from the input preparation phase description, each party holds encryptions of $\matr{A}_i$ and $\vec{b}_i$.
Furthermore, given $\vec{z}^{k}$ and $\vec{u}^{k}_{i}$ (either initialized or received as results calculated from the previous round), each party can independently calculate $\vec{w}^{k+1}_{i}$ by doing plaintext scaling and plaintext-ciphertext matrix multiplication. 
Since this is done locally, each party also needs to generate a proof proving that the party calculated  $\vec{w}^{k+1}_{i}$  correctly.
%We calculate the proof for $\matr{A}_i  \rho (\vec{z}^{k} - \vec{u}^{k}_{i})$ using~\cref{gadget:knowsone}.
%The term $\matr{A}_i \vec{b}_i$ is reused across multiple iterations and can be cached.
We compute the proof for this step by invoking~\cref{gadget:knowsone}.

\subsection{Coordination using MPC}

After the local optimization step, each party holds encrypted weights $\vec{w}^{k+1}_{i}$.
The next step in the ADMM iterative optimization is the coordination phase.
Since this step contains non-linear functions, we evaluate it using generic MPC.

\subsubsection{Conversion to MPC}
First, the encrypted weights need to be converted into an MPC-compatible input.
To do so, we formulate a gadget that converts ciphertext to arithmetic shares.
The general idea behind the protocol is inspired by arithmetic sharing protocols~\cite{cramer2001multiparty,spdz}.

\newgadget{\label{gadget:additive:sharing}
For $m$ parties, each party having the public key $\pk$ and a share of the secret key $\sk$, given public ciphertext $\enc{a}$, convert $a$ into $m$ shares $a_i \in \mathds{Z}_{p}$ such that $a \equiv \sum a_i \mod p$.
Each party $P_i$ receives secret share $a_i$ and does not learn the original secret value $a$.
}

\parhead{Gadget usage}
Each party uses this gadget to convert $\enc{\vec{w}_i}$ and $\enc{\vec{u}_i}$ into input shares and compute the soft threshold function using MPC (in our case, SPDZ).
% After applying this gadget, each party can use $a_i$ as the input to the MPC protocol. 
We denote $p$ as the public modulus used by SPDZ.
Note that all of the computation encrypted by ciphertexts are dong within modulo $p$.

\parhead{Protocol} The protocol proceeds as follows:
\begin{enumerate}
    \item Each party $P_i$ generates a random value $r_i \in [0, 2^{|p| + \kappa}]$ and encrypts it, where $\kappa$ is a statistical security parameter.
    Each party should also generate an interval plaintext proof of knowledge of $r_i$, then publish $\enc{r_i}$ along with the proofs.
    \item Each party $P_i$ takes as input the published $\{ \enc{r_j} \}_{j=1}^m$ and compute the product with $\enc{a}$. The result is $c = \enc{a + \sum_{j=1}^m r_j}$.
    \item All parties jointly decrypt $c$ to get plaintext $b$.
    \item Party $0$ sets $a_0 = b - r_0 \mod p$. Every other party sets $a_i \equiv - r_i \mod p$.
    \item Each party publishes $\enc{a_i}$ as well as an interval proof of plaintext knowledge.
\end{enumerate}

\subsubsection{Coordination}
The ADMM coordination step takes in $\vec{w}_i^{k+1}$ and $\vec{u}_i^{k}$, and outputs $\vec{z}^{k+1}$.
The $\vec{z}$ update requires computing the soft-threshold function (a non-linear function), so we express it in MPC.
Additionally, since we are doing fixed point integer arithmetic as well as using a relatively small prime modulus for MPC (256 bits in our implementation), we need to reduce the scaling factors accumulated on $\vec{w}^{k+1}_i$ during plaintext-ciphertext matrix multiplication.
We currently perform this operation inside MPC as well.

\subsubsection{Conversion from MPC}
After the MPC computation, each party receives shares of $\vec{z}$ and its MAC shares, as well as shares of $\vec{w}_i$ and its MAC shares.
It is easy to convert these shares back into encrypted form simply by encrypting the shares, publishing them, and summing up the encrypted shares. 
We can also calculate $\vec{u}_i^{k+1}$ this way.
Each party also publishes interval proofs of knowledge for each published encrypted cipher.
Finally, in order to verify that they are indeed valid SPDZ shares (the specific protocol is explained in the next section), each party also publishes  encryptions and interval proofs of all the MACs.

\subsection{Model release}
\label{sec:model:release}

\subsubsection{MPC conversion verification}
Since we are combining two protocols (homomorphic encryption and MPC), an attacker can attempt to alter the inputs to either protocol by using different or inconsistent attacker-chosen inputs. 
Therefore, before releasing the model, the parties must prove that they correctly executed the ciphertext to MPC conversion (and vice versa).
We use another gadget to achieve this.

\newgadget{\label{gadget:spdz:shares}
Given public parameters: encrypted value $\enc{a}$, encrypted SPDZ input shares $\enc{b_i}$, encrypted SPDZ MACs $\enc{c_i}$, and interval proofs of plaintext knowledge, verify that 
\begin{enumerate}[itemsep=0mm,leftmargin=*,topsep=0.0ex,parsep=0.0pt]
\item $a \equiv \sum_i b_i \mod p$, and 
\item $b_i$ are valid SPDZ shares and $c_i$'s are valid MACs on $b_i$.
\end{enumerate}
}
\parhead{Gadget usage}
We apply~\cref{gadget:spdz:shares} to all data that needs to be converted from encrypted ciphers to SPDZ or vice versa.
More specifically, we need to prove that (1) the SPDZ input shares are consistent with $\enc{\vec{w}_i^{k+1}}$ that is published from each party, and (2) the SPDZ shares for $\vec{w}_i^{k+1}$ and $\vec{z}^{k}$ are authenticated by the MACs.

\parhead{Protocol} The gadget construction proceeds as follows:
\begin{enumerate}
    \item Each party verifies that $\enc{a}$, $\enc{b_i}$ and $\enc{c_i}$ pass the interval proofs of knowledge. For example, $b_i$ and $c_i$ need to be within $[0, p]$.
    \item Each party homomorphically computes $\enc{\sum_i b_i}$, as well as $E_d = \enc{a - \sum_i b_i}$. \label{step:startcheck}
    \item Each party randomly chooses $r_i \in [0, 2^{|a| + |\kappa|}]$, where $\kappa$ is again a statistical security parameter, and publishes $\enc{r_i}$ as well as an interval proof of plaintext knowledge.\label{step:chooseri}
    \item Each party calculates $E_f = E_d \prod_i \enc{r_i}^{p} = \enc{(a - \sum_i b_i) + \sum_i (r_i \cdot p)}$. Here we assume that $\log |m| + |p| + |a| + |\kappa| < |n|$.
    \item All parties participate in a joint decryption protocol to decrypt $E_f$ obtaining $e_f$. 
    \item Every party individually checks to see that $e_f$ is a multiple of $p$. If this is not the case, abort the protocol. \label{step:endcheck}
    \item The parties release the SPDZ global MAC key $\alpha$.
    \item Each party calculates $\enc{\alpha (\sum b_i + \delta)}$ and $\enc{\sum c_i}$.
    \item Use the same method in steps \ref{step:startcheck} -- \ref{step:endcheck} to prove that $\alpha (\sum b_i + \delta) \equiv \sum c_i \mod p$.
\end{enumerate}

The above protocol is a way for parties to verify two things.
First, that the SPDZ shares indeed match with a previously published encrypted value (i.e., \cref{gadget:additive:sharing} was executed correctly).
Second, that the shares are valid SPDZ shares.
The second step is simply verifying the original SPDZ relation among value share, MAC shares, and the global key.

Note that we cannot verify these relations by simply releasing the plaintext data shares and their MACs since the data shares correspond to the intermediate weights.
Furthermore, the shares need to be equivalent in modulo $p$, which is different from the Paillier parameter $N$.
Therefore, we use an alternative protocol to test modulo equality between two ciphertexts, which is the procedure described above in steps \ref{step:startcheck} to \ref{step:endcheck}.

Since the encrypted ciphers come with interval proofs of plaintext knowledge, we can assume that  $a \in [0, l]$.
If two ciphertexts encrypt plaintexts that are equivalent to each other, they must satisfy that $a \equiv b \mod p$ or $a = b + \eta p$.
Thus, if we take the difference of the two ciphertexts, this difference must be $\eta p$.
We could then run the decryption protocol to test that the difference is indeed a multiple of $p$.

If $a \equiv \sum_i b_i \mod p$, simply releasing the difference could still reveal extra information about the value of $a$.
Therefore, all parties must each add a random mask to $a$.
In step~\ref{step:chooseri}, $r_i$'s are generated independently by all parties, which means that there must be at least one honest party who is indeed generating a random number within the range.
The resulting plaintext thus statistically hides the true value of $a - \sum_i b_i$ with the statistical parameter $\kappa$.
If $a \not\equiv \sum_i b_i \mod p$, then the protocol reveals the difference between $a - \sum_i b_i \mod p$.
This is safe because the only way to reveal $a - \sum_i b_i \mod p$ is when an adversary misbehaves and alters its inputs, and the result is independent from the honest party's behavior.
%The difference leaked is exactly how much the adversary has altered its inputs.

%With this gadget, we are able to prove two things.
%First, we prove that the input shares to SPDZ are consistent with the published encrypted cipher.
%Second, we prove that the input shares are indeed valid SPDZ shares because they pass the MAC checks.

\subsubsection{Weight vector decryption}
Once all SPDZ values are verified, all parties jointly decrypt $\vec{z}$.
This can be done by first aggregating  the encrypted shares of $\vec{z}$ into a single ciphertext.
After this is done, the parties run the joint decryption protocol from~\cite{fouque2000sharing} (without releasing the private keys from every party).
The decrypted final weights are released in plaintext to everyone.

%%%%%%%%%%%%%%%%%%%%%%%%%%%%%%%%%%%%%%%%%%%%%%%%%%%%%%%%%%%%%%%%%%%%%%%%%%%%%%%%%%%%%%%%%%%%%%%%
%%%%%%%%%%%%%%%%%%%%%%%%%%%%%%%%%%%%%%%%%%%%%%%%%%%%%%%%%%%%%%%%%%%%%%%%%%%%%%%%%%%%%%%%%%%%%%%%
%%%%%%%%%%%%%%%%%%%%%%%%%%%%%%%%%%%%%%%%%%%%%%%%%%%%%%%%%%%%%%%%%%%%%%%%%%%%%%%%%%%%%%%%%%%%%%%%

\section{Extensions to Other Models}
\label{sec:applications}
Though we used LASSO as a running example, our techniques can be applied to other linear models like ordinary least-squares linear regression, ridge regression, and elastic net.
Here we show the update rules for ridge regression, and leave its derivation to the readers.
%and show derivations for both ridge and LASSO in~\cref{sec:admm:derivation}

Ridge regression solves a similar problem as LASSO, except with $L^{2}$ regularization.
Given a dataset $(\matr{X}, \vec{y})$ where $\matr{X}$ is the feature matrix and $\vec{y}$ is the prediction vector, ridge regression optimizes 
$\argmin_{\vec{w}} \dfrac{1}{2} \|\matr{X} \vec{w} - \vec{y}\|_{2}^{2} + \lambda \|\vec{w}\|_{2}$.
The update equations for ridge regression are:
\begin{align*}
\vec{w}_{i}^{k+1} &= (\matr{X}_i^{T} \matr{X}_i + \rho I)^{-1}(\matr{X}_i^{T}\vec{y}_i + \rho(\vec{z}^{k} - \vec{u}_i^{k})) \\
&+ (\rho/2) \| \vec{w}_i - \vec{z}^{k} + \vec{u}_i^{k} \|_{2}^{2}\\
\vec{z}^{k+1} &= \dfrac{\rho}{2\lambda/m + \rho} (\vec{\bar{w}}^{k+1} + \vec{\bar{u}}^{k})\\
\vec{u}_i^{k+1} &= \vec{u}_i^{k} + \vec{x}_i^{k+1} - \vec{z}^{k+1}
\end{align*}

The local update is similar to LASSO, while the coordination  update is a linear operation instead of the soft threshold function. 
Elastic net, which combines $L^1$ and $L^2$ regularization, can therefore be implemented by combining the regularization terms from LASSO and ridge regression.

%%%%%%%%%%%%%%%%%%%%%%%%%%%%%%%%%%%%%%%%%%%%%%%%%%%%%%%%%%%%%%%%%%%%%%%%%%%%%%%%%%%%%%%%%%%%%%%%
%%%%%%%%%%%%%%%%%%%%%%%%%%%%%%%%%%%%%%%%%%%%%%%%%%%%%%%%%%%%%%%%%%%%%%%%%%%%%%%%%%%%%%%%%%%%%%%%
%%%%%%%%%%%%%%%%%%%%%%%%%%%%%%%%%%%%%%%%%%%%%%%%%%%%%%%%%%%%%%%%%%%%%%%%%%%%%%%%%%%%%%%%%%%%%%%%

\section{Evaluation}

We implemented \sys in C++.
We utilize the SPDZ library~\cite{SPDZgithub}, a mature library for maliciously secure multi-party computation, for both the baseline and \sys.
In our implementation, we apply the Fiat-Shamir heuristic to our zero-knowledge proofs~\cite{faust2012non}.
This technique is commonly used in implementations because it makes the protocols non-interactive and thus more efficient, but assumes the random oracle model. 

We compare \sys's performance to a maliciously secure baseline that trains using stochastic gradient descent, similar to SecureML~\cite{secureml}.
Since SecureML only supports two parties in the semihonest setting, we implemented a similar baseline using SPDZ~\cite{spdz}.
%Unfortunately, this does not have the same optimizations that SecureML had, though the optimizations were also specific to the two party setting.
SecureML had a number of optimizations, but they were designed for the two-party setting.
We did not extend those optimizations to the multi-party setting.
We will refer to SGD implemented in SPDZ as the ``secure baseline'' (we explain more about the SGD training process in \cref{ssec:experiment:setup}).
Finally, we do not benchmark \sys's Paillier key setup phase.
This can be computed using SPDZ itself, and it is ran only once (as long as the party configuration does not change).

\subsection{Experiment setup}
\label{ssec:experiment:setup}
We ran our experiments on EC2 using r4.8xlarge instances. 
Each machine has 32 cores and 244 GiB of memory. 
In order to simulate a wide area network setting, we created EC2 instances in Oregon and Northern Virginia.
The instances are equally split across these two regions.
To evaluate \sys's scalability, we used synthetic datasets that are constructed by drawing samples from a noisy normal distribution.
For these datasets, we varied both the dimension and the number of parties. 
To evaluate \sys's performance against the secure baseline, we benchmarked both systems on two real world datasets from UCI~\cite{ucimlrepo}.
% The first dataset is a song dataset. 
% The regression problem here is predicting the year in which a song was published.
% The second dataset is the gas sensor dataset, which is a time series dataset that records sensor readings with various gas levels.

\parhead{Training assumptions.}
We do not tackle hyperparameter tuning in our work, and also assume that the data has been normalized before training.
We also use a fixed number of rounds ($10$) for ADMM training, which we found experimentally using the real world datasets.
We found that $10$ rounds is often enough for the training process to converge to a reasonable error rate.
Recall that ADMM converges in a small number of rounds because it iterates on a summary of the \emph{entire dataset}.
In contrast, SGD iteratively scans data from all parties at least once in order to get an accurate representation of the underlying distributions.
This is especially important when certain features occur rarely in a dataset.
Since the dataset is very large, even one pass already results in many rounds.

\parhead{MPC configuration.}
As mentioned earlier, SPDZ has two phases of computation: an offline phase and an online phase.
The offline phase can run independently of the secure function, but the precomputed values cannot be reused across multiple online phases.
The SPDZ library provides several ways of benchmarking different offline phases, including MASCOT~\cite{mascot} and Overdrive~\cite{overdrive}.
We tested both schemes and found Overdrive to perform better over the wide area network.
Since these are for benchmarking purposes only, we decided to estimate the SPDZ offline phase by dividing the number of triplets needed for a circuit by the benchmarked throughput.
The rest of the evaluation section will use the estimated numbers for all SPDZ offline computation.
Since \sys uses parallelism, we also utilized parallelism in the SPDZ offline generation by matching the number of threads on each machine to the number of cores available.

On the other hand, the SPDZ online implementation is not parallelized because the API was insufficient to effectively express parallelism.
We note two points.
First, while parallelizing the SPDZ library will result in a faster baseline, \sys also utilizes SPDZ, so any improvement to SPDZ also carries over to \sys.
Second, as shown below, our evaluation shows that \sys still achieves significant performance gains over the baseline even if the online phase in the secure baseline is infinitely fast.

Finally, the parameters we use for \sys are: 128 bits for the secure baseline's SPDZ configuration, 256 bits for the \sys SPDZ configuration, and 4096 bits for \sys's Paillier ciphertext.

\subsection{Theoretic performance}
\begin{table}[h]
    \centering
    \begin{tabular}{|cc||c|}
        \hline
        {\bf Baseline} & Secure SGD & $ C \cdot m^2 \cdot n \cdot d$ \\
        \hline
        {\bf \sys} & SVD decomposition & $c_1 \cdot n \cdot d^2$\\
        & SVD proofs & $c_1 \cdot m \cdot d^2 + c_2 \cdot d^3$ \\
        & MPC offline &  $c_1 \cdot m^2 \cdot d$ \\
        & Model compute & $c_1 \cdot m^2 \cdot d + c_2 \cdot d^2 + c_3 \cdot m \cdot d$ \\
        \hline
    \end{tabular}
    \caption{Theoretical scaling (complexity analysis) for SGD baseline and \sys. $m$ is the number of parties, $n$ is the number of samples per party, $d$ is the dimension.
    }
    \label{table:scaling}
\end{table}

\cref{table:scaling} shows the theoretic scaling behavior for SGD and \sys, where $m$ is the number of parties, $n$ is the number of samples per party, $d$ is the dimension, and $C$ and $c_i$ are constants.
Note that $c_i$'s are not necessarily the same across the different rows in the table.
We split \sys's input preparation phase into three sub-components: SVD (calculated in plaintext), SVD proofs, and MPC offline (since \sys uses SPDZ during the model compute phase, we also need to run the SPDZ offline phase).

SGD scales linearly in $n$ and $d$.
If the number of samples per party is doubled, the number of iterations is also doubled.
A similar argument goes for $d$.
SGD scales quadratic in $m$ because it first scales linearly in $m$ due to the behavior of the MPC protocol.
If we add more parties to the computation, the number of samples will also increase, which in turn increases the number of iterations needed to scan the entire dataset.

\sys, on the other hand,  scales linearly in $n$ only for the SVD computation. 
We emphasize that SVD is very  fast because it is {\em executed on plaintext data}.  
The $c_1$ part of the SVD proofs formula scales linearly in $m$ because each party needs to verify from every other party. 
It also scales linearly in $d^2$ because each proof verification requires $d^2$ work.
The $c_2$ part of the formula has $d^3$ scaling because our matrices are $d \times d$), and to calculate a resulting encrypted matrix requires matrix multiplication on two $d \times d$ matrices.
%However, the dominant cost is the matrix multiplication proof and verification.

The coordination phase from \sys's model compute phase, as well as the corresponding MPC offline compute phase, scale quadratic in $m$ because we need to use MPC to re-scale weight vectors from each party.
This cost corresponds to the $c_1$ part of the formula.
The model compute phase's $d^{2}$ cost ($c_2$ part of the formula) reflects the matrix multiplication and the proofs.
The rest of the MPC conversion proofs scale linearly in $m$ and $d$ ($c_3$ part of the formula).
%Again, the experiments show that the cost is dominated by the MPC conversion proofs.

\begin{figure}[t!]
    \centering 
    \begin{subfigure}[t]{0.45\textwidth}
        \includegraphics[width=\textwidth]{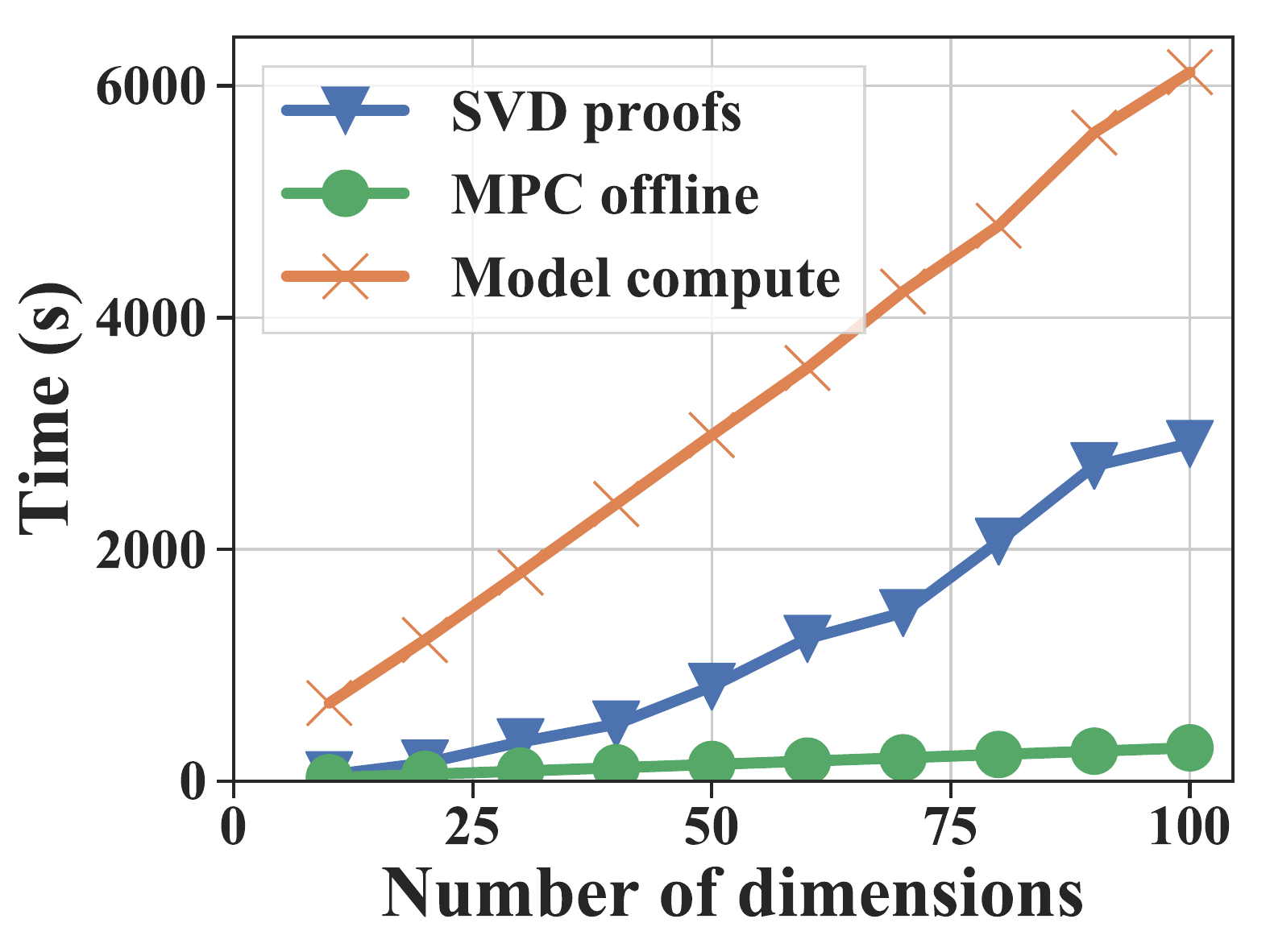}
        \caption{\sys's scaling as we increase the number of dimensions. The number of parties is fixed to be 4, and the number of samples per party is $100,000$.}
        \label{fig:vary:dim}
    \end{subfigure}
    \hfill
    \begin{subfigure}[t]{0.45\textwidth}
        \centering
        \includegraphics[width=\textwidth]{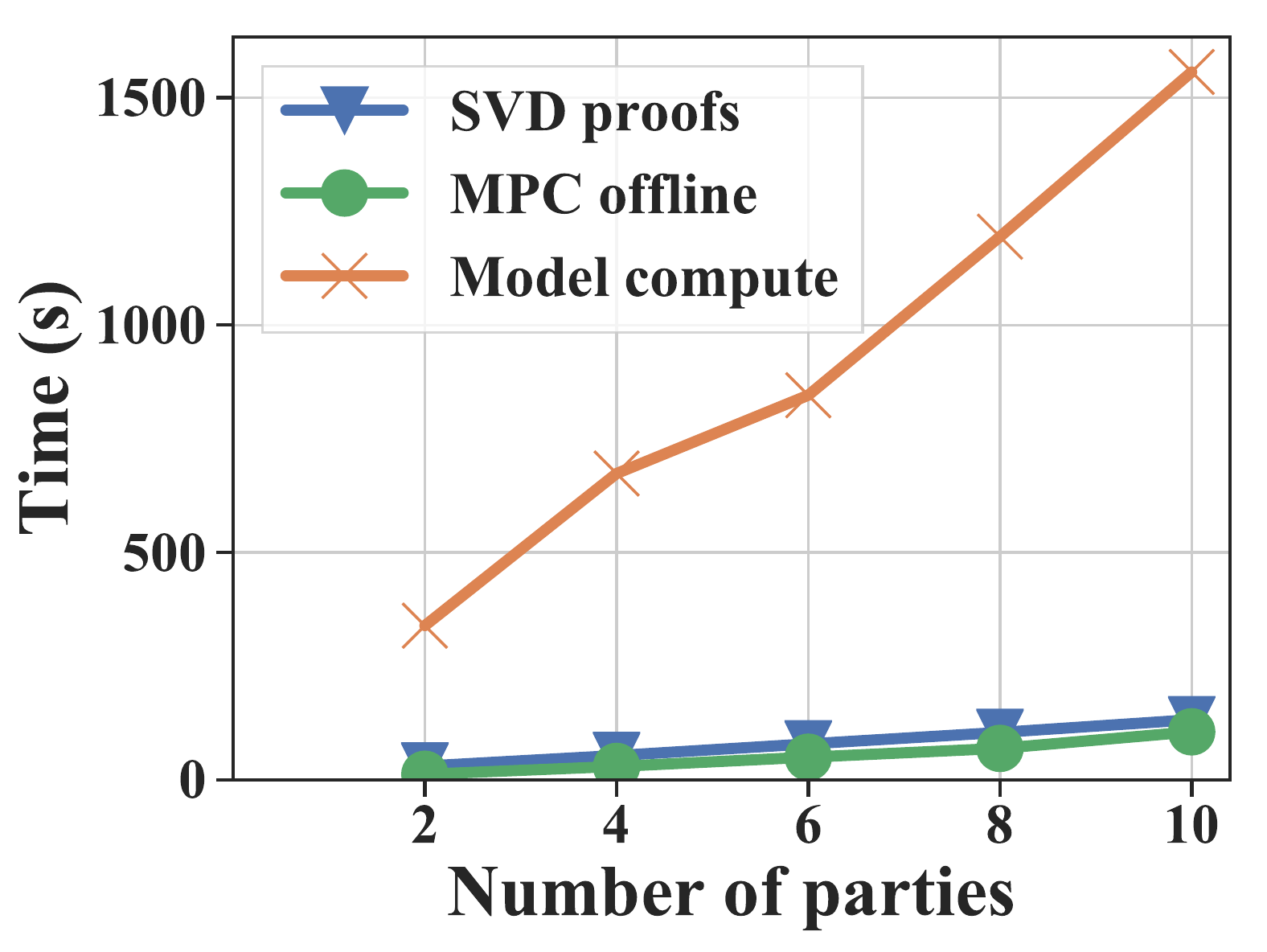}
        \caption{\sys's two phases as we increase the number of parties. The dimension is set to be $10$, and the number of samples per party is $100,000$.} 
        \label{fig:vary:parties}
    \end{subfigure}
    \caption{\sys scalability measurements.}
\end{figure}

\subsection{Synthetic datasets}
\label{sec:eval:synthetic}

\begin{table*}[t!]
    \centering
    \scriptsize
    \begin{tabular}{|c||c|c|c|c|c|c|c|c|c|c|c|}
        \hline
         Samples per party &  2000 & 4000 & 6000 & 8000 & 10K & 40K & 100K & 200K & 400K & 800K & 1M\\
         \hline
         sklearn L2 error & 8937.01 & 8928.32 & 8933.64 & 8932.97 & 8929.10 & 8974.15 & 8981.24 & 8984.64 & 8982.88 & 8981.11 & 8980.35\\
         \sys L2 error & 8841.33 & 8839.96 & 8828.18 & 8839.56 & 8837.59 & 8844.31 & 8876.00 & 8901.84 & 8907.38 & 8904.11 & 8900.37  \\
         \hline
         sklearn MAE &  57.89 & 58.07 & 58.04 & 58.10 & 58.05 & 58.34 & 58.48 & 58.55 & 58.58 & 58.56 & 58.57 \\
         \sys MAE & 57.23 & 57.44 & 57.46 & 57.44 & 57.47 & 57.63 & 58.25 & 58.38 & 58.36 & 58.37 & 58.40 \\
         \hline
    \end{tabular}
    \caption{Select errors for gas sensor (due to space), comparing \sys with a baseline that uses sklearn to train on all plaintext data. 
      L2 error is the squared norm; MAE is the mean average error.
    Errors are calculated after post-processing. }
    \label{table:gas_sensor_error_rates}
\end{table*}

\begin{table*}[t!]
    \centering
    \small
    \begin{tabular}{|c||c|c|c|c|c|c|c|c|c|c|c|}
        \hline
         Samples per party &  1000 & 2000 & 4000 & 6000 & 8000 & 10K & 20K & 40K & 60K & 80K & 100K\\
         \hline
         sklearn L2 error & 92.43 & 91.67 & 90.98 & 90.9 & 90.76 & 90.72 & 90.63 & 90.57 & 90.55 & 90.56 & 90.55\\
         \sys L2 error & 93.68 & 91.8 & 91.01 & 90.91 & 90.72 & 90.73 &  90.67 & 90.57 & 90.54 & 90.57 & 90.55\\
         \hline
         sklearn MAE & 6.86 & 6.81 & 6.77 & 6.78 & 6.79 & 6.81 & 6.80 & 6.79 & 6.79 & 6.80 & 6.80\\
         \sys MAE & 6.92 & 6.82 & 6.77 & 6.78 & 6.79 & 6.81 & 6.80 & 6.79 & 6.80 & 6.80 & 6.80 \\
         \hline
    \end{tabular}
    \caption{Errors for song prediction, comparing \sys with a baseline that uses sklearn to train on all plaintext data. 
      L2 error is the squared norm; MAE is the mean average error.
    Errors are calculated after post-processing.}
    \label{table:song_prediction_error_rates}
\end{table*}

%There are two questions we want to answer by evaluating \sys on synthetic datasets.
We want to answer two questions about \sys's scalability using synthetic datasets: 
how does \sys scale as we vary the number of features 
and  how does it scale as we vary the number of parties?
Note that we are not varying the number of input samples because that will be explored in~\cref{sec:eval:real} in comparison to the secure SGD baseline.

\cref{fig:vary:dim} shows a breakdown of \sys's cryptographic computation as we scale the number of dimensions.
The plaintext SVD computation is not included in the graph.
The SVD proofs phase is dominated by the matrix multiplication proofs, which scales in $d^{2}$.
The MPC offline phase and the model compute phase are both dominated by the linear scaling in $d$, which corresponds to the MPC conversion proofs.

\cref{fig:vary:parties} shows the same three phases as we increase the number of parties.
The SVD proofs phase scales linearly in the number of parties $m$.
The MPC offline phase scales quadratic in $m$, but its effects are not very visible for a small number of parties.
The model compute phase is dominated by the linear scaling in $m$ because the quadratic scaling factor isn't very visible for a small number of parties.

%\wz{Put in the numbers for compute-network breakdown: offline = 2\% network, Helen online = approximately 7 \%}

Finally, we also ran a microbenchmark to understand \sys's network and compute costs.
The experiment used 4 servers and a synthetic dataset with 50 features and 100K samples per party.
%For this experiment, we did not take into account the costs of SPDZ during the model compute phase.
We found that the network costs account for approximately 2\% of the input preparation phase and 22\% of \sys's model compute phase.

\subsection{Real world datasets}
\label{sec:eval:real}

We evaluate on two different real world datasets: gas sensor data~\cite{ucimlrepo} and the million song dataset~\cite{bertin2011million,ucimlrepo}.
The gas sensor dataset records 16 sensor readings when mixing two types of gases. 
Since the two gases are mixed with random concentration levels, the two regression variables are independent and we can simply run two different regression problems (one for each gas type).
For the purpose of benchmarking, we ran an experiment using the ethylene data in the first dataset.
The million song dataset is used for predicting a song's published year using 90 features.
Since regression problems produce real values, the year can be calculated by rounding the regressed value.

For SGD, we set the batch size to be the same size as the dimension of the dataset.
The number of iterations is equal to the total number of sample points divided by the batch size.
Unfortunately, we had to extrapolate the runtimes for a majority of the baseline online phases because the circuits were too big to compile on our EC2 instances.

\cref{fig:gas_sensor} and~\cref{fig:song_prediction} compare \sys to the baseline on the two datasets.
Note that \sys's input preparation graph combines the three phases that are run during the offline: plaintext SVD computation, SVD proofs, and MPC offline generation.
We can see that \sys's input preparation phase scales very slowly with the number of samples.
The scaling actually comes from the plaintext SVD calculation because both the SVD proofs and the MPC offline generation do not scale with the number of samples.
\sys's model compute phase also stays constant because we fixed the number of iterations to a conservative estimate.
SGD, on the other hand, does scale linearly with the number of samples in both the offline and the online phases.

\begin{figure}[t]
    \centering
    \includegraphics[width=0.45\textwidth]{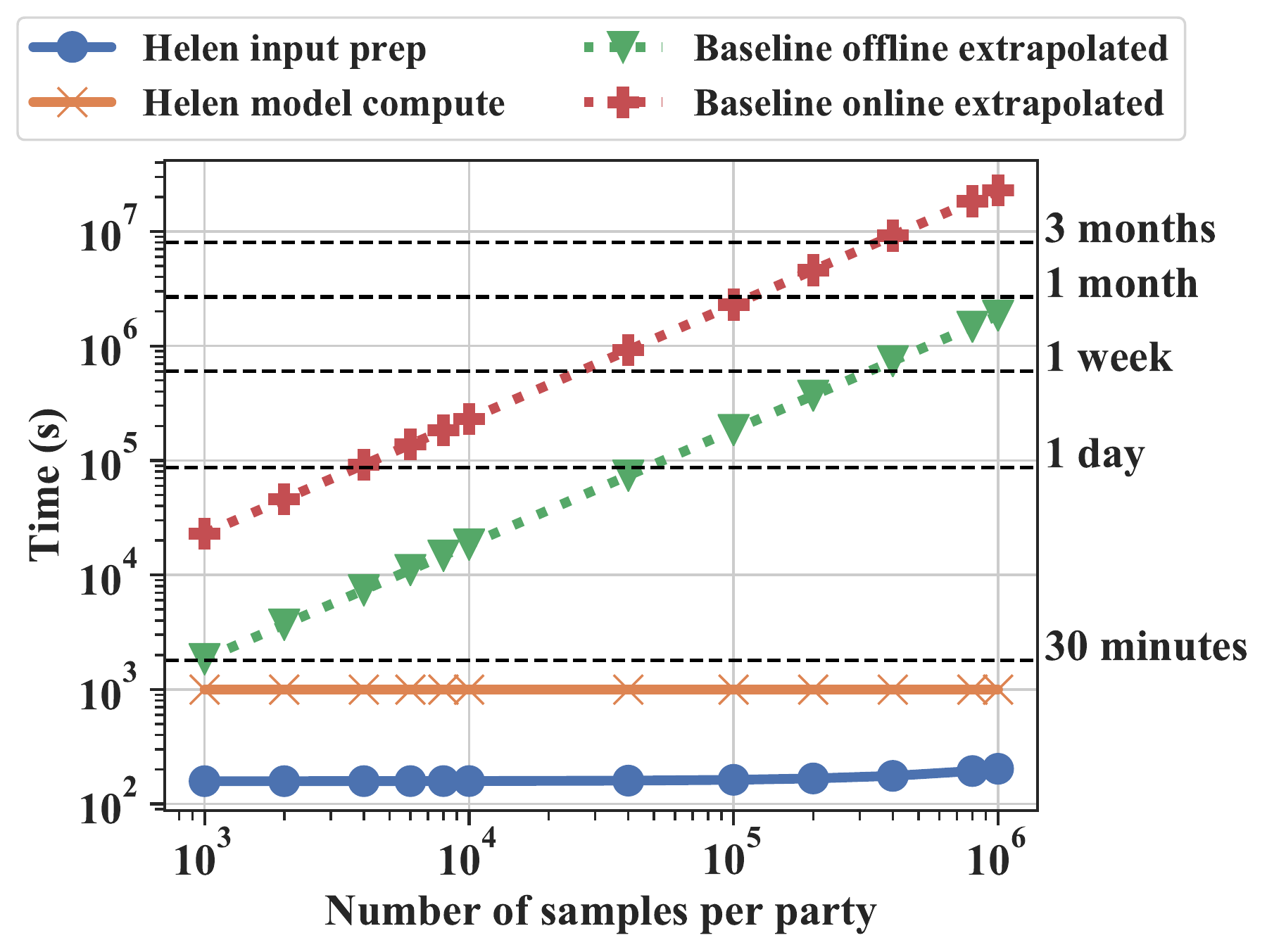}
    \caption{\sys and baseline performance on the gas sensor data. The gas sensor data contained over 4 million data points; we partitioned into 4 partitions with varying number of sample points per partition to simulate the varying  number of samples per party. 
    The number of parties is 4, and the number of dimensions is $16$.}
    \label{fig:gas_sensor}
\end{figure}

\begin{figure}[t]
    \centering
    \includegraphics[width=0.45\textwidth]{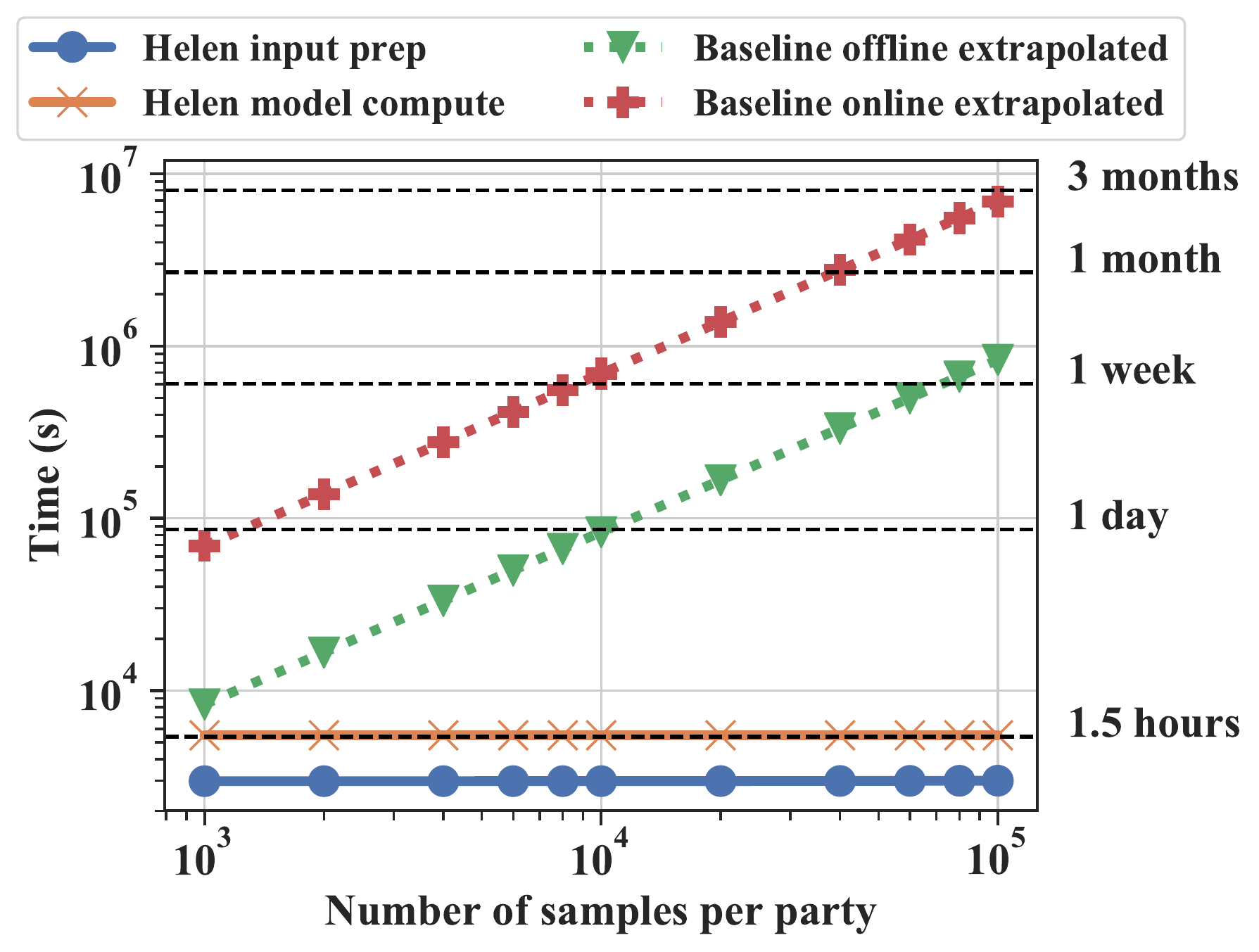}
    \caption{\sys and baseline performance on the song prediction data, as we vary the number of samples per party. The number of parties is 4, and the number of dimensions is $90$.
    }
    \label{fig:song_prediction}
    \caption{\sys comparison with SGD}
\end{figure}

For the gas sensor dataset, \sys's total runtime (input preparation plus model compute) is able to achieve a 21.5x performance gain over the baseline's total runtime (offline plus online) when the number of samples is 1000.
When the number of samples per party reaches 1 million, \sys is able to improve over the baseline by 20689x.
For the song prediction dataset, \sys is able to have a 9.1x performance gain over the baseline when the number of samples is 1000.
When the number of samples per party reaches 100K, \sys improves over the baseline by 911x.
Even if we compare \sys to the baseline's offline phase only, we find that \sys still has close to constant scaling while the baseline scales linearly with the number of samples.
The performance improvement compared to the baseline offline phase is up to 1540x for the gas sensor dataset and up to 98x for the song prediction dataset.

In \cref{table:gas_sensor_error_rates} and \cref{table:song_prediction_error_rates}, we evaluate \sys's test errors on the two datasets.
We compare the L2 and mean average error for \sys to the errors obtained from a model trained using sklearn (a standard Python library for machine learning) on the plaintext data.
We did not directly use the SGD baseline because its online phase does not compile for larger instances, and using sklearn on the plaintext data is a conservative estimate.
We can see that \sys achieves similar errors compared to the sklearn baseline.

%%%%%%%%%%%%%%%%%%%%%%%%%%%%%%%%%%%%%%%%%%%%%%%%%%%%%%%%%%%%%%%%%%%%%%%%%%%%%%%%%%%%%%%%%%%%%%%%
%%%%%%%%%%%%%%%%%%%%%%%%%%%%%%%%%%%%%%%%%%%%%%%%%%%%%%%%%%%%%%%%%%%%%%%%%%%%%%%%%%%%%%%%%%%%%%%%
%%%%%%%%%%%%%%%%%%%%%%%%%%%%%%%%%%%%%%%%%%%%%%%%%%%%%%%%%%%%%%%%%%%%%%%%%%%%%%%%%%%%%%%%%%%%%%%%

% \subsection{Future work on \sys{}}
% There are a few directions for future work on \sys.
% First, while we used Paillier for the homomorphic encryption scheme in our implementation, 
% Using different crypto for the homomorphic encryption implementation
% Using garbled circuits for the LASSO non-linear computations.
% SPDZ can still be used for the ridge regression problem.

% Issues:

% - Hyperparameter tuning

% - ADMM convergence 

% - Attacks like poisoning, model leakage

% - How to add differential privacy

%%%%%%%%%%%%%%%%%%%%%%%%%%%%%%%%%%%%%%%%%%%%%%%%%%%%%%%%%%%%%%%%%%%%%%%%%%%%%%%%%%%%%%%%%%%%%%%%
%%%%%%%%%%%%%%%%%%%%%%%%%%%%%%%%%%%%%%%%%%%%%%%%%%%%%%%%%%%%%%%%%%%%%%%%%%%%%%%%%%%%%%%%%%%%%%%%
%%%%%%%%%%%%%%%%%%%%%%%%%%%%%%%%%%%%%%%%%%%%%%%%%%%%%%%%%%%%%%%%%%%%%%%%%%%%%%%%%%%%%%%%%%%%%%%%

\newcommand{\badcell}{\cellcolor{red!25}}
\begin{figure*}[t]
\centering
\begin{tabular}{p{4.5cm}|p{3cm}|p{2cm}|p{3cm}|p{1cm}}
{\bf Work} & {\bf Functionality} & {\bf n-party?} & {\bf Malicious security?} & {\bf Practical?} \\ 
\hline
Nikolaenko et al.~\cite{Nikolaenko} & ridge regression & \badcell{} no & \badcell{} no & -- \\
Hall et al.~\cite{HallRidge} & linear regression  & {\bf yes}  & \badcell{} no &  -- \\ 
Gascon et al.~\cite{DistributedGascon} & linear regression & \badcell{} no & \badcell{} no & -- \\
Cock et al.~\cite{Cock2015} & linear regression & \badcell{} no & \badcell{} no & -- \\
Giacomelli et al.~\cite{cryptoeprint:2017:979} & ridge regression & \badcell{} no & \badcell{} no & -- \\
Alexandru et al.~\cite{quadraticopt} & quadratic opt. & \badcell{} no & \badcell no & -- \\
SecureML~\cite{secureml} & linear, logistic, deep learning & \badcell{} no & \badcell{} no & -- \\
Shokri\&Shmatikov~\cite{ShokriShmatikov} & deep learning & \badcell{} not MPC (heuristic) & \badcell{} no & -- \\
%DeepSecure~\cite{DeepSecure} & \badcell{} prediction  & -- &  -- &  -- \\ 
%Bost et al.~\cite{MLEnc} & \badcell{} prediction  & -- &  -- &  -- \\ 
%Various~\cite{chameleon,MiniONN,CryptoNets,Gazelle}  & \badcell{} prediction  & -- &  -- &  -- \\ 
%SMCQL~\cite{SMCQL} & analytics & \badcell{} no & \badcell{} no & -- \\
%DJoin~\cite{DJoin} & analytics & \badcell{} no & \badcell{} no & -- \\
%Prio~\cite{Prio} & analytics & \badcell{} no & \badcell{} no & -- \\
%Prochlo~\cite{Prochlo} & analytics & \badcell{} no & \badcell{} no & -- \\
%Bonawitz et al.~\cite{Bonawitz17} & \badcell{} summation & \badcell{} no & partial & -- \\ 
Semi-honest MPC~\cite{FairplayMP} & any function & {\bf yes} & \badcell{}  no & -- \\
Malicious MPC~\cite{spdz,Goldreich:1987,Sharemind,VIFF} & any function & {\bf yes} & {\bf yes} & \badcell{} no \\
\hline
\multicolumn{2}{p{8.0cm}|}{{\bf Our proposal, \sys}: regularized linear models} & {\bf yes} & {\bf yes} & {\bf yes} \\ 
\end{tabular}
\caption{ {\bf Insufficiency of existing cryptographic approaches.} ``n-party'' refers to whether the $n$($>$2) organizations can perform the computation with {\em equal trust} (thus not including the two non-colluding servers model).
We answer the practicality question only for maliciously-secure systems. 
We note that a few works that we marked as not coopetitive and not maliciously secure discuss at a high level how one might extend their work to such a setting, but they did not flesh out designs or evaluate their proposals.}
\label{table:related:work}
\end{figure*}

\section{Related work}\label{sec:related}

We organize the related work section into related coopetitive systems and attacks.

\subsection{Coopetitive systems}

\parhead{Coopetitive training systems}

In~\cref{table:related:work}, we  compare \sys to prior coopetitive training systems~\cite{Nikolaenko,HallRidge,DistributedGascon,Cock2015,cryptoeprint:2017:979,quadraticopt,secureml,ShokriShmatikov}.
The main takeaway is that, excluding {\em generic} maliciously secure MPC,  prior training systems do not provide malicious security.
Furthermore, most of them also assume that the training process requires outsourcing to two non-colluding servers.
At the same time, and as a result of choosing a weaker security model, some of these systems provide richer functionality than \sys, such as support for neural networks.
As part of our future work, we are exploring how to apply \sys's techniques to logistic regression and neural networks.

\parhead{Other coopetitive systems}
Other than coopetitive training systems, there are prior works on building coopetitive systems for applications like machine learning prediction and SQL analytics.
Coopetitive prediction systems~\cite{MLEnc, DeepSecure,chameleon,MiniONN,CryptoNets,Gazelle} typically consist of two parties, where one party holds a model and the other party holds an input. 
The two parties jointly compute a prediction without revealing the input or the model to the other party.
Coopetitive analytics systems~\cite{SMCQL,DJoin,Bonawitz17,Prio,Prochlo} allow multiple parties to run SQL queries over all parties' data.
These computation frameworks do not directly translate to \sys's training workloads.
Most of these works also do not address the malicious setting.
Recent work has also explored secure learning and analytics using separate compute nodes and blockchains~\cite{froelicher2019drynx,froelicher2017unlynx}.
The setup is different from that of \sys where we assume that the data providers are malicious and are also performing and verifying the computation.

\parhead{Trusted hardware based systems}
The related work presented in the previous two sections all utilize purely software based solutions.
Another possible approach is to use trusted hardware~\cite{sgx,costan2016intel}, and there are various secure distributed systems that could be extended to the coopetitive setting~\cite{vc3,ryoan,opaque}.
However, these hardware mechanisms require additional trust and are prone to side-channel leakages~\cite{spectre,van2018foreshadow,lee2017inferring}. 
%\raluca{these are not the best citations}

\subsection{Attacks on machine learning}
\label{sec:ml:attacks}
Machine learning attacks can be categorized into data poisoning, model leakage, parameter stealing, and adversarial learning.
As mentioned in \S\ref{s:threat}, \sys tackles the problem of cryptographically running the training algorithm without sharing datasets amongst the parties involved, while defenses against these attacks are  {\em orthogonal} and complementary to our goal in this paper.
Often, these machine learning attacks can be separately addressed outside of \sys.
We briefly discuss two relevant attacks related to the training stage and some methods for mitigating them.

\parhead{Poisoning}
Data poisoning allows an attacker to inject poisoned inputs into a dataset before training~\cite{jagielski2018manipulating,chen2017targeted}.
Generally, malicious MPC does not prevent an attacker from choosing incorrect initial inputs because there is no way to enforce this requirement.
Nevertheless, there are some ways of mitigating arbitrary poisoning of data that would complement \sys's training approach.
Before training, one can  check that the inputs are confined within certain intervals.
The training process itself can also execute \emph{cross validation}, a process that can identify parties that do not contribute useful data.
After training, it is possible to further post process the model via techniques like fine tuning and parameter pruning~\cite{liu2018fine}.

\parhead{Model leakage}
Model leakage~\cite{shmatikovmachine,carlini2018secret} is an attack launched by an adversary who tries to infer information about the training data from the model itself.
Again, malicious MPC does not prevent an attacker from learning the final result.
In our coopetitive model, we also assume that all parties want to cooperate and have agreed to release the final model to everyone.

\subsection{Differential privacy}
\label{sec:diffpriv}
One way to alleviate model leakage is through the use of differential privacy~\cite{iyengartowards,abadi2016deep,duchi2013local}.
For example, one way to add differential privacy is to add carefully chosen noise directly to the output model~\cite{iyengartowards}.
Each party's noise can be chosen directly using MPC, and the final result can then be added to the final model before releasing.
In \sys, differential privacy would be added after the model is computed, but before the model release phase. 
However, there are more complex techniques for differential privacy that involve modification to the training algorithm, and integrating this into~\sys is an interesting future direction to explore.

% \textbf{Multiparty computation from threshold homomorphic encryption} The paper~\cite{cramer2001multiparty} proposes something quite similar to our idea.
% We use the paper's technique to transform encrypted ciphers into additive shares (though different additive shares to allow for efficient truncation).
% The paper proposes a new technique for multiplication using the threshold encryption scheme, but we use another MPC protocol.
% The security guarantee of this paper is honest majority, though it can be extended to malicious majority as well if the proof challenges are generated with enough randomness.

% \textbf{Others} Secure logistic regression~\cite{securelogistic}: model is a bit different -- the parties transfer their encrypted data to a centralized place for processing.
% Uses gradient descent. The possible downside is that the parties would have to repeated transfer their data if there are updates, so the communication cost is rather large.

% Secure quadratic optimization~\cite{qudraticopt}, but only in the semihonest model.

%%%%%%%%%%%%%%%%%%%%%%%%%%%%%%%%%%%%%%%%%%%%%%%%%%%%%%%%%%%%%%%%%%%%%%%%%%%%%%%%%%%%%%%%%%%%%%%%
%%%%%%%%%%%%%%%%%%%%%%%%%%%%%%%%%%%%%%%%%%%%%%%%%%%%%%%%%%%%%%%%%%%%%%%%%%%%%%%%%%%%%%%%%%%%%%%%
%%%%%%%%%%%%%%%%%%%%%%%%%%%%%%%%%%%%%%%%%%%%%%%%%%%%%%%%%%%%%%%%%%%%%%%%%%%%%%%%%%%%%%%%%%%%%%%%

\section{Conclusion}
In this paper, we propose \sys, a coopetitive system for training linear models.
Compared to prior work, \sys assumes a stronger threat model by defending against {\em malicious} participants.
This means that each party only needs to trust itself.
Compared to a baseline implemented with a state-of-the-art malicious framework, \sys is able to achieve up to five orders of magnitude of performance improvement.
Given the lack of efficient maliciously secure training protocols, we hope that our work on \sys will lead to further work on efficient systems with such strong security guarantees. 

%%%%%%%%%%%%%%%%%%%%%%%%%%%%%%%%%%%%%%%%%%%%%%%%%%%%%%%%%%%%%%%%%%%%%%%%%%%%%%%%%%%%%%%%%%%%%%%%
%%%%%%%%%%%%%%%%%%%%%%%%%%%%%%%%%%%%%%%%%%%%%%%%%%%%%%%%%%%%%%%%%%%%%%%%%%%%%%%%%%%%%%%%%%%%%%%%
%%%%%%%%%%%%%%%%%%%%%%%%%%%%%%%%%%%%%%%%%%%%%%%%%%%%%%%%%%%%%%%%%%%%%%%%%%%%%%%%%%%%%%%%%%%%%%%%

\section{Acknowledgment}
We thank the anonymous reviewers for their valuable reviews, as well as Shivaram Venkataraman, Stephen Tu, and Akshayaram Srinivasan for their feedback and discussions.
This research was supported by NSF CISE Expeditions Award CCF-1730628, as well as gifts from the Sloan Foundation, Hellman Fellows Fund, Alibaba, Amazon Web Services, Ant Financial, Arm, Capital One, Ericsson, Facebook, Google, Huawei, Intel, Microsoft, Scotiabank, Splunk and VMware.

{\footnotesize \bibliographystyle{acm}
\bibliography{ref}}

\iffull
\appendix

\section{ADMM derivations}
\label{sec:admm:derivation}
Ridge regression solves a similar problem as LASSO, except with L2 regularization.
Given dataset $(X, \vec{y})$ where $X$ is the feature matrix and $\vec{y}$ is the prediction vector, ridge regression optimizes 
$\argmin_{\vec{w}} \dfrac{1}{2} \|X \vec{w} - \vec{y}\|_{2}^{2} + \lambda \|\vec{w}\|_{2}$.
Splitting the weights into $\vec{w}$ and $\vec{z}$, we have 
\begin{align*}
&\text{minimize } \dfrac{1}{2} \| X \vec{w} - \vec{y} \|_{2}^{2} + \lambda \| \vec{z} \|_{2} \\
&\text{subject to } \vec{w} - \vec{z} = 0
\end{align*}

We first find the augmented Lagrangian 
\begin{align*}
\mathscr{L}(\vec{w}, \vec{z}, \vec{v}) &= \dfrac{1}{2} \| X \vec{w} - \vec{y}\|_{2}^{2} + \lambda \| \vec{z} \|_{2}\\
&+ \vec{v}^{T}(\vec{w} - \vec{z}) + \dfrac{\rho}{2} \| \vec{w} - \vec{z}\|_{2}^{2}
\end{align*}
where $\vec{w}$ and $\vec{z}$ are the primal weight vectors, and $\vec{v}$ is the dual weight vector.
To simply the equations, we replace $\vec{v}$ with the scaled dual variable $\vec{u}$ where $\vec{u} = (1/\rho) \vec{v}$.
The update equations come out to 
\begin{align*}
\vec{w}^{k+1} &= \argmin_{\vec{w}} \dfrac{1}{2}\| X \vec{w} - \vec{y}\|_2^2 
+ (\rho / 2) \| \vec{w} - \vec{z}^{k} + \vec{u}^{k}\|_2^2)\\
\vec{z}^{k+1} &= \argmin_{\vec{z}} \lambda \| \vec{z} \|_{2}^{2} + (\rho/2) (\vec{w}^{k+1} + \vec{z} + \vec{u}^{k})\\
\vec{u}^{k+1} &= \vec{u}^{k} + \vec{w}^{k+1} + \vec{z}^{k+1}
\end{align*}

Since our loss function is decomposable based on data blocks, we can apply the generic global variable consensus ADMM algorithm and find

\begin{align*}
\vec{w}_{i}^{k+1} &= \argmin_{\vec{w}_{i}} \dfrac{1}{2}\| X_i \vec{w} - \vec{y}\|_2^2 
+ (\rho / 2) \| \vec{w}_{i} - \vec{z}^{k} + \vec{u}^{k}\|_2^2)\\
\vec{z}^{k+1} &= \argmin_{\vec{z}} \lambda \| \vec{z} \|_{2}^{2} + (m \rho/2) \|\vec{z} - \vec{\bar{w}}^{k+1} - \vec{\bar{u}}^{k}\|_{2}^{2}\\
\vec{u}^{k+1} &= \vec{u}^{k} + \vec{w}^{k+1} - \vec{z}^{k+1}
\end{align*}

Thus, the $\vec{w}$ update is
\begin{align*}
\vec{w}_{i}^{k+1} &= (X_i^{T} X_i + \rho I)^{-1}(X_i^{T}\vec{y}_i + \rho(\vec{z}^{k} - \vec{u}_i^{k})) \\
&+ (\rho/2) \| \vec{w}_i - \vec{z}^{k} + \vec{u}_i^{k} \|_{2}^{2}\\
\vec{z}^{k+1} &= \dfrac{\rho}{2\lambda/m + \rho} (\vec{\bar{w}}^{k+1} + \vec{\bar{u}}^{k})\\
\vec{u}_i^{k+1} &= \vec{u}_i^{k} + \vec{x}_i^{k+1} - \vec{z}^{k+1}
\end{align*}

Therefore, the $\vec{w}_i$ update is the same as LASSO and can be computed using the same techniques.
The $\vec{z}$ update is actually linear and does not require comparisons, though MPC is still required for reducing the scaling factors accumulated during fixed point integer multiplications.

%%%%%%%%%%%%%%%%%%%%%%%%%%%%%%%%%%%%%%%%%%%%%%%%%%%%%%%%%%%%%%%%%%%%%%%%%%%%%%%%%%%%%%%%%%%%%%%%
%%%%%%%%%%%%%%%%%%%%%%%%%%%%%%%%%%%%%%%%%%%%%%%%%%%%%%%%%%%%%%%%%%%%%%%%%%%%%%%%%%%%%%%%%%%%%%%%
%%%%%%%%%%%%%%%%%%%%%%%%%%%%%%%%%%%%%%%%%%%%%%%%%%%%%%%%%%%%%%%%%%%%%%%%%%%%%%%%%%%%%%%%%%%%%%%%

\section{Security proofs}
\subsection{Definitions}
We first define the MPC model. 
The full definitions are taken from~\cite{cramer2001multiparty,canetti2006security}, so please refer to those papers for more details.

\noindent\textbf{Real world model.} Let $\pi$ be an $n$-party protocol.
The protocol is executed on an open broadcast network with static, active, and rushing adversary $\adv$ (rushing means that the adversaries can send their messages after seeing all of the honest parties' messages).
The number of adversaries can be a majority of the participants.
Let $\kappa$ be the security parameter.
Each party $P_i$ has public input $x_i^{p}$ and secret input $x_i^{s}$, as well as public output $y_i^{p}$ and secret output $y_i^{s}$.
The adversary $\adv$ receives the public input and output of all parties.

Let $\vec{x} = (x_1^{s}, x_1^{p}, \dots, x_n^{s}, x_n^{p})$ be the parties' input, 
and let $\vec{r} = (r_1, \dots, r_n, r_{\adv{}})$ be the parties' and the adversary's private input randomness. 
Let $C \subset M$ be the corrupted parties, and let $a \in \{0, 1\}^{*}$ be the adversary's auxiliary input.
Let $H \subset M$ be the honest parties.
Therefore, we have that $H + C = M$.

By $\text{ADVR}_{\pi, \adv}(\kappa, \vec{x}, C, a, \vec{r})$ and $\text{EXEC}_{\pi, \adv}(\kappa, \vec{x}, C, a, \vec{r})_i$ we denote the output of the adversary $\adv$ and the output of party $P_i$, respectively, after a real world execution of $\pi$ with the given input under attack from $\adv$.
%
%Let $\Gamma$ be a monotone adversary structure.
Let
\begin{align}
\text{EXEC}_{\pi, \adv}(\kappa, \vec{x}, C, a, \vec{r}) = &(\text{ADVR}_{\pi, \adv}(k, \vec{x}, C, a, \vec{r}), \\
& \text{EXEC}_{\pi, \adv}(\kappa, \vec{x}, C, a, \vec{r})_1,\\
&\dots,\\
& \text{EXEC}_{\pi, \adv}(\kappa, \vec{x}, C, a, \vec{r})_n)
\end{align}
This is simply the union of the different parties' and the adversary's real world output distribution.
Denote by $\text{EXEC}_{\pi, \adv}(\kappa, \vec{x}, C, a)$ the random variable $\text{EXEC}_{\pi, \adv}(k, \vec{x}, C, a, \vec{r})$, where $\vec{r}$ is chosen uniformly random.
We define the distribution ensemble with security parameter $\kappa$ and index $(\vec{x}, C, a)$ by 
\begin{align}
\text{EXEC}_{\pi, \adv} = \{ \text{EXEC}_{\pi, \adv}(\kappa, \vec{x}, C, a, \textbf{r})_i\}_{\kappa \in M, \vec{x} \in (\{0, 1\}^{*})^{2n}, a \in \{0, 1\}^{*}, i \in H}
\end{align}
%where $\kappa \in N, \vec{x} \in (\{0, 1\}^{*})^{2n}, a \in \{0, 1\}^{*}$.
%\wz{Define honest parties, $i$ should be in the set of honest parties}

\noindent\textbf{Ideal world model.}
Let $f: M \times (\{0, 1\}^{*})^{2n} \times \{0, 1\}^{*} \rightarrow (\{0, 1\}^{*})^{2n}$ be a probabilistic $n$-party function computable in probabilistic polynomial time (PPT).
The inputs and outputs are defined as $(y_1^{s}, y_1^{p}, \dots, y_n^{s}, y_n^{p}) \leftarrow f(\kappa, x_1^{s}, x_1^{p}, \dots, x_n^{s}, x_n^{p}, r)$, where $\kappa$ is the security parameter and $r$ is the random input.
In the ideal world, the parties send their inputs to a trusted third party $T$ that chooses a uniformly random $r$, computes $f$ on these inputs and returns $(y_i^{s}, y_i^{p})$ to $P_i$.

The active static ideal world adversary $\adv_{I}$ sees all $x_i^{p}$ values, as well as $x_i^{s}$ for all corrupted parties.
The adversary then substitutes the values $(x_i^{s}, x_i^{p})$ for the corrupted parties by values of his choice $(x_i^{s'}, x_i^{p'})$.
We set $(x_i^{s'}, x_i^{p'}) = (x_i^{s}, x_i^{p})$ for the honest parties.
The ideal function $f$ is evaluated on $(\kappa, x_1^{s'}, x_1^{p'}, \dots, x_n^{s'}, x_n^{p'}, r)$ via an oracle call.
Each party receives output $(y_i^{s}, y_i^{p})$, and the adversary sees $y_i^{p}$ for all parties as well as $y_i^{s}$ for all corrupted parties.

Similar to the real world execution, we define $\text{ADVR}_{\pi, \adv_i}(\kappa, \vec{x}, C, a, \vec{r})$ and $\text{IDEAL}_{\pi, \adv_i}(\kappa, \vec{x}, C, a, \vec{r})_i$ we denote the output of the adversary $\adv_i$ and the output of party $P_i$, respectively, after an ideal world execution with the given input under attack from $\adv_i$.
Let
\begin{align}
\text{IDEAL}_{f, \adv_{I}}(\kappa, \vec{x}, C, a, \vec{r}) = (&\text{ADVR}_{f, \adv_{I}}(\kappa, \vec{x}, C, a, \vec{r}), \\
&\text{IDEAL}_{f, \adv_{I}}(\kappa, \vec{x}, C, a, \vec{r})_1,\\
&\text{IDEAL}_{f, \adv_{I}}(\kappa, \vec{x}, C, a, \vec{r})_2,\\
&\dots,\\
&\text{IDEAL}_{f, \adv_{I}}(\kappa, \vec{x}, C, a, \vec{r})_n)
\end{align}
denote the collective output distribution of the parties and the adversary.
Define a distribution ensemble by 
\begin{align}
\text{IDEAL}_{f, \adv_I} = \{ \text{IDEAL}_{f, \adv_I}(\kappa, \vec{x}, C, a)_i \}_{\kappa \in N, \vec{x} \in (\{0, 1\}^{*})^{2n}, a \in \{0, 1\}^{*}, i \in H}
\end{align}

\noindent\textbf{Hybrid model.}
In the $(g_1, \dots, g_l)$-hybrid model, the execution of a protocol $\pi$ proceeds in the real-life model, except that the parties have access to a trusted party $T$ for evaluating the $n$-party functions $g_1, \dots, g_l$.
These ideal evaluations proceed as in the ideal world model.
The distribution ensemble is 
\begin{align}
\text{EXEC}_{\pi, \adv}^{g_1, \dots, g_l} = \{ \text{EXEC}_{\pi, \adv}^{g_1, \dots, g_l}(\kappa, \vec{x}, C, a)_i \}_{\kappa \in N, \vec{x} \in (\{0, 1\}^{*})^{2n}, a \in \{0, 1\}^{*}, i \in H}
\end{align}

Security can be defined by requiring a real world execution or a $(g_1, \dots, g_l)$-hybrid execution of a protocol $\pi$ for computing an ideal functionality $f$ to reveal no more information to an adversary than does an ideal execution of $f$.
We can define the real world model by the $()$-hybrid model.

\begin{mydef}
Let $f$ be an $n$-party function, let $\pi$ be an $n$-party protocol
%and let $\Gamma$ be a monotone adversary structure for $n$ parties. \raluca{define monotone or point to a reference}
We say that $\pi$ securely evaluates $f$ in the $(g_1, \dots, g_l)$-hybrid model if for any active static $(g_1, \dots, g_l)$-hybrid adversary $\adv$, which corrupts only subsets of $C$, there exists a static active ideal-model adversary $\mathscr{S}$ such that $\text{IDEAL}_{f, \mathscr{S}} \approx_{c} \text{EXEC}_{\pi, \adv}^{g_1, \dots, g_l}$.
\end{mydef}

Finally, we utilize the modular composition operation that was defined in~\cite{canetti2006security}.
The modular composition theorem (informally) states that if $\pi$ $\Gamma$-securely evaluates $f$ in the $(g_1, \dots, g_l)$-hybrid model and $\pi_{g_i}$ $\Gamma$-securely evaluates $g_i$ in the $(g_1, \dots, g_{i-1}, g_{i+1}, \dots, g_l)$-hybrid model, 
then the protocol $\pi^{'}$, which follows protocol $\pi$ except with oracle calls to $g_i$ replaced by executions of the protocol $\pi_{g_i}$, $\Gamma$-securely evaluates $f$ in the $(g_1, \dots, g_{i-1}, g_{i+1}, \dots, g_l)$-hybrid model.

Next, we describe some essential lemmas and existing protocols that we use.

\begin{theorem}[Schwartz-Zippel]
\label{th:schwartz:zippel}
Let $P \in F[x_1, x_2, \dots, x_n]$ be a non-zero polynomial of total degree $d > 0$ over a field $F$.
Let $S$ be a finite subset of $F$ and let $r_1, r_2, \dots, r_n$ be selected at random independently and uniformly from $S$.
Then $\Pr[P(r_1, r_2, \dots, r_n) = 0] \leq \dfrac{d}{|S|}$.
\end{theorem}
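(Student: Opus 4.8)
The plan is to prove the bound by induction on the number of variables $n$, which is the standard route for a root-counting statement of this form over a field.

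For the base case $n = 1$, I would invoke the fundamental fact that a non-zero univariate polynomial of degree $d$ over a field $F$ has at most $d$ roots. Since $r_1$ is drawn uniformly from the finite set $S$, the probability that $r_1$ falls among these at most $d$ roots is at most $d/|S|$, which is exactly the claim.

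For the inductive step, I would assume the statement holds for every non-zero polynomial in fewer than $n$ variables, and write $P$ as a polynomial in $x_1$ whose coefficients are polynomials in the remaining variables:
$$P(x_1, \dots, x_n) = \sum_{i=0}^{k} x_1^i \, P_i(x_2, \dots, x_n),$$
where $k$ is the largest power of $x_1$ that actually appears, so the leading coefficient $P_k$ is a non-zero polynomial of total degree at most $d - k$. The key move is to split on whether this leading coefficient vanishes at the random point. By the inductive hypothesis applied to $P_k$ (non-zero, in $n-1$ variables, total degree at most $d-k$), we have $\Pr[P_k(r_2, \dots, r_n) = 0] \leq (d-k)/|S|$. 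Conditioned on the complementary event $P_k(r_2, \dots, r_n) \neq 0$, the univariate polynomial $P(x_1, r_2, \dots, r_n)$ has degree exactly $k$ and is therefore non-zero, so by the base-case argument it has at most $k$ roots, giving conditional failure probability at most $k/|S|$.

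Combining the two pieces by the law of total probability (a union bound over the event $P_k = 0$ and the conditional event $P = 0$ given $P_k \neq 0$) yields
$$\Pr[P = 0] \leq \Pr[P_k = 0] + \Pr[P = 0 \mid P_k \neq 0] \leq \frac{d-k}{|S|} + \frac{k}{|S|} = \frac{d}{|S|},$$
completing the induction. The main obstacle to state carefully is the conditioning step: I must use the independence of $r_1$ from $(r_2, \dots, r_n)$ so that the at-most-$k$-roots bound for a fixed univariate polynomial translates into a genuine conditional probability over $r_1$, and I must verify the degree accounting $\deg P_k \le d - k$ so that the two error terms sum to exactly $d/|S|$ rather than something larger.
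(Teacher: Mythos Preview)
Your proof is correct and is the standard induction argument for the Schwartz--Zippel lemma. The paper, however, does not give its own proof of this statement: it simply states the lemma as a known result and then invokes it as a tool in the soundness analysis of the matrix-multiplication proof, so there is no paper proof to compare against.
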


\begin{lemma}[Smudging lemma]
\label{lemma:smudging}
Let $B_1 = B_1(\kappa)$, and $B_2 = B_2(\kappa)$ be positive integers and let $\epsilon_1 \in [-B_1, B_1]$ be a fixed integer.
Let $e_2 \in_{R} [-B_2, B_2]$ be chosen uniformly at random. 
Then, the distribution of $e_2$ is statistically indistinguishable from that of $e_2 + e_1$ as long as $B_1 / B_2 = \text{neg}(\kappa)$.
\end{lemma}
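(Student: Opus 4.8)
The plan is to bound the statistical (total-variation) distance between the two distributions directly and show it is at most $B_1/(2B_2)$, which is negligible by hypothesis. Write $e_1 = \epsilon_1$ for the fixed shift. Both random variables are integer-valued: $e_2$ is uniform over the $2B_2+1$ integers in $[-B_2, B_2]$, and $e_2 + e_1$ is uniform over the $2B_2+1$ integers in $[-B_2 + e_1, B_2 + e_1]$. The key observation I would exploit is that both are \emph{uniform over supports of exactly the same cardinality} $2B_2+1$; the second support is merely a translate of the first by $e_1$. Hence each assigns mass $1/(2B_2+1)$ to every point of its support, and on the overlap of the two supports the two densities coincide and therefore cancel in the total-variation sum.

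Concretely, I would recall that for integer-valued distributions $D, D'$ the statistical distance is $\Delta(D, D') = \tfrac12 \sum_x |D(x) - D'(x)|$, and then reduce to the case $e_1 \ge 0$ by symmetry (the case $e_1 < 0$ is identical after reflecting the intervals). For $e_1 \ge 0$ the two supports overlap exactly on $[-B_2 + e_1, B_2]$, where both densities equal $1/(2B_2+1)$ and contribute $0$ to the sum. The symmetric difference consists of the $e_1$ integers in $[-B_2, -B_2 + e_1 - 1]$, carried only by $e_2$, and the $e_1$ integers in $[B_2 + 1, B_2 + e_1]$, carried only by $e_2 + e_1$, each carrying mass $1/(2B_2+1)$. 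Summing these contributions gives
\begin{equation}
\Delta(e_2,\, e_2 + e_1) = \frac{1}{2} \cdot \frac{2 e_1}{2B_2 + 1} = \frac{e_1}{2B_2 + 1} \le \frac{B_1}{2B_2},
\end{equation}
using $|e_1| \le B_1$. Since $B_1/B_2 = \text{neg}(\kappa)$ by assumption, this distance is negligible in $\kappa$, which is precisely statistical indistinguishability.

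There is no genuinely hard step here; the only thing requiring care is the off-by-one boundary bookkeeping when counting the integers in the symmetric difference and verifying that the overlap contributes nothing. As a sanity check I would also note that the identical argument runs verbatim for continuous uniform distributions on real intervals of length $2B_2$, yielding $\Delta = e_1/(2B_2) \le B_1/(2B_2)$; the discrete and continuous versions differ only in the harmless $+1$ in the denominator. Either way the bound scales as $B_1/B_2$, so the hypothesis $B_1/B_2 = \text{neg}(\kappa)$ is exactly what is needed for the conclusion, and no cryptographic assumption is invoked — the indistinguishability is information-theoretic.
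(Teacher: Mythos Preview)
Your proof is correct and is exactly the standard direct computation of total-variation distance between two uniform distributions with equal-cardinality translated supports. The paper itself does not prove this lemma; it is stated as a known result (the ``smudging lemma'') and simply invoked later in the security proofs, so there is no paper proof to compare against. Your argument is the canonical one, and the bookkeeping (the $2|e_1|$ points in the symmetric difference, each of mass $1/(2B_2+1)$) is handled correctly.
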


\cref{lemma:smudging} is used for arguing statistical indistinguishability between two distributions.

Next, we list three existing zero-knowledge proofs that serve as building blocks in our system.
They are all $\Sigma$ protocols~\cite{damgaard2002sigma}, which assume that the verifier is honest.
However, they can be transformed into full zero knowledge, as we explain in detail later.
Since this is taken from existing literature, we will not re-derive the simulators here, and instead assume access to simulators for all three protocols.

\begin{protocol}[Paillier proof of plaintext knowledge]
\label{prot:paillier:pok}
A protocol for proving plaintext knowledge of a Paillier ciphertext $\enc{a}$~\cite{cramer2001multiparty}.
\end{protocol}

\begin{protocol}[Paillier multiplication proof] 
\label{prot:paillier:mult}
A protocol for: given $\enc{\alpha}, \enc{a}, \enc{b}$, prove that $\enc{b}$ indeed encrypts $\alpha \cdot a$ and that the prover has plaintext knowledge of $\enc{\alpha}$~\cite{cramer2001multiparty}.
\end{protocol}

\begin{protocol}[Encryption interval proof]
\label{prot:interval:proof}
An efficient interval proof that proves an encrypted value lies within an interval.
\end{protocol}

Note that in order to construct~\cref{prot:interval:proof}, we combine two existing protocols together.
The first protocol is an interval proof for a committed value~\cite{boudot2000efficient}.
The second protocol is an additional proof that proves the equality of plaintexts under a commitment and an encryption~\cite{damgaard2002client}.

To turn our honest verifier proofs into full zero knowledge (as well as non-malleable and concurrent), we utilize an existing transformation~\cite{garay2003strengthening}.
%
% \begin{mydef}
% An $\Omega$ protocol for a relation $R = \{ (x, w) \}$ and CRS $\sigma$, is a $\Sigma$ protocol with two additional properties.
% First, a common reference string is drawn from a distribution $D_k$ and the functions of the $\Sigma$ protocol also take $\sigma$ as additional input ($k$ is the security parameter).
% Second, there exists a polynomial time extractor $\varepsilon = (\varepsilon_1, \varepsilon_2)$ such that the reference string output by $\varepsilon_1(1^k)$ is statistically indistinguishable from $D_k$.
% Given $(\sigma, \tau) \leftarrow \varepsilon_1(1^k)$, if there exist two accepting conversations $(a, c, z)$ and $(a^{'}, c^{'}, z^{'})$ with $c \neq c^{'}$ for some given $x \in L_{R}$, then $\varepsilon_2(x, \tau, (a, c, z))$ outputs $w$ such that $(x,w) \in R$.
% \end{mydef}
%
Informally, the transformation does two things.
First, the random challenge in a $\Sigma$ protocol is first generated by the verifier giving a challenge, then the prover proves an OR protocol given this challenge.
The OR protocol consists of the actual statement to be proved, plus a signature proof.
A simulator can simply simulate the second protocol in the OR protocol, instead of the main proof.
Second, the witness for the proof is extracted via encrypting it under a public key generated from the common reference string.
During the simulation, the simulator is able to generate the encryption key parameters, and can thus decrypt the encryption and extract an adversary's input.
This means no rewinding is needed either for simulation or extraction.
We denote the parameter generation functionality as $f_{\text{crs}}$.
In \sys's design, we transform a $\Sigma$ protocol $S$ using this method, then use the simulator $S_{zk}(S)$ and extractor $E_{zk}(S)$ in our proofs for simulation and extraction of the adversary's secrets.

Finally, we assume that we have access to a threshold encryption scheme that can provably decrypt a ciphertext for our threshold structure.
The scheme is described in~\cite{fouque2000sharing}, and we do not provide further proofs for this protocol.
In our MPC simulation, we assume that we have a simulator $S_{dec}$ for the decryption protocol.

\subsection{Proofs}
\label{sec:proofs}

\begin{theorem}
\label{proof:matrix:pok}
Protocol described in~\cref{prot:paillier:matmul:proof} is an honest verifier zero knowledge proof of plaintext knowledge for a committed matrix $\enc{\matr{X}}$.
\end{theorem}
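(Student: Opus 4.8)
The plan is to verify the three ingredients of an honest-verifier zero-knowledge proof of plaintext knowledge --- completeness, knowledge-soundness (an extractor that recovers the committed $\matr{X}$), and honest-verifier zero knowledge --- and, as a by-product of soundness, the multiplication relation $\dec{E_\matr{Z}} = \dec{E_\matr{X}}\cdot\dec{E_\matr{Y}}$. Completeness is immediate: for an honest prover holding $\matr{X}$ with $\matr{Z}=\matr{X}\matr{Y}$ and any challenge $t$, the reduced claim $\enc{\vec{t}\matr{Z}} = \enc{\vec{t}\matr{X}}\enc{\matr{Y}}$ holds, every Paillier multiplication sub-proof (\cref{prot:paillier:mult}) on an honestly formed product ciphertext accepts, the verifier's homomorphic summation reproduces $\enc{\vec{t}\matr{X}\matr{Y}}$ entrywise, and the $a^*=1$ equality proofs accept; so completeness reduces to completeness of the underlying $\Sigma$-protocols.

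For knowledge-soundness I would build an extractor by the usual rewinding/forking technique. Given a prover that convinces the verifier with non-negligible probability, extract $d$ accepting transcripts on $d$ pairwise-distinct nonzero challenges $t_1,\dots,t_d$ (feasible with overwhelming probability since the challenge set $\mathds{Z}_q$ has size $q\gg d$). In each transcript the product sub-proofs are proofs of plaintext knowledge, so the extractor of \cref{prot:paillier:mult} yields, for every column index $k$, the plaintext of the homomorphically derived ciphertext $\enc{(\vec{t}_\ell\matr{X})_k}$; by perfect binding of Paillier this value equals $\sum_i t_\ell^{\,i}\,\matr{X}_{ik}$, where $\matr{X}$ is the unique plaintext of $\enc{\matr{X}}$. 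Collecting these across $\ell=1,\dots,d$ gives, for each $k$, a linear system whose coefficient matrix $(t_\ell^{\,i})_{\ell,i}$ has determinant $\prod_\ell t_\ell\cdot\prod_{\ell<\ell'}(t_{\ell'}-t_\ell)\neq 0$ over $\mathds{Z}_q$; inverting it recovers every entry $\matr{X}_{ik}$, hence the full committed matrix, establishing the proof-of-knowledge property.

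Soundness of the relation is the step I expect to be the most delicate to phrase, since it must be chained through the sub-proofs and the homomorphic recombination. Let $\matr{X},\matr{Y},\matr{Z}$ be the unique plaintexts of $\enc{\matr{X}},\enc{\matr{Y}},\enc{\matr{Z}}$ and suppose $D:=\matr{Z}-\matr{X}\matr{Y}\neq\matr{0}$. In any accepting transcript, soundness of each multiplication sub-proof forces the released ciphertexts to encrypt the true products $(\vec{t}\matr{X})_k\cdot\matr{Y}_{kj}$ (up to negligible error), so the verifier's homomorphic sum encrypts $\vec{t}\matr{X}\matr{Y}$, and the $a^*=1$ equality proofs force $\vec{t}\matr{Z}=\vec{t}\matr{X}\matr{Y}$, i.e. $\vec{t}D=\vec{0}$. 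Picking a nonzero column $D_{\cdot j_0}$, the scalar $\vec{t}D_{\cdot j_0}=\sum_i t^{\,i}D_{ij_0}$ is a nonzero univariate polynomial of degree at most $d$, so by Schwartz--Zippel (\cref{th:schwartz:zippel}) it vanishes for a uniform $t\in\mathds{Z}_q$ with probability at most $d/q$; hence a cheating prover is caught except with negligible probability.

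For honest-verifier zero knowledge I would give a simulator $S$ that, on input the statement $(\pk,E_\matr{X},E_\matr{Y},E_\matr{Z})$ and the honest verifier's challenge $t$, computes the public ciphertexts $\enc{(\vec{t}\matr{X})_k}$ from $E_\matr{X}$, replaces each product ciphertext by a fresh encryption of $0$, runs the HVZK simulators of \cref{prot:paillier:mult} on the resulting (possibly false) sub-statements, homomorphically sums the dummy ciphertexts, and finally runs the simulator for the $a^*=1$ equality proofs. A hybrid argument --- first switching the product encryptions from $\enc{P_{kj}}$ to $\enc{0}$ one at a time using the IND-CPA security of (threshold) Paillier, then switching each sub-transcript from real to simulated using HVZK of the $\Sigma$-protocols --- shows the simulated transcript is computationally indistinguishable from a real one, with negligible total advantage since only $O(d^2)$ hybrids are involved. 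I would close by noting that this $\Sigma$-protocol can be lifted to full, non-malleable, concurrent zero knowledge by the standard transformation already cited in the paper.
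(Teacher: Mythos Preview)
Your argument is essentially correct, but it proves much more than the paper does for this particular theorem and takes a genuinely different route. In the paper, this theorem is treated as an almost trivial lemma: the matrix proof of knowledge is nothing more than an element-wise application of the single-ciphertext Paillier proof of plaintext knowledge (\cref{prot:paillier:pok}); completeness, special soundness, and honest-verifier zero knowledge are inherited entry-by-entry from the underlying $\Sigma$-protocol of Cramer--Damg{\aa}rd--Nielsen, and the transformation to full zero knowledge is delegated to $S_{zk}$ and $E_{zk}$. The multiplication relation, the Schwartz--Zippel step, and the simulation of the product ciphertexts are all deferred to the next theorem (\cref{proof:paillier:matmul:proof}), so everything you wrote about $\matr{Z}=\matr{X}\matr{Y}$ and about simulating the released products is, from the paper's point of view, a proof of the wrong statement.

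The interesting difference is in how you extract $\matr{X}$. The paper simply runs the standard extractor on each entry of $\enc{\matr{X}}$. You instead extract the values $(\vec{t}_\ell\matr{X})_k$ from the multiplication sub-proofs for $d$ distinct challenges and invert a Vandermonde system. That is a legitimate alternative and has the pleasant feature that it does not require any separate per-entry PoK on $\matr{X}$: knowledge of $\matr{X}$ falls out of the multiplication proofs themselves. The cost is that your extractor must rewind across the outer challenge $t$, whereas the paper's element-wise extractor rewinds only inside each scalar sub-proof. One small fix: in your HVZK hybrid you should swap the order---first replace each real sub-transcript by its $\Sigma$-protocol simulation (so the transcript no longer depends on the witness), and only then swap the product ciphertexts to encryptions of $0$ using IND-CPA; in the order you wrote, the ``real'' proofs in the intermediate hybrid would be proofs of false statements and cannot be produced.
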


\begin{proof}[Proof sketch]
The Paillier ciphertext proof of plaintext knowledge is a $\Sigma$ protocol.
Correctness, soundness, and simulation arguments are given in~\cite{cramer2001multiparty}.
Using $S_{zk}$ and $E_{zk}$, we can achieve full zero knowledge in the concurrent setting by simply proving knowledge for each element in the matrix using the ciphertext proof of plaintext knowledge.
\end{proof}

\begin{theorem}
\label{proof:paillier:matmul:proof}
\cref{gadget:knowsone}, with applied transformation from~\cite{garay2003strengthening}, is a zero knowledge argument for proving the following: given a public committed $\enc{\matr{X}}$, an encrypted $\enc{\matr{Y}}$, and $\enc{\matr{Z}}$ prove that the prover knows $\matr{X}$ and that $Z = XY$ under standard cryptographic assumptions.
\end{theorem}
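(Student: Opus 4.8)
The plan is to establish the three standard properties of a zero-knowledge argument of knowledge for the relation that a public commitment $E_\matr{X}$ hides some $\matr{X}$ with $\dec{E_\matr{Z}} = \matr{X}\cdot\dec{E_\matr{Y}}$: completeness, (computational) soundness together with a knowledge extractor for $\matr{X}$, and zero-knowledge after the transformation of \cite{garay2003strengthening}. The protocol of \cref{gadget:knowsone} is a composition of three ingredients already available to us: (i) the public, deterministic ``random linear combination'' reduction that replaces $\matr{Z} \stackrel{?}{=} \matr{X}\matr{Y}$ by $\vec{t}\matr{Z} \stackrel{?}{=} (\vec{t}\matr{X})\matr{Y}$ for $\vec{t} = (t, t^2, \dots, t^d)$ with a verifier-chosen $t$; (ii) the Paillier ciphertext multiplication $\Sigma$-protocol (\cref{prot:paillier:mult}), applied to every scalar product appearing in the $d$ dot products of $(\vec{t}\matr{X})\matr{Y}$, which certifies each product and extracts the left plaintext; and (iii) homomorphic addition of the certified product ciphertexts followed by a final instance of \cref{prot:paillier:mult} with $a^\ast = 1$ to certify $\vec{t}\matr{Z} = (\vec{t}\matr{X})\matr{Y}$ entrywise. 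Full (concurrent, non-malleable, straight-line-extractable) zero-knowledge is then obtained by pushing each $\Sigma$-subprotocol through the transformation recorded in the preliminaries, using $S_{zk}(\cdot)$ and $E_{zk}(\cdot)$.

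\textbf{Completeness and zero-knowledge.} Completeness is immediate: the reduction and the homomorphic recombination of the claimed product ciphertexts are deterministic public functions of the ciphertexts, and an honest prover supplies product ciphertexts and transcripts satisfying the Paillier multiplication relation by construction, so every check passes. For zero-knowledge, each component is a $\Sigma$-protocol with an honest-verifier simulator from \cite{cramer2001multiparty}; the simulator for the whole gadget runs these in sequence on the (public) reduced ciphertexts, and the $S_{zk}(\cdot)$ transformation upgrades honest-verifier simulation to full zero-knowledge without rewinding. Since only the ciphertexts---themselves computationally hiding commitments \cite{groth2009homomorphic}---and simulated transcripts are revealed, nothing about $\matr{X}$ leaks.

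\textbf{Soundness and extraction.} This is the crux. Fix a cheating prover that makes the verifier accept with non-negligible probability $\varepsilon$. Running $E_{zk}(\cdot)$ on the multiplication subproofs extracts the left plaintexts of all certified products; in particular it yields a matrix $\matr{X}^\ast$ with $\dec{E_\matr{X}} = \matr{X}^\ast$, unique by perfect binding of Paillier-as-commitment, together with the plaintext summands of each dot product. Soundness of \cref{prot:paillier:mult} forces the homomorphically recombined ciphertext to decrypt to exactly $(\vec{t}\matr{X}^\ast)\dec{E_\matr{Y}}$, and the final $a^\ast=1$ proof forces $\vec{t}\dec{E_\matr{Z}} \equiv (\vec{t}\matr{X}^\ast)\dec{E_\matr{Y}} \pmod{N}$ for the challenge $t$ used. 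Viewing the entries of $\matr{D} \coloneqq \dec{E_\matr{Z}} - \matr{X}^\ast\dec{E_\matr{Y}}$ as fixed integers, each entry of $\vec{t}\matr{D}$ is a univariate polynomial in $t$ of degree at most $d$; if $\matr{D} \neq \matr{0}$ some entry is a nonzero polynomial, so by \cref{th:schwartz:zippel} a uniformly random challenge from a set $S$ is a root with probability at most $d/|S|$. Hence acceptance with probability $\varepsilon$ forces $\matr{D} = \matr{0}$, i.e.\ the extracted $\matr{X}^\ast$ is a genuine witness, up to an additive $d/|S| + \mathrm{negl}$ loss; composing the three gadget steps and invoking modular composition (with $f_{\text{crs}}$ supplying the extraction trapdoor) finishes the argument.

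\textbf{Main obstacle.} I expect the delicate point to be matching moduli in the Schwartz--Zippel step: the multiplication proofs certify equalities only modulo the Paillier modulus $N$, whereas $t$ is drawn modulo the much smaller $|S|$, so one must argue that the fixed-point magnitudes of the entries of $\matr{D}$ and of all intermediate partial sums stay bounded well below both $N$ and $|S|$, so that ``$\equiv 0 \pmod N$'' together with ``the polynomial vanishes at a random point mod $|S|$'' really implies ``$= 0$ over $\mathbb{Z}$''. This is exactly where the interval/range proofs on the committed inputs and on the intermediate ciphertexts are invoked, and it is the part of the proof that must be written with care; the remaining work---sequential composition of the $\Sigma$-subprotocols and straight-line extraction from the CRS-embedded encryption so that no rewinding is needed---is routine given the machinery already set up.
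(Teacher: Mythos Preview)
Your proposal is correct and follows essentially the same route as the paper: Schwartz--Zippel for the random-linear-combination reduction, special soundness of the Paillier ciphertext-multiplication $\Sigma$-protocol for the individual products and for the final $a^\ast=1$ equality check, and per-subprotocol simulators composed and then upgraded to full zero-knowledge via the transformation of \cite{garay2003strengthening}. The one point worth flagging is your ``main obstacle'': the paper does not invoke interval/range proofs inside this gadget to handle the modulus-matching issue, but instead resolves the no-wraparound concern purely by sizing the challenge domain, taking $\vec{t}\in[0,2^{l}]$ with $l = |n| - 2|p| - \log M$ so that all intermediate sums remain below the Paillier modulus and equality mod $N$ coincides with equality over $\mathbb{Z}$; the interval proofs live in the input-preparation phase and in Gadgets~3 and~4, not in \cref{gadget:knowsone} itself.
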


\begin{proof}[Proof sketch]
To prove this theorem, we first prove the security of the honest verifier version of this protocol.
The argument itself contains several parts.
The first is a proof of plaintext knowledge of the matrix $\matr{X}$, which is applied straightforwardly from~\cref{proof:matrix:pok}.
This allows us to extract the content of the commitment.
The second is a matrix multiplication proof, which consists of a reduction and ciphertext multiplication proofs.

Completeness is straightforward to see since we simply follow the computation of matrix multiplication, except that $\matr{Y}$ is encrypted.
If $\enc{\matr{Z}} = \matr{X} \enc{\matr{Y}}$, then $\vec{t} \enc{\matr{Z}} = \vec{t} \matr{X} \enc{\matr{Y}}$.
Soundness can be proved in two steps.
The first step utilizes the Schwartz-Zippel lemma~\cref{th:schwartz:zippel}.
Given a random vector $\vec{t} \in [0, 2^{l}]$ where $l = |n| - 2|p| - \log{M}$, we will verify that $\vec{t} \matr{Z} = \vec{t} \matr{X} \matr{Y}$.
This step reduces the problem to verifying matrix-vector multiplication instead of matrix-matrix multiplication.
Using the lemma, we can view this as a multivariable polynomial equality testing problem.
Therefore, an inequality will correctly pass with probability $d / |S|$.
Since the commitments to $\matr{X}$, $\matr{Y}$, and $\matr{Z}$ are homomorphic, the prover and the verifier can calculate $\enc{\vec{t} \matr{X}}$ and $\enc{\vec{t} \matr{Z}}$ independently.
The second step of the soundness argument comes from the fact that after this transformation, we execute individual ciphertext multiplication proofs, which are $\Sigma$ protocols themselves.
These $\Sigma$-protocols are used for the individual products of a dot product (these are proved independently), as well as for proving the summation of these individual products.
The summation itself can be computed by the verifier directly via the homomorphic properties of the ciphertexts.
Then the summation proof is another $\Sigma$-protocol.
$\Sigma$-protocols satisfy the special soundness property~\cite{damgaard2002sigma}, which means that a cheating verifier can cheat with probability $2^{-t}$, where $t$ indicates the length of the challenge.
Therefore, the probability of cheating is overall negligible via union bound.
To simulate the matrix multiplication proofs, we first use the randomness on the verifier input tape to construct $\vec{t}$.
Assuming that the simulator for~\cref{prot:paillier:mult} is $S_{mult}$, we then use the challenges for the ciphertext multiplication protocols and feed each into the simulator $S_{mult}$ that simulates the $\Sigma$-protocol for ciphertext multiplication.
% To begin the simulation, $S_{mult}$ chooses $w$, $y$, $z$ and $e$ at random.
% Then, it computes $A = C_a^{w} y^{n} D^{-e} \mod n^2$ and $B = g^{w} z^{n} C_{\alpha}^{-e} \mod n^2 $.
%
Finally, to make this entire argument full zero knowledge, we apply the protocol transformation from~\cite{garay2003strengthening}.
\end{proof}

% \wz{Question here: we seem to end up with a bunch of challenges. In the actual proofs, do we need to use the generalized forking lemma? The thing is, these $\Sigma$ protocols are proved independently from each other.}

\begin{theorem}
\label{proof:paillier:plaintext:matmul:proof}
\cref{gadget:knowall}, with applied transformation from~\cite{garay2003strengthening}, is a zero knowledge argument for proving the following: given committed matrices $\enc{\matr{X}}$, $\enc{\matr{Y}}$, and $\enc{\matr{Z}}$, prove that $\matr{X} \matr{Y} = \matr{Z}$ and that the prover knows the committed values $X$ and $Y$ under standard cryptographic assumptions.
\end{theorem}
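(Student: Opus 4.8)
The plan is to mirror the structure of the proof of \cref{proof:paillier:matmul:proof} for \cref{gadget:knowsone}, since \cref{gadget:knowall} is essentially the same protocol augmented with an additional proof of knowledge of $\matr{Y}$. First I would establish security of the honest-verifier variant, and then lift it to full (concurrent, non-malleable) zero knowledge by applying the transformation of~\cite{garay2003strengthening}, reusing the simulator $S_{zk}$ and extractor $E_{zk}$ exactly as in the earlier proof.

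For the honest-verifier argument I would decompose the protocol into three components. The first is a proof of plaintext knowledge of $\matr{X}$; by \cref{proof:matrix:pok} this is obtained by running the Paillier proof of plaintext knowledge element-wise, which lets the extractor recover $\matr{X}$. The second component, the only new ingredient relative to \cref{gadget:knowsone}, is the same element-wise proof of knowledge applied to $\matr{Y}$ so that the extractor also recovers $\matr{Y}$; concretely I would invoke the proof-of-knowledge portion of \cref{prot:paillier:mult} on each entry of $\matr{Y}$. The third is the matrix-multiplication proof itself: the verifier sends a random challenge $\vec{t}$ with $\vec{t}_i = t^i \bmod q$, both parties homomorphically form $\enc{\vec{t}\matr{X}}$ and $\enc{\vec{t}\matr{Z}}$, and the prover proves $\vec{t}\matr{Z} = (\vec{t}\matr{X})\matr{Y}$ via per-coordinate dot-product proofs built from \cref{prot:paillier:mult}, with the verifier summing the encrypted partial products homomorphically and a final $\Sigma$-protocol tying the sum to $\enc{\vec{t}\matr{Z}}$. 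Completeness is immediate from the algebra of matrix multiplication. Soundness follows in two steps exactly as in \cref{proof:paillier:matmul:proof}: the Schwartz--Zippel lemma (\cref{th:schwartz:zippel}) bounds the probability that $\matr{X}\matr{Y} \neq \matr{Z}$ yet $\vec{t}\matr{Z} = \vec{t}\matr{X}\matr{Y}$ by $d/|S|$, and the special soundness of the underlying $\Sigma$-protocols bounds the cheating probability of each ciphertext-multiplication and summation proof by $2^{-t}$; a union bound over all $O(d^2)$ sub-proofs gives negligible total soundness error. Knowledge of $\matr{Z}$ needs no separate sub-protocol: once $\matr{X}$ and $\matr{Y}$ have been extracted and the relation $\matr{X}\matr{Y}=\matr{Z}$ has been shown to hold, the perfect binding of the Paillier-based commitment $\enc{\matr{Z}}$ forces $\matr{Z} = \matr{X}\matr{Y}$, which the extractor computes directly.

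For simulation I would proceed as in the proof of \cref{proof:paillier:matmul:proof}: read the challenge $\vec{t}$ off the verifier's random tape, simulate each per-entry ciphertext-multiplication proof and the final summation proof using the $\Sigma$-protocol simulator $S_{mult}$, and simulate the two proofs of plaintext knowledge (for $\matr{X}$ and for $\matr{Y}$) using the simulators guaranteed by \cref{proof:matrix:pok} and the proof-of-knowledge part of \cref{prot:paillier:mult}. Applying the transformation of~\cite{garay2003strengthening} to the resulting honest-verifier proof then yields full zero knowledge under standard cryptographic assumptions.

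The main obstacle I anticipate is not any single step in isolation — each piece is inherited from \cref{gadget:knowsone} — but making the composition argument airtight: one must verify that the extra element-wise proof of knowledge of $\matr{Y}$ composes with the matrix-multiplication argument without weakening soundness (challenges must remain independent, and the union bound must absorb the additional $d^2$ proofs), and that the perfectly-binding-commitment argument for $\matr{Z}$ is sound even though $\enc{\matr{Z}}$ is never opened. Carefully tracking the statistical and $\Sigma$-protocol soundness terms through this composition is where the real care is required.
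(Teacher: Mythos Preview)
Your proposal is correct and follows essentially the same approach as the paper: the paper's proof is a one-line sketch stating that the result is a ``straightforward combination of \cref{proof:matrix:pok} and \cref{proof:paillier:matmul:proof},'' which is precisely the decomposition you spell out (reuse the \cref{gadget:knowsone} argument, add an element-wise proof of knowledge for $\matr{Y}$, and infer knowledge of $\matr{Z}$ from $\matr{X}$, $\matr{Y}$, and the verified relation). Your expanded treatment of soundness, simulation, and the composition caveats is more detailed than the paper's sketch but entirely in line with it.
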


\begin{proof}[Proof sketch]
The proof is a straightforward combination of \cref{proof:matrix:pok} and \cref{proof:paillier:matmul:proof}.
\end{proof}

\begin{theorem}\label{proof:additive:sharing}
\cref{gadget:additive:sharing} has a simulator $S_{a}$ such that $S_a$'s distribution is statistically indistinguishable from \cref{gadget:additive:sharing}'s real world execution.
\end{theorem}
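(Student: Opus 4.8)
The plan is to pin down the ideal functionality $f_{\mathsf{share}}$ that \cref{gadget:additive:sharing} realizes and then build $S_a$ against it. On common input $\enc{a}$, $f_{\mathsf{share}}$ samples $a_1,\dots,a_m \in \mathds{Z}_p$ uniformly subject to $\sum_i a_i \equiv a \bmod p$, hands $a_i$ to $P_i$, and reveals $\{a_j\}_{j\in C}$ to the adversary; since at most $m-1$ parties are corrupted, the honest set $H$ is nonempty. $S_a$ runs the static active adversary $\adv$ internally and must output a transcript --- the published ciphertexts $\enc{r_i}$, their interval proofs, the scalar $b$ obtained from the joint decryption, and the final $\enc{a_i}$ with interval proofs --- that is statistically close to a real execution of \cref{gadget:additive:sharing}.

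The simulator proceeds round by round. In step~1, for each honest $P_i$ it publishes $\enc{r_i}$ together with an interval proof simulated by $S_{zk}$ applied to \cref{prot:interval:proof}, where $r_i$ is sampled uniformly in $[0,2^{|p|+\kappa}]$ conditioned on $r_i \equiv -a_i \bmod p$, so that the local assignment $a_i \leftarrow -r_i \bmod p$ in step~4 reproduces exactly the share dealt by $f_{\mathsf{share}}$; this conditioned choice is within statistical distance $2^{-\kappa}$ of the unconstrained uniform choice made in the real protocol. Concurrently $S_a$ applies the extractor $E_{zk}$ to each corrupted party's interval proof to recover $r_j$, using soundness of the interval proof to conclude $r_j \in [0,2^{|p|+\kappa}]$ and aborting exactly on the inputs a real verifier would reject. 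Step~2 is a public homomorphic computation, which $S_a$ mimics to form $c$. In step~3 it invokes the decryption simulator $S_{dec}$ to force the joint decryption of $c$ to output $b := \big(\sum_i a_i \bmod p\big) + \sum_j r_j$ over the integers --- every term is known to $S_a$ --- a value chosen so that (i) the shares recovered in step~4 coincide with those dealt by $f_{\mathsf{share}}$, since $b - r_0 \equiv a_0 \bmod p$ and $-r_i \equiv a_i \bmod p$ for $i \neq 0$, and (ii) it matches the real decryption $a + \sum_j r_j$ up to smudging. In step~5, $S_a$ publishes $\enc{a_i}$ for $i \in H$ with $S_{zk}$-simulated interval proofs.

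Indistinguishability is a short hybrid argument. First replace every honest interval proof by an $S_{zk}$-simulation and every witness extraction by $E_{zk}$; this is indistinguishable by the honest-verifier zero knowledge and special soundness of the $\Sigma$-protocols underlying \cref{prot:interval:proof}, and the abort events in the two worlds differ only with negligible probability. Next replace the Paillier ciphertexts of the honest masks $r_i$ whose plaintexts are never revealed by encryptions of $0$, appealing to semantic security of Paillier. Finally argue that the forced output $b = \big(\sum_i a_i \bmod p\big) + \sum_j r_j$ is statistically close to the real $b = a + \sum_j r_j$: the two differ by $\lfloor a/p\rfloor \cdot p$, a multiple of $p$ that merely shifts the high-order part $\lfloor r_i/p\rfloor$ of an honest mask, and since $a$ --- which comes from the committed, interval-proven summaries --- lies in a fixed polynomial-size interval $[0,l]$ while $\lfloor r_i/p\rfloor$ is uniform over a range of size $2^{\kappa}$, the smudging lemma (\cref{lemma:smudging}) gives the bound. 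Composing the hybrids yields the claimed statistical closeness.

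I expect the crux to be step~3 together with the last hybrid: $S_a$ must manufacture a decryption result that is faithful to $a + \sum_j r_j$ even though it never learns the integer $a$, while being perfectly consistent modulo $p$ with the shares $f_{\mathsf{share}}$ has committed to and with the masks extracted from $\adv$. Reconciling these requires isolating precisely which part of an honest mask performs the smudging (the quotient by $p$, not the residue, which is instead pinned down by the output share), checking that no reduction modulo $N$ occurs for the relevant parameter sizes, and treating separately the sub-case where $P_0$ is corrupted --- there $a_0$ is not chosen freely by $S_a$ but must equal $b - r_0 \bmod p$ with $r_0$ extracted, so the simulator must fix $b$ only after learning $\adv$'s mask. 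The remaining steps are routine $\Sigma$-protocol simulation and Paillier semantic-security bookkeeping.
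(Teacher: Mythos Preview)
Your construction has a genuine gap at its core. You have $S_a$ sample each honest mask $r_i$ uniformly in $[0,2^{|p|+\kappa}]$ \emph{conditioned on} $r_i\equiv -a_i\bmod p$, where $a_i$ is the share that your ideal functionality $f_{\mathsf{share}}$ deals to honest party $P_i$, and you then force the decryption to $b=\bigl(\sum_i a_i\bmod p\bigr)+\sum_j r_j$, asserting that ``every term is known to $S_a$''. But in the real/ideal paradigm you set up, the simulator receives only the \emph{corrupted} parties' outputs $\{a_j\}_{j\in C}$ from $f_{\mathsf{share}}$; the honest shares $\{a_i\}_{i\in H}$ are delivered to the honest parties, not to the simulator. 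With at least one honest party, $S_a$ therefore cannot perform the conditioning you describe, and in particular cannot compute $\sum_i a_i\bmod p=a\bmod p$. The quantity you need to program into $S_{\mathrm{dec}}$ is exactly the one piece of information the simulator is missing. (A related, smaller issue: your $f_{\mathsf{share}}$ samples all shares uniformly, but in the real protocol a corrupted $P_j$ with $j\neq 0$ fixes its own share to $-r_j\bmod p$; the ideal functionality would need to let the simulator submit corrupt shares after extraction.)

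The paper takes a different route that sidesteps this obstacle entirely. It designates a single honest index $s$ and, using the additive homomorphism of Paillier on the \emph{public} ciphertext $\enc{a}$, publishes $\enc{r_s}=\enc{r_s'}\cdot\enc{a}^{-1}$, i.e.\ an encryption of $r_s'-a$, where $r_s'$ is a fresh uniform sample the simulator \emph{does} know; all other honest parties behave exactly as in the real protocol. The plaintext under $c$ then becomes $a+\sum_i r_i=r_s'+\sum_{i\neq s} r_i$, a value the simulator can compute from $r_s'$, the other honest $r_i$'s it chose, and the corrupt $r_j$'s it extracted. Smudging is invoked only once, to argue that the content $r_s'-a$ of the designated ciphertext is statistically close to a fresh uniform element of $[0,2^{|p|+\kappa}]$; only party $s$'s interval proof needs to be simulated via $S_{zk}$. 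No knowledge of $a$ (or $a\bmod p$) is ever required. Your smudging argument on the high-order quotient $\lfloor a/p\rfloor$ is in the right spirit, but it cannot get off the ground until the simulator has a candidate $b$ it can actually compute, and the homomorphic ``subtract $\enc{a}$'' trick is what supplies that.
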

\begin{proof}[Proof sketch]
First, we must construct such a simulator $S_a$.
To do so, we modify a similar simulator from~\cite{cramer2001multiparty}.
Let $M$ denote all of the parties in the protocol.
The simulator runs the following:
\begin{enumerate}
\item\label{simulator:ass:step1} Let $s$ be the smallest index of an honest party and let $H^{'}$ be the set of the remaining honest parties.
For each honest party in $H^{'}$, generate $r_i$ and $\enc{r_i}$ correctly.
For party $s$, choose $r_s^{'}$ uniformly random from $[0, 2^{|p| + \kappa}]$, and let $\enc{r_s} = \enc{r_s^{'} - a}$.
\item Hand the values $\enc{r_i}_{i \in H}$ to the adversary and receive from the adversary $\enc{r_i}$.
\item Run the augmented proofs of knowledge ($S_{zk}(P_{pok})$) from the adversaries and simulate the proof for party $s$ (since the simulator does not know the plaintext value of $r_s$).
If any proof fails from the adversary, abort.
Otherwise, continue and use the augmented extractor ($E_{zk}(P_{pok})$)to extract the adversary's inputs $r_i$.
\item Compute $e = \sum_{i \neq s} (r_i) + r_s^{'} = \sum_i r_i + a$. 
\item Simulate a call to decrypt using $S_{dec}$. 
Note that the decrypted value is exactly $e$ due to the relation described in the previous step.
\item Simulator computes the shares as indicated in the original protocol except for party $s$, which sets its value to $a_s^{'} = (a_s - a) \mod p$.
\end{enumerate}

We now prove that the simulator's distribution is statistically indistinguishable from the real world execution's distribution.

Note that other than $\enc{r_s}$, the rest of $S_a$'s simulation follows exactly from the real world execution $A$, and thus is distributed exactly the same as the execution.

In simulation step~\ref{simulator:ass:step1}, $\enc{r_s}$ encrypts $r_s$.
Given a plaintext that is within $[0, p)$, 
%\raluca{where is this assumption from? should it not show up in the theorem's body?} (this is from how $\enc{a}$ is actually generated from the overall protocol, as we will see in the next section), 
using~\cref{lemma:smudging} we know that $r_s$ and $r_s^{'}$ are statistically indistinguishable from each other.
This means that both $r_s$'s and $r_s^{'}$'s distributions are statistically close to being uniformly drawn from the interval $[0, 2^{|p| + \kappa}]$.
Since we always blind encryptions, this means that $\enc{r_s}$ is a random encryption of a statistically indistinguishable uniformly random element from $[0, 2^{|p| + \kappa}]$.
Therefore, $\enc{d_s}$'s distribution in the simulator is statistically indistinguishable from the corresponding distribution in the execution.

The distributions of the real world proofs and the simulated proofs follow straightforwardly from~\cite{garay2003strengthening}.

Finally, we know that $a_s^{'} = a_s - a \mod p$.
The real world execution's share is $a_s$.
Since $r_s$ and $r_s^{'}$ are statistically indistinguishable from the uniformly random distribution  and $a < p$, the $r_s$'s and $r_s^{'}$'s distributions after applying modulo $p$ are also statistically indistinguishable.
Therefore, the $a_s^{'}$ and $a_s$ distributions are also statistically indistinguishable.

\end{proof}

Next, we first define $f_{crs}$ and $f_{SPDZ}$, two ideal functionalities.

\begin{enumerate}
\item $f_{\text{crs}}$: an ideal functionality that generates a common reference string, as well as secret inputs to the parties. 
As mentioned before, this functionality is used for the augmented proofs so that the extractor can extract the adversary's inputs by simple decryption.
\item $f_{\text{SPDZ}}$: an ideal functionality that computes the ADMM consensus phase using SPDZ.
\end{enumerate}

\begin{theorem}
$f_{\text{ADMM}}$ in the $(f_{\text{crs}}, f_{\text{SPDZ}})$-hybrid model under standard cryptographic assumptions, against a malicious adversary who can statically corrupt up to $m-1$ out of $m$ parties.
\end{theorem}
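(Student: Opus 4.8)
The plan is to prove the theorem by the standard simulation paradigm. For every static, active (rushing) adversary $\adv$ corrupting a subset $C$ with $|C|\le m-1$, I will construct an ideal-world simulator $\mathscr{S}$ that runs $\adv$ as a black box and satisfies $\text{IDEAL}_{f_{\text{ADMM}}, \mathscr{S}} \approx_c \text{EXEC}_{\pi, \adv}^{f_{\text{crs}}, f_{\text{SPDZ}}}$, where $\pi$ denotes the \sys protocol. Working in the $(f_{\text{crs}}, f_{\text{SPDZ}})$-hybrid model (justified by the modular composition theorem, since the $\Sigma$-protocols and the SPDZ coordination step are invoked through their ideal interfaces), $\mathscr{S}$ gets to emulate both ideal functionalities toward $\adv$: it samples the CRS together with the trapdoors that make the transformed $\Sigma$-protocols straight-line extractable and simulatable, giving it access to the extractor $E_{zk}$ and simulator $S_{zk}$ of~\cite{garay2003strengthening}, and it reads (and sets) the inputs and outputs of every call to $f_{\text{SPDZ}}$.

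Concretely, $\mathscr{S}$ proceeds phase by phase. In \emph{initialization}, it simulates the threshold key setup, fixing $\pk$ and handing $\adv$ consistent shares $\skshare{i}$ for $i\in C$; since all $m$ shares are needed to decrypt, the $m-1$ corrupted shares are independent of $\sk$, so this is distributed correctly. In the \emph{input preparation phase}, for each corrupted $P_i$ it runs $E_{zk}$ on the proofs accompanying $\enc{\matr{V}_i},\enc{\matr{\Sigma}_i},\enc{\matr{\Theta}_i},\enc{\vec{y}^{*}_i},\enc{\matr{A}_i},\enc{\vec{b}_i}$ (the instances of \cref{gadget:knowsone} and \cref{gadget:knowall} plus the interval proofs) to recover the committed summaries $\matr{A}_i,\vec{b}_i$, and it aborts exactly when $\adv$'s proofs fail; for honest parties it broadcasts encryptions of zero and simulates every proof with $S_{zk}$. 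In the \emph{model compute phase}, honest-party local-optimization messages are likewise encryptions of zero with simulated \cref{gadget:knowsone} proofs; the ciphertext$\leftrightarrow$SPDZ conversions are produced by the simulator $S_a$ of \cref{gadget:additive:sharing} (\cref{proof:additive:sharing}) and the analogous simulator for \cref{gadget:spdz:shares}, and the coordination computation is answered by $\mathscr{S}$'s emulation of $f_{\text{SPDZ}}$. In the \emph{model release phase}, $\mathscr{S}$ submits the extracted corrupted $\matr{A}_i,\vec{b}_i$ to $f_{\text{ADMM}}$, obtains the output model $\hat{\vec{w}}$, and then drives the joint decryption of $\vec{z}$ with $S_{dec}$ so that it opens to exactly $\hat{\vec{w}}$ (reproducing $\adv$-induced aborts of the decryption, which is permitted because fairness is impossible with a malicious majority).

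Indistinguishability follows from a hybrid argument. $H_0$ is the real $(f_{\text{crs}}, f_{\text{SPDZ}})$-hybrid execution. $H_1$ replaces the CRS and all honest-party proofs by their simulated versions; $H_0\approx_c H_1$ by the zero-knowledge and non-malleability of the transformed $\Sigma$-protocols. $H_2$ replaces every honest-party ciphertext (input summaries and all intermediate encryptions) by encryptions of zero; $H_1\approx_c H_2$ by the semantic security of threshold Paillier, which holds because $\adv$ controls only $m-1$ key shares. $H_3$ replaces the real conversion subprotocols and the real decryption by $S_a$, the \cref{gadget:spdz:shares} simulator, and $S_{dec}$, with the final opening forced to $\hat{\vec{w}}$; here soundness of \cref{gadget:knowsone}, \cref{gadget:knowall}, and \cref{gadget:spdz:shares} (via \cref{th:schwartz:zippel} for the matrix-multiplication reduction and special soundness of the component $\Sigma$-protocols), together with the statistical hiding of the masking randomness (\cref{lemma:smudging}), guarantees that the value carried through SPDZ and ultimately decrypted equals the one $f_{\text{ADMM}}$ computes on the extracted inputs, so $H_2\approx_s H_3$. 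Finally $H_3$ is identical to $\text{IDEAL}_{f_{\text{ADMM}}, \mathscr{S}}$, which completes the chain.

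The main obstacle is the seam between the two cryptographic worlds — partially homomorphic encryption and SPDZ — at the conversion boundary: I must argue that \cref{gadget:spdz:shares} is sound and simulatable enough that a malicious party cannot supply SPDZ shares inconsistent with its committed ciphertexts (or MACs inconsistent with its shares) without detection, and that the "equality test by decryption" only ever reveals information independent of honest inputs (either a multiple of $p$, or a difference that is entirely determined by $\adv$'s own deviation and statistically smudged). Tightly coupled with this is the bookkeeping of aborts: the simulator must trigger every $\adv$-induced abort at precisely the corresponding point and in a manner independent of the honest parties' data. Once these consistency and abort-placement arguments are nailed down, the remaining pieces — witness extraction, the semantic-security hop, and forcing the final decryption — are routine.
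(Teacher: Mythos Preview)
Your overall strategy matches the paper's: build a straight-line simulator via the CRS trapdoor, feed honest parties encryptions of zero, simulate all zero-knowledge proofs with $S_{zk}$, extract corrupted inputs with $E_{zk}$, emulate $f_{\text{SPDZ}}$, handle conversions with $S_a$ and decryptions with $S_{dec}$, and force the final opening to the ideal output. The paper's hybrid sequence is essentially yours, with one crucial difference in ordering that creates a real gap.

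The problem is your transition $H_1 \to H_2$. You replace every honest-party Paillier ciphertext by an encryption of zero and justify this by semantic security of threshold Paillier, but at that point the \emph{real} joint-decryption protocol (inside \cref{gadget:additive:sharing} and \cref{gadget:spdz:shares}) is still being executed. The semantic-security reduction must therefore produce the honest party's decryption share on ciphertexts derived from the challenge; to do that it needs $\skshare{h}$, which together with the $m-1$ corrupted shares reconstructs $\sk$ and lets the reduction decrypt the challenge directly --- so the reduction is vacuous. As long as any real threshold decryption remains in the hybrid, you cannot appeal to semantic security.

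The paper fixes this by inverting your $H_2$ and $H_3$: it first replaces the additive-sharing protocol by $S_a$ (which internally uses $S_{dec}$) and the \cref{gadget:spdz:shares} equality test by its simulated form --- arguing statistical closeness via \cref{lemma:smudging} and showing that on a deviation the revealed residue depends only on the adversary's own offsets --- and only after no real decryption remains does it swap the honest ciphertexts to encryptions of zero (using an auxiliary encrypted bit $b\in\{0,1\}$ so that a single IND challenge covers all honest ciphertexts). Your closing paragraph already identifies exactly this seam as the hard part; you just need to resolve it \emph{before}, not after, the semantic-security hop.
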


\begin{proof}[Proof sketch]
To prove \sys's security, we first start the proof by constructing a simulator for \sys's two phases: input preparation and model compute.
Next, we prove that the simulator's distribution is indistinguishable from the real world execution's distribution.
Thus, we prove security in the ($f_{\text{crs}}, f_{\text{SPDZ}}$)-hybrid model.

First, we construct a simulator $S$ that first simulates the input preparation phase, followed by the model compute phase.

\begin{enumerate}
\item $S$ simulates $f_{\text{crs}}$ by generating the public key $\pk_{\text{crs}}$ and a corresponding secret key $\sk_{\text{crs}}$. 
These parameters are used for the interactive proof transformations so that we are able to extract the secrets from the proofs of knowledge.
\item $S$ next generates the threshold encryption parameters.
The public key $\pk$ is handed to every party.
The secret key shares $\skshare{i}$ are handed to each party as well.
Discard $\skshare{i}$ for the honest parties.
%
% Input prep phase
%
\item Next, $S$ starts simulating the input preparation phase.
It receives matrix inputs, as well as interval proofs of knowledge from the adversary $\mathscr{A}$.
It also generates dummy inputs for the honest parties, e.g., encrypting vectors and matrices of $0$.
\item If the proofs of knowledge from $\mathscr{A}$ pass, then $S$ extracts the inputs using the augmented extractors from~\cref{proof:matrix:pok}.
Otherwise, abort.

\item $S$ hands the inputs from the adversary to the ideal functionality $f_{\text{ADMM}}$, which will output the final weights $\vec{w_{\text{final}}}$ to the simulator.

\item The first two steps in the input preparation phase utilize matrix plaintext multiplication proofs.
For each honest party, $S$ simply proves using its dummy inputs and simulates the appropriate proofs.
$S$ should also receive proofs from the adversary.
If the proofs pass, continue.
Otherwise, abort the simulation.
\item The simulator continues to step 3 of the input preparation phase.
This step has two different proofs.
The first proof is a matrix multiplication proof between $\matr{V^{T}}$ and $\matr{V}$.
This can be simulated like the previous steps.
The next step is an element-wise interval proof with respect to the identity matrix.
The simulator can again simulate this step using the simulator for the interval proof. 
Next, $S$ verifies the proofs from the malicious parties.
If the proofs pass, move on.
Otherwise, abort the simulation.
\item Something similar can be done to prove and verify the step 4 of input preparation phase from each party, since the proofs utilized are similar to those used in step 3.
%
% Model compute phase
%
\item $S$ now begins simulating the model compute phase.
% \item $S$ simulates honest parties committing to $\enc{r_i}$, and interval proofs of knowledge.
% $S$ also receives from the adversary a list of $\enc{r_i}$'s and their proofs of knowledge.
% If the range proofs or the proofs of knowledge fail, then $S$ aborts.
% Otherwise, extract the $r_i$'s using the augmented proof of knowledge extractor.
% These $r_i$ values will be later used in the~\cref{gadget:spdz:shares}.
\item Initialize the encrypted weights ($\vec{w}, \vec{z}, \vec{u}$) to be the zero vector.
\item $\text{for } i \text{ in } admm\_iters$:
\begin{enumerate}
\item The first step in the iteration is the local compute phase.
$S$ simulates the honest parties by correctly executing the matrix multiplications using the dummy input matrices and the encrypted weight and produces both the encrypted results and the multiplication proofs from~\cref{prot:paillier:mult}.
$S$ also receives a set of encrypted results and multiplication proofs from the adversary for the corrupted parties.
$S$ can verify that $\mathscr{A}$ has indeed executed the matrix multiplication proofs correctly.
If any of the proofs doesn't pass, the simulator aborts.
If not all of the ciphertexts are distinct, the simulator also aborts.
\item 
Now, $S$ needs to simulate the additive sharing protocol for each party.
This can be done by invoking the simulator from~\cref{proof:additive:sharing}.
\item After the secret sharing process is complete, all parties need to publish encryptions of their secret shares.
$S$ publishes encryptions of these shares for the honest parties and the appropriate interval proofs of knowledge.
The malicious parties also publish encrypted shares and their interval proofs of knowledge.
If the interval proofs of knowledge do not pass for the adversary, then abort the computation.
Otherwise, $S$ extracts all of the encrypted shares from the adversary.
Here, $S$ also calculates the \emph{expected} shares from the adversary.
This can be calculated because $S$ knows the randomness $r_i$'s used by $\mathscr{A}$.
\item $S$ needs to now simulate a call to the $f_{SPDZ}$ oracle.
The simulator first picks a random $\alpha \in \mathds{Z}_{p}$, which serves as the global MAC key.
Then it splits $\alpha$ into random shares and gives one share to each party.
If $S$ is generating shares for the input values, it will simply generate MACs $\gamma(a)_i$ for the shares it receives from running $S_a$.
Otherwise, $S$ then generates random SPDZ shares and MAC shares $\gamma(a)_i$.
In both cases, the values shares and the MAC shares satisfy the SPDZ invariant: $\alpha (\sum_i a_i) - (\sum_i \gamma(a)_i)$.
\item Each party publishes encryptions of all SPDZ input and output shares, as well as their MAC shares.
The simulator $S$ will simulate the honest parties' output by releasing those encryptions and interval proofs of knowledge.
$S$ also receives the appropriate encryptions and interval proofs of knowledge from the adversary.
Run the extractor to extract the contents of the adversary.
Again, keep track of the shares distributed to the adversary, as well as the extracted shares that were committed by the adversary.
\item If the iteration is the last iteration, then $S$ simulates a call to SPDZ by splitting the $\vec{w}_{\text{final}}$ shares into random shares, as well as creating the corresponding MAC shares to satisfy the relation with the global key $\alpha$.
\end{enumerate}
\item \label{simulator:mpc:checks}
After the iterations, $S$ needs to simulate the MPC checks described in~\cref{sec:model:release}, where we need to prove modular equality for a set of encrypted values and shares.
$S$ runs the following in parallel, for each equality equation that needs to be proven:
\begin{itemize}
\item In the original protocol, we have $\enc{a}, \enc{b_i}, \enc{c_i}$ where we want to prove that $a \equiv \sum_i b_i \mod p$ and $\alpha (\sum_i b_i) \equiv \sum_i c_i \mod p$.
First, $S$ simulates the proof of the first equality.
$S$ follows the protocol by verifying the interval proofs of knowledge from every party.
If a malicious party's proof fails, then abort.

\item $S$ computes the cipher $\enc{a - \sum_i b_i}$ by directly operating on the known ciphertexts.

\item $S$ follows the protocol described in~\cref{sec:model:release} by following the protocol exactly.
$S$ picks random $s_i$'s and generates interval proofs of knowledge.
$S$ also receives $\enc{s_i}$ from the adversary and the corresponding interval proofs of knowledge.
Extract the $\enc{s_i}$ values from the adversary if the proofs are verified.

\item
Before $S$ releases the decrypted value, it needs to know what value to release.
To do so, $S$ needs to compare an encrypted value $a$ and its secret shares $b_i$ (each party $P_i$ retains $b_i$).
We want to make sure that $a \equiv \sum_i b_i \mod p$.
While the simulator does not know $a$, it does know some information when $a$ was first split into shares.
More specifically, $S$ knows the adversary's generated randomness during the additive secret sharing, the decrypted number $e$, as well as the shares committed by the adversary.
The equation $a + \sum_i r_i \equiv e \mod p$ holds because everyone multiplies the published $\enc{r_i}$ with $\enc{a}$ to get $\enc{a + \sum_i r_i}$, and the decryption process is simulated and verified.
This means that $a \equiv e - \sum_i r_i \mod p$.
Let's assume that an adversary alters one of its input shares $b_j$ to $b_j^{'}$ for parties in set $\mathscr{A}$.
Then the difference between $a$ and the $b_i$ shares is simply 
\begin{align*}
a - \sum_{i \not\subset \mathscr{A}} b_i - \sum_{i \subset \mathscr{A}} b_i^{'}  \mod p & \equiv e - \sum_i r_i - \sum_{i \not\subset \mathscr{A}} b_i - \sum_{i \subset \mathscr{A}} b_i^{'} \mod p\\
& \equiv (e - r_0) + \sum_{i \neq 0} (- r_i) - \sum_{i \not\subset \mathscr{A}} b_i - \sum_{i \subset \mathscr{A}} b_i^{'} \mod p\\
& \equiv b_0 + \sum_{i \neq 0}  b_i - \sum_{i \not\subset \mathscr{A}} b_i - \sum_{i \subset \mathscr{A}} b_i^{'} \mod p\\
& \equiv \sum_i b_i - \sum_{i \not\subset \mathscr{A}} b_i - \sum_{i \subset \mathscr{A}} b_i^{'} \mod p\\
& \equiv \sum_{i \subset \mathscr{A}} (b_i - b_i^{'}) \mod p
\end{align*}

Hence, the difference modulo $p$ is simply the the difference in the changes in the adversary's shares, and is completely independent from the honest parties' values.
Let this value be $v$.

\item The next step is simple: $S$ simply simulates $S_{dec}$ and releases the value $v + \sum_i (s_i p)$.

\item $S$ can follow a similar protocol for checking the SPDZ shares $b_i$ and the MACs $c_i$.

\end{itemize}
\item Finally, if the previous step executes successfully, then the parties will release their plaintext shares of $\vec{w}_{\text{final}}$ by decommitting to the encrypted ciphers of those shares and publishing their plaintext shares.
\end{enumerate}

We now prove that the distribution of the simulator is statistically indistinguishable from the distribution of the real world execution.
To do so, we construct hybrid distributions.

\noindent\textbf{Hybrid 1} This is the real world execution.

\noindent\textbf{Hybrid 2} Same as hybrid 1, except replace the proofs from the input preparation phase with simulators.

Hybrid 1 and 2 are indistinguishable because of the properties of the zero-knowledge proofs we utilize (see~\cref{proof:matrix:pok} and~\cref{proof:paillier:matmul:proof}).

\noindent\textbf{Hybrid 3} Same as the previous hybrid, except the rest of the proofs are run with the simulated proofs instead, and the secret sharing is replaced by the simulator $S_a$.
However, step~\ref{simulator:mpc:checks} is still run with the real world execution.

Hybrid 2 and 3 are statistically indistinguishable because of the properties of the zero-knowledge proofs (\cref{proof:matrix:pok}, \cref{proof:paillier:matmul:proof}), and the fact that the secret sharing is also simulatable (\cref{proof:additive:sharing}).

\noindent\textbf{Hybrid 4} Same as the previous hybrid, except swap out the real world execution with step~\ref{simulator:mpc:checks} described by the simulator.

Hybrid 3 and 4 are statistically indistinguishable.
The abort probabilities in step~\ref{simulator:mpc:checks} are based on the release values, so we just need to argue that the release value distributions are statistically indistinguishable.

Since the simulator is always able to extract the adversary's values, it will always be able to calculate the correct answer in step~\ref{simulator:mpc:checks}.
The real world execution, on the other hand, could potentially have a different answer if any of the zero-knowledge proofs fails to correctly detect wrong behavior.
Therefore, if the zero-knowledge proofs are working correctly, then:
\begin{enumerate}
\item If the execution does not abort, then the decrypted values must be both divisible by $p$.
The two values are statistically indistinguishable because of~\cref{lemma:smudging}.
\item If the execution does abort, then $S$'s output value would be $v$.
If the proofs pass, then from the argument made in step~\ref{simulator:mpc:checks} where we know that the decrypted value modulo $p$ is exactly the same as $v$. 
Furthermore, if we subtract $v$ from the two values, the new values have the same distribution by the argument in the prior step.
\end{enumerate}

If any proof does fail to detect a malicious adversary cheating, then the released answer could be potentially different.
However, this will happen with negligible probability because of the properties of the zero-knowledge proofs we are using.
Therefore, the hybrids are statistically indistinguishable.
%\raluca{it is not clear to me that that the hybrids are statistically indistinguishable given that they rely on zk proofs that are against probabilistic adversaries}

\noindent\textbf{Hybrid 5} 
First, define $\boxplus$ to be ciphertext addition, $\boxminus$ to be ciphertext subtraction, and $\boxdot$ to be ciphertext multiplication.
Replace the input encryptions of the honest parties with encryptions where $\enc{x_i}$ is transformed into $\enc{0}$ is transformed into $\text{Blind}(\enc{0} \boxdot \enc{b}) \boxplus (\enc{x_i}\boxdot (\enc{1} \boxminus \enc{b}))$, where $b = 0$.
Hybrid 4 and 5 are computationally indistinguishable because the inputs used by the honest parties have not changed from the previous hybrid.
With the guarantees of the encryption algorithm, the encrypted ciphers from the two hybrids are also indistinguishable.

\noindent\textbf{Hybrid 6} Replace the input encryptions of the honest parties with encryptions of $0$ (so same as the simulator), except with additional randomizers such that the encryption of an input $\enc{0}$ is transformed into $\text{Blind}(\enc{0} \boxdot \enc{b}) \boxplus (\enc{x_i}\boxdot (\enc{1} \boxminus \enc{b}))$, where $b = 1$.

Hybrid 5 and 6 are indistinguishable because one could use a distinguisher $D$ to break the underlying encryption scheme. 
Since the encrypted ciphertexts are randomized and only differ by the value of $b$ (whether it is $0$ or $1$), if one were to build such a distinguisher $D$,  then $D$ can also distinguish whether $b = 0$ or $b = 1$.
This breaks the semantic encryption scheme.

\noindent\textbf{Hybrid 7} This is the simulator's distribution.

Hybrid 6 and 7 are indistinguishable because the inputs are distributed exactly the same (they are all $0$'s).

This completes our proof.

% Second, we need to argue that both $S$ and the real world execution will output $\vec{w}_{\text{final}}$ with high probability.
% First, $S$ will always output $\vec{w}_{\text{final}}$.
% Next, the real world execution will output $\vec{w}_{\text{final}}$ if the final checks pass.
% The adversary $\mathscr{A}$ cannot cheat in the first set of checks that check equivalence between the encrypted shares and the encrypted values unless they manage to cheat in the zero-knowledge proofs.
% This is because the randomness committed by each party is also multiplied by the modulus $p$.
% Since we also interval prove every piece of committed information, these values can only exceed the bounds with negligible probability.
% Therefore, the adversary cannot cheat because the proofs ensure that they must commit to the appropriate shares and that they add up to the plaintext of the encrypted value.
% A similar argument can be made for the SPDZ share checks.
% Finally, the shares that are released are de-committed to the previously released encryptions.
% As mentioned before, the commitment scheme is perfectly binding, and thus the adversary cannot cheat.

% Thus, the real world execution will only output something other than $\vec{w}_{\text{final}}$ with negligible probability.
% This means that $S$'s distribution is indistinguishable from the real world execution's distribution.

% \wz{Maybe we can explicitly construct the hybrids}

\end{proof}

\fi

\end{document}
Gadget